\renewcommand{\section}{\@startsection%
{section}%
{1}%
{0em}%
{1.7em}%
{1.2em}%
{\normalfont\large\centering\bfseries}}
\renewcommand{\@seccntformat}[1]%
{\csname the#1\endcsname.\hspace{0.5em}}
\renewcommand{\thesection}{\arabic{section}}
\numberwithin{equation}{section}
\renewcommand\appendix{\par
\setcounter{section}{0}%
\setcounter{subsection}{0}%
\setcounter{theorem}{0}
\setcounter{table}{0}
\setcounter{figure}{0}
\gdef\thetable{\Alph{table}}
\gdef\thefigure{\Alph{figure}}
\section*{Appendix}
\gdef\thesection{\Alph{section}}
\setcounter{section}{1}}
\newtheorem{theorem}{Theorem}[section]
\newtheorem{proposition}[theorem]{Proposition}
\newtheorem{lemma}[theorem]{Lemma}
\theoremstyle{definition}
\newtheorem{definition}[theorem]{Definition}
\newtheorem{remark}[theorem]{Remark}
\newtheorem{notation}{N}
\newtheorem{hypothesis}{H}
\newcommand{\abs}[1]{\left|#1\right|}
\newcommand{\norm}[1]{\left\|#1\right\|}
\newcommand{\inner}[2]{\left\langle#1,#2\right\rangle}
\newcommand{\cc}[1]{\overline{#1}}
\newcommand{\reals}{\mathbb{R}}
\newcommand{\nats}{\mathbb{N}}
\newcommand{\integers}{\mathbb{Z}}
\newcommand{\complex}{\mathbb{C}}
\newcommand{\eval}[1]{\upharpoonright_{#1}}
\newcommand{\convergesto}[1]{\xrightarrow[#1\to\infty]{}}
\newcommand{\convergestozero}[1]{\xrightarrow[#1\to 0]{}}
\newcommand{\convergestozeros}[2]{\xrightarrow[#1\to 0]{#2}}
\DeclareMathOperator{\im}{Im}
\DeclareMathOperator{\dom}{dom}
\DeclareMathOperator*{\res}{Res}
\DeclareMathOperator{\Span}{span}
\DeclareMathOperator{\card}{card}
\begin{document}
\begin{titlepage}
\title{Inverse problems for Jacobi operators IV:\\
Interior mass-spring perturbations of semi-infinite systems
\footnotetext{%
Mathematics Subject Classification(2010):
34K29,  % Inverse problems (34-XX Ordinary differential equations 34Kxx
       % Functional-differential and differential-difference
       % equations)
47A75, % Eigenvalue problems (47-XX Operator theory 47Axx General
       % theory of linear operators)
47B36, % Jacobi (tridiagonal) operators (matrices) and generalizations
       % (47-XX Operator theory 47Bxx Special classes of linear operators)
70F17, % Inverse problems (70-XX Mechanics of particles and systems
       % 70Fxx Dynamics of a system of particles, including celestial mechanics)
}
\footnotetext{%
Keywords:
Infinite mass-spring system;
Jacobi matrices;
Two-spectra inverse problem.
}
\hspace{-8mm}
\thanks{%
Research partially supported by UNAM-DGAPA-PAPIIT IN105414
}%
}
\author{
\textbf{Rafael del Rio}
\\
%% ----- Institution --------
\small Departamento de F\'{i}sica Matem\'{a}tica\\[-1.6mm]
\small Instituto de Investigaciones en Matem\'aticas Aplicadas y en Sistemas\\[-1.6mm]
\small Universidad Nacional Aut\'onoma de M\'exico\\[-1.6mm]
\small C.P. 04510, M\'exico D.F.\\[-1.6mm]
\small \texttt{delrio@iimas.unam.mx}
\\[2mm]
\textbf{Mikhail Kudryavtsev}
\\
%% ----- Institution ---------
\small Department of Mathematics\\[-1.6mm]
\small Institute for Low Temperature Physics and Engineering\\[-1.6mm]
\small Lenin Av. 47, 61103\\[-1.6mm]
\small Kharkov, Ukraine\\[-1.6mm]
\small\texttt{kudryavtsev@onet.com.ua}
\\[2mm]
\textbf{Luis O. Silva}
\\
%% ----- Institution --------
\small Departamento de F\'{i}sica Matem\'{a}tica\\[-1.6mm]
\small Instituto de Investigaciones en Matem\'aticas Aplicadas y en Sistemas\\[-1.6mm]
\small Universidad Nacional Aut\'onoma de M\'exico\\[-1.6mm]
\small C.P. 04510, M\'exico D.F.\\[-1.6mm]
\small \texttt{silva@iimas.unam.mx}
}
%%%%%%%%
\date{}
\maketitle
\vspace{-4mm}
\begin{center}
\begin{minipage}{5in}
  \centerline{{\bf Abstract}}
\bigskip
This work gives results on the interplay of the spectra of two Jacobi
operators corresponding to an infinite mass-spring system and a
modification of it obtained by changing one mass and one spring of the
system. It is shown that the system can be recovered from these two
spectra. Necessary and sufficient conditions for two sequences to be
the spectra of the mass-spring system and the perturbed one are
provided.
\end{minipage}
\end{center}
\thispagestyle{empty}
\end{titlepage}
%%%%%%%%%%%%%%%%%%%%%%%%%%%%%%
\section{Introduction}
\label{sec:intro}
Inverse spectral problems are concerned with the quest of information
determining an operator from its spectral data. These problems have
various applications in physics and other sciences. Usually, we do not
possess all the information that defines the operator modeling a
certain physical system, however it is possible to measure physical
quantities related to the spectrum of the operator and use these data
to gain some information about the operator, thence about the system.

The kind of inverse spectral problem studied in the present work is
the so called two spectra inverse problem in which one is given the
spectra of an operator and a perturbation of it with the goal of
recovering the operator from these two spectra.

\begin{figure}[h]
\begin{center}
\begin{tikzpicture}
  [mass1/.style={rectangle,draw=black!80,fill=black!13,thick,inner sep=0pt,
   minimum size=7mm},
   mass2/.style={rectangle,draw=black!80,fill=black!13,thick,inner sep=0pt,
   minimum size=5.7mm},
   mass3/.style={rectangle,draw=black!80,fill=black!13,thick,inner sep=0pt,
   minimum size=7.7mm},
   wall/.style={postaction={draw,decorate,decoration={border,angle=-45,
   amplitude=0.3cm,segment length=1.5mm}}}]
  \node (mass3) at (7.1,1) [mass3] {\footnotesize$m_3$};
  \node (mass2) at (4.25,1) [mass2] {\footnotesize$\,m_2$};
  \node (mass1) at (2.2,1) [mass1] {\footnotesize$m_1$};
\draw[decorate,decoration={coil,aspect=0.4,segment
  length=2.1mm,amplitude=1.8mm}] (0,1) -- node[below=4pt]
{\footnotesize$k_1$} (mass1);
\draw[decorate,decoration={coil,aspect=0.4,segment
  length=1.5mm,amplitude=1.8mm}] (mass1) -- node[below=4pt]
{\footnotesize$k_2$} (mass2);
\draw[decorate,decoration={coil,aspect=0.4,segment
  length=2.5mm,amplitude=1.8mm}] (mass2) -- node[below=4pt]
{\footnotesize$k_3$} (mass3);
\draw[decorate,decoration={coil,aspect=0.4,segment
  length=2.1mm,amplitude=1.8mm}] (mass3) -- node[below=4pt]
{\footnotesize$k_4$} (9.3,1);
\draw[line width=.8pt,loosely dotted] (9.4,1) -- (9.8,1);
\draw[line width=.5pt,wall](0,1.7)--(0,0.3);
\end{tikzpicture}
\end{center}
\caption{Semi-infinite mass-spring system}\label{fig:1}
\end{figure}
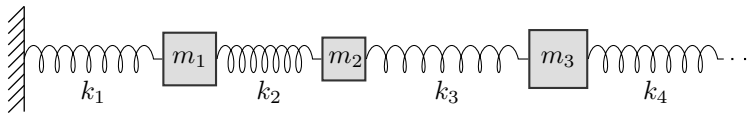

Consider the linear semi-infinite mass spring system illustrated in
Fig. 1. This mechanical system, with masses
$\{m_j\}_{j=1}^\infty$ and spring constants $\{k_j\}_{j=1}^\infty$, is
modeled by a Jacobi operator $J$ associated with the Jacobi matrix
\begin{equation}
  \label{eq:jm-0}
  \begin{pmatrix}
    q_1 & b_1 & 0  &  0  &  \cdots
\\[1mm] b_1 & q_2 & b_2 & 0 & \cdots \\[1mm]  0  &  b_2  & q_3  &
b_3 &  \\
0 & 0 & b_3 & q_4 & \ddots\\ \vdots & \vdots &  & \ddots
& \ddots
  \end{pmatrix}\,.
\end{equation}
where
\begin{equation}
\label{eq:mass-spring-matrix-entries}
q_j = -\frac{k_{j+1}+k_j}{m_j}\,, \qquad
b_j=\frac{k_{j+1}}{\sqrt{m_j m_{j+1}}}\,,
\qquad j\in\mathbb{N}\,.
\end{equation}
In solid state physics, the mass-spring system of Fig. 1 is used as a
model of one-dimensional infinite harmonic crystals (see
\cite[p.\,22]{MR1711536}). A finite mass-spring system can be used to
study molecular vibrations, where the chemical bounds between atoms
(masses) are modeled by springs \cite{smith1998infrared}.

Assuming that the movement of the system takes place within the regime
of validity of the Hooke law, one derives a Jacobi operator with
entries given by (\ref{eq:mass-spring-matrix-entries}) from the
dynamics equations (cf. \cite{MR2102477,mono-marchenko} for the finite
case). If the spectrum of $J$ is discrete, the movement of the system
is the superposition of harmonic oscillations whose frequencies are
the square root of the eigenvalues' absolute values.

In our two spectra inverse problem, one wants to find the matrix
entries corresponding to operator $J$ from the spectra of $J$ and a
perturbation of it. The perturbed operator, denoted
$\widetilde{J}_n$, has (\ref{eq:jm-1}) as its matrix representation
and corresponds to the linear semi-infinite mass-spring system given
in Fig.~2
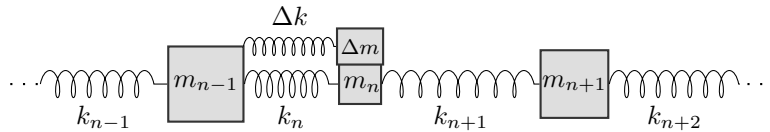
\begin{figure}[h]
\begin{center}
\begin{tikzpicture}
  [mass1/.style={rectangle,draw=black!80,fill=black!13,thick,inner sep=0pt,
   minimum size=10mm},
   mass2/.style={rectangle,draw=black!80,fill=black!13,thick,inner sep=0pt,
   minimum size=5.3mm},
   mass3/.style={rectangle,draw=black!80,fill=black!13,thick,inner sep=0pt,
   minimum size=9mm},
   dmass/.style={rectangle,draw=black!80,fill=black!13,thick,inner sep=0pt,
   minimum size=4.8mm},
   wall/.style={postaction={draw,decorate,decoration={border,angle=-45,
   amplitude=0.3cm,segment length=1.5mm}}}]
  \node (mass3) at (7.5,1) [mass3] {\footnotesize$m_{n+1}$};
  \node (mass2) at (4.65,1) [mass2] {\footnotesize$\,m_n$};
  \node (mass1) at (2.6,1) [mass1] {\footnotesize$m_{n-1}$};
  \node (dmass) at (4.65,1.5) [dmass] {\scriptsize$\,\Delta m\,$};
\draw[line width=.8pt,loosely dotted] (0,1) -- (0.4,1);
\draw[decorate,decoration={coil,aspect=0.4,segment
  length=1.3mm,amplitude=1.2mm}] (3.1,1.5) -- node[above=4pt]
{\footnotesize$\Delta k$} (dmass);
\draw[decorate,decoration={coil,aspect=0.4,segment
  length=2.1mm,amplitude=1.8mm}] (0.4,1) -- node[below=4pt]
{\footnotesize$k_{n-1}$}  (mass1);
\draw[decorate,decoration={coil,aspect=0.4,segment
  length=1.5mm,amplitude=1.8mm}] (mass1) -- node[below=4pt]
{\footnotesize$k_{n}$} (mass2);
\draw[decorate,decoration={coil,aspect=0.4,segment
  length=2.5mm,amplitude=1.8mm}] (mass2) -- node[below=4pt]
{\footnotesize$k_{n+1}$} (mass3);
\draw[decorate,decoration={coil,aspect=0.4,segment
  length=2.1mm,amplitude=1.8mm}] (mass3) -- node[below=4pt]
{\footnotesize$k_{n+2}$} (9.7,1);
\draw[line width=.8pt,loosely dotted] (9.8,1) -- (10.2,1);
%\draw[line width=.5pt,wall](0,2.1)--(0,0.7);
\end{tikzpicture}
\end{center}
\caption{Perturbed semi-infinite mass-spring system ($n\ge 2$)}\label{fig:2}
\end{figure}.

Inverse spectral problems for Jacobi operators have been amply studied
(see for instance
\cite{MR2263317,MR504044,MR2915295,MR1616422,MR0447294,MR0213379,
  MR0382314,MR549425,MR1463594,MR1436689,MR1247178} for the finite
case and \cite{MR2998707,MR1045318,MR1616422,MR499269,MR0221315,
  MR2305710,MR2438732} for the infinite case). However, inverse
spectral problems that involve the kind of perturbation producing
$\widetilde{J}_n$ from $J$ have been treated, to the best of our
knowledge, only in the finite case
\cite{MR2915295,MR1463594,MR1436689}. Yet, this sort of perturbation
arises in a natural way from the view point of physics: it corresponds
to the modification of one mass and spring constant at any place in
the chain. Noteworthily, by solving our inverse problem, we recover
the masses and spring constants of the system and the parameters of
the perturbation from the knowledge of the natural frequencies of
vibration of the original system and the perturbed one. In particular,
in the finite case, solving the inverse problem allows measuring
micromasses with the help of microcantilevers
\cite{spletzer-et-al1,spletzer-et-al2}.

To tackle the inverse problem, we use the characterization of the
relative distribution of the spectra of $J$ and $\widetilde{J}_n$
given in \cite{MR3377115}. Here a central role is played by the Green
functions of the original and perturbed operators. The Green function
is the diagonal entry of the matrix of the resolvent operator at the
point corresponding to the place in the chain where the perturbation
occurs.  Since the Green functions are fundamental for our direct and
inverse spectral analysis, we give necessary and sufficient conditions
for a meromorphic Herglotz function to be a Green function of a Jacobi
operator with discrete spectrum
(Proposition~\ref{prop:nec-suf-green}). Direct spectral analysis of
the operator and its perturbation gives a point on the real line which
is determined by the perturbation parameters and seems to act as an
``attractor'' for the eigenvalues as they are perturbed. This is
relevant for choosing proper enumerations of the set of
eigenvalues. An important conclusion of the spectral analysis is
Theorem~\ref{thm:absolute-convergence} on the convergence of the sum
of the difference of eigenvalues (cf. \cite{MR900507}).

Having  solved the direct spectral problems, we turn to solving the
conditional inverse problem. We determine the input spectral data
needed for the reconstruction of the Green function
(Proposition~\ref{prop:G-reconstruction}). Moreover, in
Theorems~\ref{thm:reconstruction-from-two-spectra} and
\ref{thm:reconstruction-from-two-spectra1}, we characterize the set of
Jacobi operators that share the same Green function and the solutions
of the two spectra inverse problem. An important ingredient for this
result is Proposition~\ref{prop:Gkk-formula} which is the key to the
theory of interior perturbations developed in
\cite{MR1616422}. Finally, we give necessary and sufficient conditions
for two sequences to be the spectra of a Jacobi operator $J$ and its
perturbation $\widetilde{J}_n$
(Theorems~\ref{thm:necessary-sufficient} and
\ref{thm:necessary-sufficient-intersection}).

This paper is a continuation of recent work on the matter
\cite{MR2915295,MR2998707,MR3113459,MR3377115} and presents
substantial generalizations of previous results. We are now able to
manage the situation where the perturbation takes place at any
arbitrary interior mass and spring of the system.  In the course of
obtaining these generalizations, unexpected nuances appeared, so it
was necessary to recur to results not needed before and develop new
techniques. It is remarkable that in the solution of the concrete
problem we have posed, various crucial problems of modern analysis
converge: the moment problem, the subtle problem of density of
polynomials in $L_2$ spaces, and various aspects in the theory of
functions.

The paper is organized as follows. In the next section, we introduce
the Jacobi operators and the finite-rank perturbation performed on
them. Some preparatory facts on Jacobi operators and their Weyl
$m$-functions are accounted for in this section. In Section
\ref{sec:green-functions}, the Green functions are defined and a
crucial formula is brought in. Here we state the necessary and
sufficient conditions for a meromorphic Herglotz function to be the
Green function of an operator with discrete spectrum. In Section
\ref{sec:direct-spectral-analysis-general-case}, the key formula
(\ref{eq:master}) is considered and results are given which describe how
the eigenvalues of the perturbed operator depend on the perturbation
parameters. Section~\ref{sec:inverse-spectral-analysis} provides
necessary and sufficient conditions on two sequences of points to be
eigenvalues of an operator $J$ and a perturbation of it. Finally, in
the Appendix, we include a result on the representation of Weyl
$m$-functions of Jacobi operators on the basis of a classical result due
to M.\,G. Krein.

\section{Jacobi operators}
\label{sec:jacobi_operators}

For a sequence $f=\{f_k\}_{k=1}^\infty$ of complex numbers, consider the
second order difference expressions
\begin{subequations}
  \label{eq:difference-expr}
\begin{align}
   \label{eq:difference-recurrence}
  (\Upsilon f)_k&:= b_{k-1}f_{k-1} + q_k f_k + b_kf_{k+1}
  \quad k \in \mathbb{N} \setminus \{1\},\\
   \label{eq:difference-initial}
   (\Upsilon f)_1&:=q_1 f_1 + b_1 f_2\,,
\end{align}
\end{subequations}
where $q_k\in\mathbb{R}$ and $b_k>0$ for any $k\in\mathbb{N}$. We
remark that (\ref{eq:difference-initial}) can be seen as a boundary
condition.

Let $l_2(\mathbb{N})$ be the space of square summable complex
sequences. In this Hilbert space, define the operator $J_0$ whose
domain contains only the sequences having a finite number of non-zero
elements and is given by $J_0f:=\Upsilon f$.  Clearly, the operator
$J_0$ is symmetric and therefore closable, so one can consider the
operator $\cc{J_0}$ being its closure. It turns out that
$\overline{J_0}$ is the operator whose matrix representation with
respect to the canonical basis $\{\delta_n\}_{n=1}^\infty$ in
$l_2(\mathbb{N})$ is
\begin{equation}
  \label{eq:jm-0}
  \begin{pmatrix}
    q_1 & b_1 & 0  &  0  &  \cdots
\\[1mm] b_1 & q_2 & b_2 & 0 & \cdots \\[1mm]  0  &  b_2  & q_3  &
b_3 &  \\
0 & 0 & b_3 & q_4 & \ddots\\ \vdots & \vdots &  & \ddots
& \ddots
  \end{pmatrix}\,.
\end{equation}
See \cite[Sec. 47]{MR1255973} for the definition of the matrix
representation of an unbounded symmetric operator.

One of the following two possibilities for the deficiency indices
of $\cc{J_0}$ holds \cite[Chap.\,4,\,Sec.\,1.2]{MR0184042}:
\begin{subequations}
  \label{eq:deficiency-indices}
  \begin{align}
     n_+(\cc{J_0})&=n_-(\cc{J_0})=1\,,\label{eq:one-one}\\
     n_+(\cc{J_0})&=n_-(\cc{J_0})=0\,.\label{eq:nil-nil}
  \end{align}
\end{subequations}
Let $J$ be a self-adjoint extension of $J_0$. Thus, in view of
(\ref{eq:deficiency-indices}), the von Neumann extension theory tells
us that either $J$ is a proper closed symmetric extension of
$\cc{J_0}$ or $J=\overline{J_0}$. In the general case, there are
various operators $J$ associated with the matrix \eqref{eq:jm-0} and
we referred to them generically as Jacobi operators associated with
\eqref{eq:jm-0}.

Within the regime of validity of the Hooke law, the Jacobi operator
$J$ models the semi-infinite linear mass-spring system of Fig.~1
\cite{MR2998707,MR3377115} with
(\ref{eq:mass-spring-matrix-entries}).  See
\cite{MR2102477,mono-marchenko} for an explanation of the deduction of
these formulae in the finite case.

Fix $n\in\nats$ and consider, along with the self-adjoint operator
$J$, the operator
\begin{equation}
\label{eq:def-tilde-j}
\begin{split}
  \widetilde{J}_n=J &+
  [q_n(\theta^2-1)+\theta^2h]\inner{\delta_n}{\cdot}\delta_n \\
  &+  b_n(\theta-1)(\inner{\delta_n}{\cdot}\delta_{n+1} +
  \inner{\delta_{n+1}}{\cdot}\delta_n) \\
  &+ b_{n-1}(\theta-1)(\inner{\delta_{n-1}}{\cdot}\delta_{n} +
  \inner{\delta_{n}}{\cdot}\delta_{n-1})
  \,,\quad \theta>0\,,
  \quad h\in\mathbb{R}\,,
\end{split}
\end{equation}
where it has been assumed that $b_0=0$. Clearly, $\widetilde{J}_n$ is a
self-adjoint extension of the operator whose matrix representation
with respect to the canonical basis in $l_2(\mathbb{N})$ is a Jacobi
matrix obtained from (\ref{eq:jm-0}) by modifying the entries
$b_{n-1},q_n,b_n$. For instance, if $n>2$, $\widetilde{J}_n$ is a
self-adjoint extension (possibly not proper) of the operator whose
matrix representation is
\begin{equation}
  \label{eq:jm-1}
  \begin{pmatrix}
q_1 & b_1 & 0 & 0 & 0 & 0 & \cdots \\[1mm]
b_1 & \ddots & \ddots & 0 & 0 & 0 & \cdots \\[1mm]
0  &  \ddots  & q_{n-1} & \theta b_{n-1} & 0 & 0 & \cdots\\
0 & 0 & \theta b_{n-1} & \theta^2(q_n+h) & \theta b_n & 0 & \cdots\\
0 & 0 & 0 & \theta b_{n} & q_{n+1} & \theta b_{n+1} & \\
0 & 0 & 0 & 0 & b_{n+1} & q_{n+2} & \ddots\\
\vdots & \vdots & \vdots & \vdots & & \ddots & \ddots
  \end{pmatrix}\,.
\end{equation}
Note that $\widetilde{J}_n$ is obtained from $J$ by a rank-three
perturbation when $n>1$, and a rank-two perturbation otherwise.

The operator $\widetilde{J}_n$ serves as a model of the perturbed mass
spring system of Fig.~2, where
\begin{equation}
\label{eq:increment-mass-spring}
\Delta m=m_n(\theta^{-2}-1)\quad\text{ and }\quad\Delta
k=-hm_n.
\end{equation}
By setting $f_1=1$, a solution of the equations
\begin{subequations}
\label{eq:spectral-equation}
\begin{align}
  \label{eq:initial-spectral}
  (\Upsilon f)_1&:= zf_1\,,\\
  \label{eq:recurrence-spectral}
  (\Upsilon f)_k&:= z f_k\,,
  \quad k \in \mathbb{N} \setminus \{1\},
\end{align}
\end{subequations}
can be found uniquely by recurrence. This solution, denoted by
$\pi(z)=\{\pi_k(z)\}$, is such that $\pi_k(z)$ is a polynomial of
degree $k-1$. Alongside this sequence, define the sequence $\xi(z)$
as the solution of (\ref{eq:recurrence-spectral}) after setting
$f_1=0$ and $f_2=b_1^{-1}$. Thus, $\xi_k(z)$ is a polynomial of
degree $k-2$. The elements of the sequence $\pi(z)$, respectively
$\xi(z)$, are referred to as the polynomials of the first,
respectively second, kind associated with the matrix
(\ref{eq:jm-0}). By comparing (\ref{eq:difference-expr}) with
(\ref{eq:spectral-equation}), one concludes that for $\pi(z)$ to be in
$\ker(J_0^*-zI)$, it is necessary and sufficient that $\pi(z)$ be an
element of $l_2(\nats)$. Of course, $\pi(z)\in\ker(J-zI)$, if and only
if $\pi(z)\in\dom(J)$.

It follows from the definition of the operator $J$ that
\begin{equation}
  \label{eq:delta-k-through-delta-1}
  \delta_k=\pi_k(J)\delta_1\quad\forall k\in\nats\,.
\end{equation}
This implies that $J$ is simple and $\delta_1$ is a cyclic vector (see
\cite[Sec. 69]{MR1255973}). Therefore, by defining
\begin{equation}
  \label{eq:spectral-function-def}
  \rho(t):=\inner{\delta_1}{E(t)\delta_1}\,,\quad t\in\reals\,,
\end{equation}
where $E$ is the resolution of the identity given by the spectral
theorem, one has, due to \cite[Sec. 69, Thm. 2]{MR1255973}), that
there is a unitary map $\Phi:L_2(\reals,\rho)\to l_2(\nats)$ such that
$\Phi^{-1}J\Phi$ is the multiplication by the independent variable
defined in its maximal domain. We call the function given by
\eqref{eq:spectral-function-def} the spectral function of the Jacobi
operator $J$.  Moreover, due to \cite[Sec. 69, Thm. 2]{MR1255973}), it
follows from (\ref{eq:delta-k-through-delta-1}) that the function
$\pi_k\eval{\reals}$ belongs to $L_2(\reals,\rho)$ for all
$k\in\nats$, i.\,e., all moments of $\rho$ exists (see also
\cite[Thm.\,4.1.3]{MR0184042}). The equation
\eqref{eq:delta-k-through-delta-1} means that
\begin{equation}
  \label{eq:unitary-map}
  \Phi\pi_k=\delta_k\,,\quad\forall k\in\nats\,,
\end{equation}
which implies that the polynomials are dense in $L_2(\reals,\rho)$
since $\Phi$ is unitary. Thus, the discussion above has proven one
direction of the following assertion which follows directly from
\cite[Thms.\,2.3.3 and 4.1.4]{MR0184042}.
% It turns out
% that the following holds
\begin{theorem}
  \label{thm:nec-suf-cond-for-meausure}
 A function $\rho$ is the spectral function of a Jacobi operator if
 and only if $\int_\reals d\rho=1$, all the polynomials are in
  $L_2(\reals,\rho)$ and they are dense in this space.
\end{theorem}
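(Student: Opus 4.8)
The forward implication has essentially been carried out in the discussion preceding the statement: if $\rho$ is the spectral function of a Jacobi operator $J$, then $\int_\reals d\rho=\norm{\delta_1}^2=1$, the relation \eqref{eq:delta-k-through-delta-1} forces every $\pi_k\eval{\reals}$ into $L_2(\reals,\rho)$ so that all moments exist, and the unitarity of $\Phi$ together with \eqref{eq:unitary-map} yields density. Thus the plan is to establish the converse: starting from a probability measure $\rho$ whose polynomials lie densely in $L_2(\reals,\rho)$, I would reconstruct a Jacobi operator having $\rho$ as its spectral function.

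First I would use the hypothesis that the polynomials belong to $L_2(\reals,\rho)$ to guarantee that all moments $s_j:=\int_\reals t^j\,d\rho$ are finite, and then apply the Gram--Schmidt procedure to the monomials $1,t,t^2,\dots$ in $L_2(\reals,\rho)$. This produces an orthonormal system $\{\pi_k\}_{k=1}^\infty$ with $\deg\pi_k=k-1$ and positive leading coefficients; the normalization $\int d\rho=1$ gives $\pi_1\equiv 1$. The standard computation $\inner{t\pi_k}{\pi_j}=\inner{\pi_k}{t\pi_j}=0$ for $j<k-1$, valid because $t\pi_j$ then has degree $<k-1$ while $\pi_k$ is orthogonal to all such polynomials, shows that multiplication by $t$ sends $\pi_k$ into the span of $\pi_{k-1},\pi_k,\pi_{k+1}$. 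This yields a three-term recurrence $t\pi_k=b_{k-1}\pi_{k-1}+q_k\pi_k+b_k\pi_{k+1}$ with $q_k\in\reals$ and $b_k>0$, the positivity coming from comparing leading coefficients on both sides. These coefficients define a matrix of the form \eqref{eq:jm-0}, and by uniqueness of the recurrence solution normalized at $\pi_1\equiv 1$ the $\pi_k$ are precisely its polynomials of the first kind.

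Next I would transfer the analysis back to $l_2(\nats)$. Because the polynomials are dense, the orthonormal set $\{\pi_k\}$ is in fact an orthonormal basis of $L_2(\reals,\rho)$, so the assignment $\pi_k\mapsto\delta_k$ extends to a unitary map $U\colon L_2(\reals,\rho)\to l_2(\nats)$. The operator $A$ of multiplication by the independent variable, taken on its maximal domain in $L_2(\reals,\rho)$, is self-adjoint, and its image $J:=UAU^{-1}$ is a self-adjoint operator whose matrix with respect to $\{\delta_k\}$ is exactly \eqref{eq:jm-0} by the recurrence above; since $J$ extends the minimal operator $J_0$, it is a Jacobi operator associated with that matrix. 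Finally, the spectral measure of $A$ at the cyclic vector $\pi_1\equiv 1$ is $\rho$ itself, because $\inner{1}{\chi_{(-\infty,t]}\cdot 1}=\int_{-\infty}^t d\rho=\rho(t)$; as $U\pi_1=\delta_1$, the unitary equivalence gives $\inner{\delta_1}{E(t)\delta_1}=\rho(t)$, so $\rho$ is the spectral function of $J$.

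The step I expect to be the crux is the passage from a merely symmetric construction to a genuine self-adjoint Jacobi operator: the density hypothesis is exactly what guarantees that $U$ is onto and hence unitary, equivalently that multiplication by $t$ admits no proper self-adjoint restriction obstructing the identification. Without density the system $\{\pi_k\}$ would fail to span $L_2(\reals,\rho)$, the measure $\rho$ would correspond to a non-extremal solution of the associated Hamburger moment problem, and the reconstruction would break down. This is precisely where the cited results \cite[Thms.\,2.3.3 and 4.1.4]{MR0184042} enter, as they also control the borderline situation in which $\rho$ has infinite support, ensuring that the recurrence genuinely produces a semi-infinite matrix of the form \eqref{eq:jm-0}.
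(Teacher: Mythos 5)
Your proof is correct; it handles necessity exactly as the paper does, but its treatment of sufficiency is genuinely different in character. The paper gives no construction for the converse direction at all: it disposes of it by citing Theorems 2.3.3 and 4.1.4 of \cite{MR0184042}, namely that density of the polynomials forces $\rho$ to be a determinate or an indeterminate $N$-extremal solution of its moment problem, and that Akhiezer's theory identifies precisely those solutions as the spectral functions of canonical self-adjoint extensions of the associated Jacobi operator. You instead unpack the content of that citation into a self-contained argument: Gram--Schmidt on the monomials yields the orthonormal polynomials, the three-term recurrence, and hence the matrix \eqref{eq:jm-0}; density upgrades $\{\pi_k\}_{k=1}^\infty$ from an orthonormal system to a basis, so $\pi_k\mapsto\delta_k$ extends to a unitary $U$; and conjugating the maximal multiplication operator, which is always self-adjoint, by $U$ produces a self-adjoint extension of $J_0$ --- a Jacobi operator in the paper's sense --- whose spectral function is $\rho$. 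This is the orthonormalization method the paper itself alludes to in Remark~\ref{rem:inverseJM}, and what it buys is the complete elimination of the determinate/$N$-extremal dichotomy: in your argument the sole role of the density hypothesis is to make $U$ surjective, whereas the citation buys the paper brevity. One caveat: your construction tacitly assumes $\rho$ has infinite support; otherwise the monomials are linearly dependent in $L_2(\reals,\rho)$, Gram--Schmidt terminates, and one lands in the finite-matrix situation of Remark~\ref{rem:finite-matrix}. Your closing sentence states this borderline case backwards --- the delicate case is finite support, not infinite --- but since infinite support is exactly what makes the recurrence produce a semi-infinite matrix, this is a hypothesis to make explicit rather than a flaw in the argument, and the paper glosses over the same point.
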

% \begin{proof}
%   ($\Rightarrow$) This follows from the discussion
%   above. ($\Leftarrow$). By \cite[Thms. 2.3.3,\,4.1.4]{MR0184042}, one knows
%   that the polynomials are dense in  $L_2(\reals,\rho)$
% \end{proof}
\begin{remark}
  \label{rem:nec-suf-conditions-for-measures}
  Recall that $J$ is either a proper self-adjoint
  restriction of $J_0^*$ or $J=\cc{J_0}$ depending on the alternative
  given in \eqref{eq:deficiency-indices}. Thus, for any measure $\rho$ such that all
  polynomials are a dense linear subset of $L_2(\reals,\rho)$, there
  exists a Jacobi operator $J$, viz. a canonical self-adjoint
  extension of the operator whose matrix representation is a Jacobi
  matrix, such that $\rho$ and $J$ are related by
  \eqref{eq:spectral-function-def}.
\end{remark}

\begin{remark}
  \label{rem:finite-matrix}
  Any measure with finite support is the spectral measure of the
  operator associated with some finite Jacobi matrix.
\end{remark}

\begin{definition}
  \label{def:weyl-function}
The Weyl $m$-function is defined as follows
\begin{equation}
  \label{eq:weyl-function}
  m(z):=\inner{\delta_1}{(J-z I)^{-1}\delta_1}\,,\qquad z\not\in\sigma(J)\,,
\end{equation}
where $\sigma(J)$ denotes the spectrum of $J$.
\end{definition}
By the map $\Phi$, it immediately follows from this
definition that
\begin{equation}
  \label{eq:weyl-by-spectral-th}
  m(z)=\int_{\mathbb{R}}
  \frac{d\rho(t)}{t-z}\,.
\end{equation}
Thus, by the Nevanlinna representation theorem (see
\cite[Thm.\,5.3]{MR1307384}), $m(z)$ is a Herglotz function.  Recall
that a \textbf{Herglotz} function  $f$ (also called Pick or Nevanlinna-Pick
function) is holomorphic in the upper half plane and
\begin{equation*}
\im f(z)\ge
0\quad\text{ whenever }\im z>0.
\end{equation*}

\begin{remark}
\label{rem:inverseJM}
For Jacobi operators, the inverse spectral theory is based on the fact
that, from the Weyl $m$-function (or, equivalently, $\rho$), one
uniquely recovers the matrix (\ref{eq:jm-0}) and the boundary
condition at infinity that defines the self-adjoint extension if
necessary. This is done by means of either the discrete Riccati
equation (see \cite[Eq.\,2.15]{MR1616422},
\cite[Eq.\,2.23]{MR1643529}) or the method of orthonormalization of
the polynomial sequence $\{t^k\}_{k=0}^\infty$ in
$L_2(\mathbb{R},\rho)$ \cite[Chap.\,7,\,Sec.\,1.5]{MR0222718}. If
(\ref{eq:jm-0}) is the matrix representation of a non-self-adjoint
operator, then the condition at infinity may be found by the method
exposed in \cite[Sec.\,2]{MR2305710}.
\end{remark}
\section{Green functions for Jacobi operators}
\label{sec:green-functions}

We begin this section by introducing some concepts and laying out the
notation.

Let $M\subset\integers$ and consider a sequence $\{M_n\}_{n=1}^\infty$ of
finite sets
containing only consecutive integers and such that
\begin{enumerate}[i)]
%\item $M_n\subset M$ for any $n\in\nats$,
\item $M_n\subset M_{n+1}$ for any $n\in\nats$,
\item $\cup_{n=1}^\infty M_n=M$\,.
\end{enumerate}
If, for a collection of complex numbers $\{r_k\}_{k\in M}$,
\begin{equation*}
  \lim_{n\to\infty}\sum_{k\in M_n}r_k=s\,,
\end{equation*}
for any sequence $\{M_n\}_{n=1}^\infty$ satisfying i) and  ii), then we define
\begin{equation*}
  \sum_{k\in M}r_k:=s
\end{equation*}
and  say that $\sum_{k\in M}r_k$ converges to $s$. For any finite
subset $F$ of $M$, we also consider
\begin{equation*}
  \sum_{k\in M\setminus F}r_k:=\sum_{k\in M}r_k-\sum_{k\in F}r_k\,.
\end{equation*}

The expressions
\begin{equation*}
  \prod_{k\in M}r_k \quad\text{ and }\quad\prod_{k\in M\setminus F}r_k
\end{equation*}
are defined analogously.

For every $z\in\complex\setminus\rho(J)$, define
\begin{equation}
  \label{eq:psi-vector}
  \psi(z):=(J-zI)^{-1}\delta_1\,.
\end{equation}
It is known that \cite[Chap.\,7 Eq.\,1.39]{MR0222718} that for each
$z\in\complex\setminus\rho(J)$ there exists a unique complex number
$m(z)$ such that
\begin{equation}
  \label{eq:weyl-solution}
  \psi(z)=m(z)\pi(z) +\xi(z)\,.
\end{equation}
The notation here corresponds to the fact that the number $m(z)$ is
actually the value of the Weyl $m$-function at $z$.

\begin{definition}
  \label{def:submatrices}
  Given a subspace $\mathcal{G}\subset\l_2(\nats)$, let
  $P_{\mathcal{G}}$ be the orthogonal projection onto
  $\mathcal{G}$ and
  $\mathcal{G}^\perp:=\{\phi\in
  l_2(\nats):\inner{\phi}{\psi}=0\
  \forall\,\psi\in\mathcal{G}\}$. Also, define
  the subspace
$\mathcal{F}_n:=\Span\{\delta_k\}_{k=1}^n$. In the Hilbert space $\mathcal{F}_n^\perp$,
consider the operator
\begin{equation*}
  J_n^+:=P_{\mathcal{F}_n^\perp}J\eval{\mathcal{F}_n^\perp}\,,\quad n\in\nats\setminus\{1\}\,.
\end{equation*}
Similarly, in the space $\mathcal{F}_{n-1}$, consider
  \begin{equation*}
    J_n^-:=P_{\mathcal{F}_{n-1}}J\eval{\mathcal{F}_{n-1}}\,,\quad n\in\nats\setminus\{1\}\,.
  \end{equation*}
Here, we have used the notation
 $J\eval{\mathcal{G}}$ for the restriction of $J$ to the set
 $\mathcal{G}$, that is,
 $\dom(J\eval{\mathcal{G}})=\dom(J)\cap\mathcal{G}$. The corresponding
 Weyl $m$-functions of these operators are
 \begin{equation*}
   m_n^+(z):=\inner{\delta_{n+1}}{(J_n^+-zI)^{-1}\delta_{n+1}}\,,\qquad
    m_n^-(z):=\inner{\delta_{n-1}}{(J_n^--zI)^{-1}\delta_{n-1}}\,.
 \end{equation*}
\end{definition}

The operator $J_n^+$ is a self-adjoint extension of the operator whose
matrix representation with respect to the basis
$\{\delta_k\}_{k=n}^\infty$ of the space
$\mathcal{F}_n^\perp$ is (\ref{eq:jm-0}) with
the first $n$ rows and $n$ columns removed. When $J_0$ is not
essentially self-adjoint, $J_n^+$ has the same boundary conditions at
infinity as the operator $J$. Clearly, the operator $J_n^-$ lives in
an $(n-1)$-dimensional space.
\begin{definition}
\label{def:green-function}
For any $n\in\nats$, we use the following notation
\begin{equation*}
  G(z,n):=\inner{\delta_n}{(J-zI)^{-1}\delta_n}\,,\qquad z\not\in\sigma(J)\,,
\end{equation*}
and call $G(z,n)$ the $n$-th Green function of the Jacobi operator $J$.
Observe that $G(z,1)=m(z)$ (See Definition~\ref{def:weyl-function}).
\end{definition}
In view of
(\ref{eq:delta-k-through-delta-1}) and \eqref{eq:spectral-function-def}, one has
\begin{equation}
  \label{eq:green-integral}
  G(z,n)=\int_\reals\frac{\pi_n^2(t)d\rho(t)}{t-z}\,,\qquad z\not\in\sigma(J)\,.
\end{equation}
Thus, for any $n\in\nats$, $G(\cdot,n)$ is a Herglotz function. This
function is extended analytically to the eigenvalues of $J$ which are zeros
of $\pi_n$ since these points are removable singularities.

On the basis of the
von Neumann expansion for the resolvent
(cf. \cite[Chap.\,6,\,Sec.\,6.1]{MR1711536}), one has
\begin{equation*}
  (J-zI)^{-1}\delta_n=
  -\sum_{k=0}^{N-1}\frac{J^k}{z^{k+1}}\delta_n
  +\frac{J^N}{z^{N}}
  (J-z I)^{-1}\delta_n
\end{equation*}
for any $n\in\nats$ and $z\in\mathbb{C}\setminus\sigma(J)$. From this and the fact that (see
\cite[Chap.\,6 Sec.\,3]{MR1192782})
\begin{equation*}
  \norm{(J-z I)^{-1}}\le\frac{1}{\abs{\im z}}\,,
\end{equation*}
one can  obtain the following asymptotic formula
\begin{equation}
  \label{eq:G=-asymptotics}
  G(z,n)=-\frac{1}{z} +O(z^{-2})
\end{equation}
as $z\to\infty$ along any curve away from the spectrum.

The next assertion is proven in \cite[Thm. 2.8]{MR1616422} and
\cite[Prop.\,2.3]{MR3377115}.
\begin{proposition}
  \label{prop:Gkk-formula}
  For any $n\in\nats$
\begin{equation}
  \label{eq:Gkk-formula2}
   G(z,n)=\frac{-1}{b_n^2m_n^+(z)+b_{n-1}^2m_n^-(z)+z-q_n}\,,
\end{equation}
where we define $m^-_1(z)\equiv 0$.
\end{proposition}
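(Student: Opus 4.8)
The plan is to compute $G(z,n)=\inner{\delta_n}{(J-zI)^{-1}\delta_n}$ by solving the resolvent equation $(J-zI)u=\delta_n$ explicitly and reading off its $n$-th coordinate $u_n$. It suffices to prove \eqref{eq:Gkk-formula2} for $z$ in the open upper half-plane: there $J$, $J_n^+$ and $J_n^-$ are all self-adjoint and $z$ lies in each of their resolvent sets, so every term is well defined. Since both sides of \eqref{eq:Gkk-formula2} are meromorphic in $z$, the identity then propagates to all $z\notin\sigma(J)$ by analytic continuation.

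Writing $u=(J-zI)^{-1}\delta_n=\sum_k u_k\delta_k$, I would use the three-term form of $\Upsilon$ to decouple the coordinates lying above and below the index $n$. For $k\ge n+1$ the equations $((J-zI)u)_k=0$ state exactly that $w:=P_{\mathcal{F}_n^\perp}u$ solves $(J_n^+-zI)w=-b_n u_n\delta_{n+1}$, the coupling between $\delta_n$ and $\mathcal{F}_n^\perp$ being carried solely by $b_n$. Hence $w=-b_nu_n(J_n^+-zI)^{-1}\delta_{n+1}$, and pairing with $\delta_{n+1}$ gives $u_{n+1}=-b_nu_n\,m_n^+(z)$. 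Symmetrically, the equations for $1\le k\le n-1$ (with $b_0=0$ accounting for the boundary row) say that $P_{\mathcal{F}_{n-1}}u$ solves $(J_n^--zI)(P_{\mathcal{F}_{n-1}}u)=-b_{n-1}u_n\delta_{n-1}$, whence $u_{n-1}=-b_{n-1}u_n\,m_n^-(z)$.

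Substituting these two relations into the one remaining equation $((J-zI)u)_n=1$, namely $b_{n-1}u_{n-1}+(q_n-z)u_n+b_nu_{n+1}=1$, yields
\[
\bigl[(q_n-z)-b_{n-1}^2m_n^-(z)-b_n^2m_n^+(z)\bigr]u_n=1,
\]
and solving for $u_n=G(z,n)$ gives precisely \eqref{eq:Gkk-formula2}. The case $n=1$ is covered once we read $\mathcal{F}_0=\{0\}$ and $m_1^-(z)\equiv0$, so that the lower block simply disappears. The step that must be argued with care is the decoupling: one has to verify that compressing the resolvent equation to $\mathcal{F}_n^\perp$ really reproduces $(J_n^+-zI)$ rather than some other restriction, and that $(J_n^+-zI)^{-1}\delta_{n+1}$ makes sense in $l_2(\nats)$. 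This is exactly where the self-adjointness of $J_n^+$, together with the fact (noted after Definition~\ref{def:submatrices}) that $J_n^+$ inherits the same boundary condition at infinity as $J$, is needed; granting this identification, the remainder is the elementary recurrence manipulation above.
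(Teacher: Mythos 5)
Your proof is correct, but note that the paper itself gives no argument for Proposition~\ref{prop:Gkk-formula}: it is proved by citation to \cite[Thm.\,2.8]{MR1616422} and \cite[Prop.\,2.3]{MR3377115}. What you have written is a self-contained derivation that, in substance, reconstructs the argument behind those references: solve $(J-zI)u=\delta_n$ for $\im z>0$, observe that the rows of index $k\ge n+1$ compress to $(J_n^+-zI)P_{\mathcal{F}_n^\perp}u=-b_nu_n\delta_{n+1}$ (the cross term $P_{\mathcal{F}_n^\perp}(J-zI)P_{\mathcal{F}_n}u=b_nu_n\delta_{n+1}$ carries all of the coupling, and $P_{\mathcal{F}_n^\perp}u$ lies in $\dom(J)\cap\mathcal{F}_n^\perp=\dom(J_n^+)$ because it differs from $u\in\dom(J)$ by a finitely supported vector), argue symmetrically for the finite block $J_n^-$, and substitute $u_{n\pm1}$ back into the $n$-th row. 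The decoupling issue you flag is exactly the right one, and it is settled by the paper's standing assertion (after Definition~\ref{def:submatrices}) that $J_n^+$ is self-adjoint in $\mathcal{F}_n^\perp$ with the boundary condition at infinity inherited from $J$; this is what makes $(J_n^+-zI)^{-1}\delta_{n+1}$ meaningful also in the limit-circle case. One small repair: your continuation step appeals to meromorphy of both sides of \eqref{eq:Gkk-formula2}, which holds only under H\,1, whereas the proposition assumes nothing about the spectrum. This is harmless, since your computation is valid verbatim in both open half-planes, and for real $z\notin\sigma(J)$ one extends by analyticity on the connected set $\complex\setminus\sigma(J)$, interpreting the right-hand side at poles of $m_n^\pm$ lying off $\sigma(J)$ as removable singularities (there both sides vanish). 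The trade-off between the two routes is the obvious one: the paper's citation is shorter, while your argument makes the proposition self-contained and exposes exactly where self-adjointness of the truncated operators enters.
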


The case $n=1$ in \eqref{eq:Gkk-formula2} is the
Riccati type equation for the Weyl $m$-function
\cite[Eq.\,2.15]{MR1616422}, \cite[Eq.\,2.23]{MR1643529}.

Note that, in the case when $J_0$ is not
essentially self-adjoint, the dependence of $G$ on the choice of the
self-adjoint extension is given by the fact that $m_n^+(z)$ depends on
this extension.

\begin{hypothesis}
 \label{hyp:discreteness}
  Hypothesis: The Jacobi operator $J$ has discrete spectrum, that is,
  \begin{equation*}
    \sigma_{ess}(J)=\emptyset\,.
  \end{equation*}
  The essential spectrum $ \sigma_{ess}(J)$,
  in this case, is the accumulation points of $\sigma(J)$.
  Since $\widetilde{J}_n$
  and $J$
  differ by a finite-rank perturbation, it follows from the Weyl
perturbation theorem that the spectrum of
  $\widetilde{J}_n$ is also discrete. For the same
  reason, $J_n^+$ has also discrete spectrum.
\end{hypothesis}
% \begin{remark}
%   \label{rem:jn-also-discrete}
% Since the operator $J$ can be seen as a finite-rank perturbation of
% the orthogonal sum of $J_n^+$ and $J_n^-$, it follows by Weyl
% perturbation theorem that $J_n^+$ has also discrete spectrum.
% \end{remark}

\begin{remark}
  \label{rem:zeros-poles-hm}
  If one assumes H\,1, then the functions $m(z)$
  and $G(z,n)$
  are meromorphic. A consequence of being Herglotz and meromorphic, is
  that these functions have real and simple zeros and poles.
  Moreover, zeros interlace with poles, that is between two contiguous
  zeros there is exactly one pole and between two contiguous poles
  there is only one zero \cite[Chap.\,7, Thm.\,1]{MR589888}.
\end{remark}
\begin{proposition}
  \label{prop:form-inverse-green-function}
Assume H\,1 and let $\{\alpha_k\}$ be the zeros of the Green function $G(z,n)$ for any
$n\in\nats$. Then
there are constants $\eta_k\ge 0$ ($k\in\nats$) such that
  \begin{equation*}
    -G(z,n)^{-1}=z-q_n+\sum_{k\in M}\frac{\eta_k}{\alpha_k-z}\,,
  \end{equation*}
where $q_n$ is the $n$-th element of the main diagonal of (\ref{eq:jm-0}).
\end{proposition}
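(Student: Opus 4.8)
The plan is to read the statement off the Riccati-type identity of Proposition~\ref{prop:Gkk-formula}, which already expresses the reciprocal of the Green function as
\[
-G(z,n)^{-1}=z-q_n+b_n^2 m_n^+(z)+b_{n-1}^2 m_n^-(z),
\]
and then to convert the two Weyl $m$-functions into explicit partial fractions using Hypothesis~H\,1. First I would recall that $m_n^+$ and $m_n^-$ are the Weyl $m$-functions of the self-adjoint operators $J_n^+$ and $J_n^-$, so by the spectral theorem $m_n^\pm(z)=\int_\reals (t-z)^{-1}\,d\mu_n^\pm(t)$, where $\mu_n^\pm$ is the finite spectral measure of $J_n^\pm$ at $\delta_{n\pm 1}$. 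Since $b_n>0$ and $b_{n-1}>0$, the combination $b_n^2 m_n^+(z)+b_{n-1}^2 m_n^-(z)$ is a Herglotz function. (For $n=1$ the second term drops out because $b_0=0$ and $m_1^-\equiv 0$, so the argument specializes without change.)

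Next I would invoke discreteness. Under H\,1 the operator $J_n^+$ has discrete spectrum and $J_n^-$ acts on the finite-dimensional space $\mathcal{F}_{n-1}$, so both spectral measures are purely atomic,
\[
\mu_n^+=\sum_k w_k^+\,\delta_{\lambda_k^+},\qquad \mu_n^-=\sum_j w_j^-\,\delta_{\lambda_j^-},
\]
with all weights strictly positive. Substituting gives
\[
-G(z,n)^{-1}=z-q_n+\sum_k\frac{b_n^2 w_k^+}{\lambda_k^+-z}+\sum_j\frac{b_{n-1}^2 w_j^-}{\lambda_j^--z},
\]
the sums converging in the sense fixed at the start of Section~\ref{sec:green-functions}, inherited from the convergent Nevanlinna integrals of the finite measures $\mu_n^\pm$.

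It then remains to identify the pole set with $\{\alpha_k\}$. The poles of the right-hand side are exactly $\sigma(J_n^+)\cup\sigma(J_n^-)$, and no cancellation can occur since every residue $b_n^2 w_k^+$, $b_{n-1}^2 w_j^-$ is positive; a point lying in both spectra simply yields a pole whose residue is the sum of the two contributions. On the other hand, the poles of $-G(z,n)^{-1}$ are precisely the zeros of $G(\cdot,n)$, namely $\{\alpha_k\}$, which are real and simple by Remark~\ref{rem:zeros-poles-hm}. Matching the two descriptions and relabelling produces the asserted expansion with $\eta_k\ge 0$ (indeed $\eta_k>0$ at a genuine zero). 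As a consistency check on the entire part, the expansion $G(z,n)=-z^{-1}-q_n z^{-2}+O(z^{-3})$ from (\ref{eq:green-integral}), using $\int\pi_n^2\,d\rho=\norm{\delta_n}^2=1$ and $\int t\,\pi_n^2\,d\rho=\inner{\delta_n}{J\delta_n}=q_n$, gives $-G(z,n)^{-1}=z-q_n+O(z^{-1})$, pinning down the linear term and the constant exactly as required.

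The step demanding the most care is this pole--zero identification together with the convergence bookkeeping: one must ensure that passing from the Nevanlinna integrals to the atomic partial-fraction sums respects the conditional summation convention of Section~\ref{sec:green-functions}, and that the \emph{a priori} separate atoms of $\mu_n^+$ and $\mu_n^-$ are correctly merged into the single family $\{\alpha_k\}$ with no spurious cancellation. This merging is consistent only because of the strict positivity of $b_n,b_{n-1}$ and of the spectral weights, and because the Herglotz and interlacing structure recorded in Remark~\ref{rem:zeros-poles-hm} forbids coincidences of zeros with poles of $G(\cdot,n)$; everything else is routine.
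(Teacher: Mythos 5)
Your proof is correct and follows essentially the same route as the paper: start from the identity $-G(z,n)^{-1}=z-q_n+b_n^2m_n^+(z)+b_{n-1}^2m_n^-(z)$ of Proposition~\ref{prop:Gkk-formula}, expand $m_n^+$ and $m_n^-$ into atomic partial-fraction sums via their spectral measures (discrete by H\,1 for $J_n^+$, finite for $J_n^-$), and identify the resulting pole set with the zeros of $G(z,n)$. Your additional remarks on non-cancellation of positive residues and on the large-$z$ asymptotics are sound refinements of the paper's terser argument but do not change the approach.
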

\begin{proof}
By
(\ref{eq:Gkk-formula2}), one has
\begin{align}
\label{eq:decomposition-of-Gkk-formula2}
  -G(z,n)^{-1}&=z-q_n+b_n^2m_n^+(z)+b_{n-1}^2m_n^-(z)\nonumber\\
  &=z-q_n+b_n^2\sum_{k\in\widetilde{M}}\frac{\tau_k}{c_k-z}+
b_{n-1}^2\sum_{j=1}^{n-1}\frac{\kappa_j}{d_j-z}\,,
\end{align}
where in the last equality we have used (\ref{eq:weyl-by-spectral-th})
and the fact that, due to Hypothesis~\ref{hyp:discreteness}, $J_n^+$
has discrete spectrum. Clearly, the set of the union
of the elements of $\{c_k\}_{k\in\widetilde{M}}$ and $\{d_k\}_{k=1}^{n-1}$
are the zeros of $G(z,n)$.
\end{proof}
\begin{remark}
  \label{rem:chebotarev}
  Since $G(z,n)$ is a Herglotz meromorphic function when H\,1 is
  assumed, the same is true for the function $-G(z,n)^{-1}$. Using
  \cite[Chap.\,7, Thm.\,2]{MR589888}, one writes
  \begin{equation}
    \label{eq:chebotarev}
    -G(z,n)^{-1}=az+b+\sum_{k\in M}\eta_k\left(\frac{1}{\alpha_k-z}
 -\frac{1}{\alpha_k}\right)\,,
  \end{equation}
where $a\ge 0$, $b$ is real and $\eta_k\ge 0$ for all $k\in\nats$.
Comparing this last equation with the statement of
Proposition~\ref{prop:form-inverse-green-function}  one concludes
that
\begin{equation}
  \label{eq:moment-1-convergent}
  \sum_{k\in M}\frac{\eta_k}{\alpha_k}<+\infty\,.
\end{equation}
Actually, it follows from \eqref{eq:Gkk-formula2} that
for an infinite set of subindices $\eta_k=b_n^2\tau_k$ (see
(\ref{eq:decomposition-of-Gkk-formula2})), therefore, as a consequence
of Theorem~\ref{thm:nec-suf-cond-for-meausure},
\begin{equation*}
  \sum_{k\in M}\alpha_k^m\eta_k<\infty\,,\quad\text{ for all }m=0,1,\dots
\end{equation*}
since $\tau_k$ is jump at $\alpha_k$ of the spectral measure of the
infinite submatrix whose Weyl $m$-function is $m^+$.
\end{remark}

\begin{proposition}
  \label{prop:nec-suf-green}
  Let $G(z)$ be a meromorphic function.  Denote
  by $-\eta_k$ the residue of $-G(z)^{-1}$ at $\alpha_k$. $G(z)$ is the $n$-th
  Green function of a Jacobi operator satisfying H\,1 for some
  $n\in\nats\setminus\{1\}$ if and only if all the following
  conditions are satisfied
  \begin{enumerate}
  \item  $G(z)$ is a Herglotz function.
  \item $G(z)$ obeys the asymptotics
    \begin{equation*}
      G(z)=-\frac{1}{z}+O(z^{-2})
    \end{equation*}
when $z$ tends to $\infty$ along any curve away from a strip containing the real line.
  \item There exists a finite set $F$ with $\card F=n$ such that all the polynomials are in
  $L_2(\reals,\rho)$ and they are dense in this space, where
  \begin{equation}
    \label{eq:measure-infinite-submatrix}
    \rho(t):=\sum_{\substack{\alpha_k<t\\ k\in \nats\setminus F}}\eta_k\,.
  \end{equation}
  \end{enumerate}
\end{proposition}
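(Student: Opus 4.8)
Suppose first that $G=G(\cdot,n)$ for a Jacobi operator $J$ obeying H\,1. Condition~1 is immediate from the integral representation \eqref{eq:green-integral}, which already exhibits $G(\cdot,n)$ as a Herglotz function, and Condition~2 is nothing but the asymptotic formula \eqref{eq:G=-asymptotics}. For Condition~3 the plan is to read off the decomposition \eqref{eq:decomposition-of-Gkk-formula2} furnished by Proposition~\ref{prop:Gkk-formula}: the poles of $-G(\cdot,n)^{-1}$ split into the infinitely many poles $c_k$ of $m_n^+$ and the finitely many poles $d_j$ of $m_n^-$, the latter being the eigenvalues of the finite-dimensional operator $J_n^-$. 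Taking $F$ to index the poles coming from the finite part $m_n^-$, the measure \eqref{eq:measure-infinite-submatrix} equals, up to the positive factor $b_n^2$, precisely the spectral function of the infinite submatrix $J_n^+$ (whose cyclic vector is $\delta_{n+1}$). Since multiplication by a constant alters neither membership in $L_2$ nor density of polynomials, Condition~3 follows from Theorem~\ref{thm:nec-suf-cond-for-meausure} applied to $J_n^+$.

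\noindent\textbf{Sufficiency.} Conversely, assume $G$ is meromorphic and satisfies 1--3. Being Herglotz and meromorphic, $-G(z)^{-1}$ is again Herglotz and meromorphic, so by Remark~\ref{rem:chebotarev} it admits the representation \eqref{eq:chebotarev} with $a\ge 0$, $b\in\reals$, and residues $\eta_k\ge 0$ at the poles $\alpha_k$. Inserting Condition~2 and letting $z\to\infty$ pins the leading coefficient: from $G=-z^{-1}+O(z^{-2})$ one gets $-G^{-1}=z+O(1)$, whence $a=1$. The next step is to split \eqref{eq:chebotarev} along the set $F$ provided by Condition~3. The part indexed by $\nats\setminus F$ defines the measure $\rho$ of \eqref{eq:measure-infinite-submatrix}; after normalising by its total mass $b_n^2:=\sum_{k\notin F}\eta_k$ it is a probability measure in which, by Condition~3 together with Theorem~\ref{thm:nec-suf-cond-for-meausure} and Remark~\ref{rem:nec-suf-conditions-for-measures}, the polynomials are dense. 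This produces a genuine semi-infinite Jacobi operator that will serve as $J_n^+$, with $b_n^2 m_n^+$ equal to the corresponding tail of the sum. The finitely many terms indexed by $F$ carry positive weights and hence, by Remark~\ref{rem:finite-matrix}, are the spectral data of a finite Jacobi matrix, supplying $b_{n-1}^2 m_n^-$.

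\noindent\textbf{Assembling $J$ and the main obstacle.} It then remains to weld the two pieces into a single Jacobi matrix \eqref{eq:jm-0} whose $n$-th Green function is $G$. The off-diagonal bridge entries $b_{n-1},b_n$ are recovered from the two normalising masses, the diagonal entry $q_n$ from the constant term of \eqref{eq:chebotarev} (after subtracting $\sum_k \eta_k/\alpha_k$), and Proposition~\ref{prop:Gkk-formula} then returns
\[
-G(z)^{-1}=z-q_n+b_n^2 m_n^+(z)+b_{n-1}^2 m_n^-(z),
\]
that is $G=G(\cdot,n)$; discreteness of the spectrum, so that H\,1 holds, is inherited from the meromorphy of $G$. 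I expect the genuine difficulty to lie exactly in this reconstruction. One must verify that the measure peeled off by deleting $F$ is not merely determinate but has \emph{dense} polynomials --- this is precisely the delicate analytic hypothesis encoded in Condition~3 --- and that the infinite block and the finite block can be consistently joined through row $n$ so that the assembled operator reproduces $G$ without creating spurious poles or perturbing the prescribed residues. Confirming that the admissible choices of $F$ are compatible with this splitting and that positivity of $b_{n-1}^2,b_n^2$ is automatic are the points I would scrutinise most carefully.
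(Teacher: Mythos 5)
Your proposal is correct and takes essentially the same route as the paper's own proof: necessity via \eqref{eq:green-integral}, \eqref{eq:G=-asymptotics} and the decomposition of $-G^{-1}$ coming from Propositions~\ref{prop:Gkk-formula} and \ref{prop:form-inverse-green-function} together with Theorem~\ref{thm:nec-suf-cond-for-meausure}; sufficiency via the representation \eqref{eq:chebotarev}, fixing $a=1$ from Condition~2, splitting the sum along $F$ into a semi-infinite Weyl $m$-function (Theorem~\ref{thm:nec-suf-cond-for-meausure}, Remark~\ref{rem:nec-suf-conditions-for-measures}) and the $m$-function of a finite Jacobi matrix (Remark~\ref{rem:finite-matrix}), then reassembling through Proposition~\ref{prop:Gkk-formula}. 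The closing caveats you raise (finiteness of the total mass, density of polynomials, positivity of $b_{n-1}^2$ and $b_n^2$) are treated in the paper at the same level of detail you give, namely they are absorbed into Condition~3 and the nonnegativity of the residues $\eta_k$ of the Herglotz function $-G^{-1}$.
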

\begin{proof}
  ($\Rightarrow$) Condition 1 follows from
  \eqref{eq:green-integral} and \eqref{eq:G=-asymptotics} implies
  Condition 2. It follows from Propositions~\ref{prop:Gkk-formula} and
  \ref{prop:form-inverse-green-function} that
  \begin{equation*}
   \sum_{k\in M}\frac{\eta_k}{\alpha_k-z}=\left(\sum_{k\in M\setminus
     F}+\sum_{k\in F}\right)\frac{\eta_k}{\alpha_k-z}\,,
  \end{equation*}
where the infinite sum in the r.\,h.\,s. of the last equation is the
Weyl $m$-function of a semi-infinite Jacobi operator. Finally one recurs
to Theorem~\ref{thm:nec-suf-cond-for-meausure} to show that Condition
3 holds.

($\Leftarrow$) Since $-G(z)^{-1}$ is also Herglotz, one has
\eqref{eq:chebotarev}. Condition 3 implies in particular that
\begin{equation*}
  \sum_{k\in M}\eta_k<+\infty\,.
\end{equation*}
Thus, \eqref{eq:moment-1-convergent} holds and then one can write
\begin{equation*}
  -G(z,n)^{-1}=az+\widetilde{b}+\sum_{k\in M}\frac{\eta_k}{\alpha_k-z}\,.
\end{equation*}
Because of Condition 2, $a=1$. Take the set $F$ given in  Condition 3
and write
\begin{equation*}
  -G(z,n)^{-1}=z+\widetilde{b}+\left(\sum_{k\in M\setminus
    F}+\sum_{k\in F}\right)\frac{\eta_k}{\alpha_k-z} \,,
\end{equation*}
where, due to Theorem~\ref{thm:nec-suf-cond-for-meausure} and
Condition 3,
\begin{equation*}
  \sum_{k\in M\setminus F}\frac{\eta_k}{\alpha_k-z}
\end{equation*}
is the Weyl $m$-function (modulo a positive constant factor) of some
semi-infinite Jacobi operator. On the other hand,
\begin{equation*}
  \sum_{k\in F}\frac{\eta_k}{\alpha_k-z}
\end{equation*}
is the Weyl $m$-function (modulo a positive constant factor) of a
finite Jacobi matrix.
Finally,
Proposition~\ref{prop:Gkk-formula} completes the
proof.
\end{proof}
\section{Direct spectral analysis}
\label{sec:direct-spectral-analysis-general-case}

We begin this section by defining the function
\begin{equation*}
  \widetilde{G}(z,k):=\inner{\delta_k}{(\widetilde{J}_n-zI)^{-1}\delta_k}
\end{equation*}
to be considered alongside the function $G(z,k)$. Define
\begin{equation}
  \label{eq:M-definition}
  \mathfrak{M}_k(z):=\frac{G(z,k)}{\widetilde{G}(z,k)}\,.
\end{equation}
This function is extended to the points that are removable singularities.

The following formula plays an important role for the comparative
spectral analysis of $J$ and $\widetilde{J}_n$
\cite[Lem.\,3.1]{MR3377115}.
\begin{equation}
  \label{eq:master}
  \mathfrak{M}_k(z)=\theta^2+(1-\theta^2)(\gamma -z)G(z,k)\,,
\end{equation}
where
\begin{equation}
  \label{eq:def-gamma}
  \gamma:=\frac{\theta^2h}{1-\theta^2}\,.
\end{equation}
This formula follows from (\ref{eq:Gkk-formula2}).

% By (\ref{eq:master}) and \cite[Theorem~A.1]{2013arXiv1312.6920D}, one
% infers that the function $\mathfrak{M}$ behaves as illustrated in
% Fig.~%\ref{fig:3}
% (cf. \cite[Thms.\,3.1 and 3.2]{2013arXiv1312.6920D}).
% \begin{figure}[h]
% \begin{center}
% \vspace{6cm}
% \end{center}
% \caption{Behaviour of $\mathfrak{M}$ on the real axis}\label{fig:3}
% \end{figure}
The following assertion reproduces the one of
\cite[Cor. 3.2]{MR3377115}. This result can be seen by comparing the
formula \eqref{eq:Gkk-formula2} for $G(z,n)$ and $\widetilde{G}(z,n)$.
\begin{proposition}
  \label{prop:zero-green-zero-perturbed}
  For any $n\in\nats$, the function $G(z,n)$ vanishes if and only if
  $\widetilde{G}(z,n)$ vanishes.
\end{proposition}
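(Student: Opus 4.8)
The plan is to read the statement off directly from the Riccati-type identity \eqref{eq:Gkk-formula2}, applied to both $J$ and $\widetilde{J}_n$ at the distinguished index $n$ where the perturbation is localized. Abbreviating $S(z):=b_n^2m_n^+(z)+b_{n-1}^2m_n^-(z)$, formula \eqref{eq:Gkk-formula2} reads $-G(z,n)^{-1}=S(z)+z-q_n$. Since $G(\cdot,n)$ is a Herglotz function, $G(z_0,n)=0$ holds exactly when its reciprocal diverges as $z\to z_0$; because $z-q_n$ is entire, this occurs precisely at the poles of $S$. The same dictionary will apply to $\widetilde{G}(\cdot,n)$ once I express $-\widetilde{G}(z,n)^{-1}$ through \eqref{eq:Gkk-formula2}, so the whole statement reduces to checking that the singular set of the auxiliary function $S$ is the same for $J$ and for $\widetilde{J}_n$.

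The key step is the observation that the perturbation \eqref{eq:def-tilde-j} alters only the three entries $b_{n-1}$, $q_n$, $b_n$, and that none of these enters the submatrices defining $J_n^+$ and $J_n^-$. Indeed, $J_n^+=P_{\mathcal{F}_n^\perp}J\eval{\mathcal{F}_n^\perp}$ acts on $\Span\{\delta_k\}_{k\ge n+1}$, whose Jacobi matrix starts at $q_{n+1},b_{n+1},\dots$, so the coupling $b_n$ to site $n$ is severed and is irrelevant; likewise $J_n^-$ acts on $\mathcal{F}_{n-1}=\Span\{\delta_k\}_{k=1}^{n-1}$, where the coupling $b_{n-1}$ to site $n$ is cut and the diagonal entry $q_{n-1}$ is untouched. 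Hence $\widetilde{J}_n^+=J_n^+$ and $\widetilde{J}_n^-=J_n^-$, so that $\widetilde{m}_n^+=m_n^+$ and $\widetilde{m}_n^-=m_n^-$. Reading the modified entries $\widetilde{b}_{n-1}=\theta b_{n-1}$, $\widetilde{b}_n=\theta b_n$, $\widetilde{q}_n=\theta^2(q_n+h)$ from \eqref{eq:jm-1} and substituting into \eqref{eq:Gkk-formula2} then yields
\begin{equation*}
  -\widetilde{G}(z,n)^{-1}=\theta^2 S(z)+z-\theta^2(q_n+h).
\end{equation*}

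The conclusion is now immediate: since $\theta>0$, the functions $S(z)$ and $\theta^2S(z)$ have identical singular sets, and adding the entire functions $z-q_n$ and $z-\theta^2(q_n+h)$ neither creates nor removes poles; thus $-G(z,n)^{-1}$ and $-\widetilde{G}(z,n)^{-1}$ diverge at exactly the same points, namely the poles of $S$, whence $G(\cdot,n)$ and $\widetilde{G}(\cdot,n)$ share their zero set. I expect the only genuine obstacle to be the verification that $m_n^+$ and $m_n^-$ are truly unaffected, i.e.\ the careful argument that the three modified entries sit exactly at the interface between the blocks $\mathcal{F}_{n-1}$, $\{\delta_n\}$ and $\mathcal{F}_n^\perp$ and so drop out of the severed submatrices; everything else is bookkeeping. (As a quick alternative, the same conclusion follows in one line from the master formula \eqref{eq:master}: if $G(z_0,n)=0$ then $\mathfrak{M}_n(z_0)=\theta^2\neq 0$, and since $G=\mathfrak{M}_n\widetilde{G}$ with $\mathfrak{M}_n(z_0)$ finite and nonzero, necessarily $\widetilde{G}(z_0,n)=0$, the reverse implication being symmetric.)
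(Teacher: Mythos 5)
Your proposal is correct and follows essentially the same route as the paper, which proves this proposition precisely by comparing formula \eqref{eq:Gkk-formula2} for $G(z,n)$ and $\widetilde{G}(z,n)$; your contribution is to make explicit the two details the paper leaves implicit, namely that the compressions $J_n^{\pm}$ (and hence $m_n^{\pm}$) are untouched by the perturbation, and that zeros of the Green functions correspond exactly to poles of $b_n^2m_n^+ + b_{n-1}^2m_n^-$, which cannot cancel since both summands are Herglotz. The argument, including the identification $\widetilde{b}_{n-1}=\theta b_{n-1}$, $\widetilde{b}_n=\theta b_n$, $\widetilde{q}_n=\theta^2(q_n+h)$ read off from \eqref{eq:jm-1}, is sound.
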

% \begin{proof}
%   Prove it completely.The case $z\ne\gamma$ has been proven in
%   \cite[Cor. 3.1]{MR3377115} in a more general setting. When
%   $z=\gamma$, due to the discreteness of the spectra of $J$ and
%   $\widetilde{J}_n$, the functions $G(z,n)$ and $\widetilde{G}(z,n)$ are
%   meromorphic. The assertion, then follows from
%   (\ref{eq:M-definition}), taking into account that
%   $\mathfrak{M}_k(\gamma)=\theta^2$.
% \end{proof}
\begin{proposition}
  \label{prop:common-eigenvalues-+-}
  Let $n\in\nats\setminus\{1\}$. The set of common eigenvalues of $J$
  and $\widetilde{J}_n$, different from $\gamma$, is equal to the set
  of common eigenvalues of $J_n^+$ and $J_n^-$, different from
  $\gamma$. The function $G(z,n)$ vanishes at these eigenvalues.
\end{proposition}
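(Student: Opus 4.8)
The plan is to reduce the statement to the factorization $G(z,n)=-1/D(z)$, with
$D(z):=b_n^2m_n^+(z)+b_{n-1}^2m_n^-(z)+z-q_n$, supplied by Proposition~\ref{prop:Gkk-formula}, together with the integral representation (\ref{eq:green-integral}) and the master formula (\ref{eq:master}). The central fact I would isolate is a glueing lemma, valid for any Jacobi operator $\widehat{J}$ with discrete spectrum: a point $\lambda$ lies in $\sigma(\widehat{J})$ with $\widehat{\pi}_n(\lambda)=0$ if and only if $\lambda$ is a common eigenvalue of the compressions $\widehat{J}_n^+$ and $\widehat{J}_n^-$. To prove it I would use that every eigenvector of $\widehat{J}$ at $\lambda$ is proportional to $\{\widehat{\pi}_k(\lambda)\}_k$; if its $n$-th entry vanishes, the eigenvalue equation at rows $n-1$ and $n+1$ decouples, so the truncations to $\{\delta_k\}_{k=1}^{n-1}$ and to $\{\delta_k\}_{k\ge n+1}$ are eigenvectors of $J_n^-$ and $J_n^+$, respectively; neither truncation is zero, since $\widehat{\pi}_1(\lambda)=1$ and a nontrivial solution of the three-term recurrence has no two consecutive zeros. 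Conversely, eigenvectors $v^-$ of $J_n^-$ and $v^+$ of $J_n^+$ at the same $\lambda$ can be glued by inserting a zero at site $n$; the single matching equation at row $n$ relating the two gluing coefficients is solvable because the boundary components $v_{n-1}^-$ and $v_{n+1}^+$ are nonzero, and in the non-essentially-self-adjoint case $v^+$ carries the same boundary condition at infinity as $J$ (see the discussion following Definition~\ref{def:submatrices}), so the glued vector lies in $\dom(\widehat{J})$.

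Next I would record two invariances under the perturbation. First, the modified entries $b_{n-1},q_n,b_n$ of $\widetilde{J}_n$ all sit in rows and columns that are deleted when forming the compressions, so $(\widetilde{J}_n)_n^+=J_n^+$ and $(\widetilde{J}_n)_n^-=J_n^-$; in particular the sets $\sigma(J_n^+)$ and $\sigma(J_n^-)$ are unchanged. Second, a direct comparison of the recurrences for $\pi$ and $\widetilde{\pi}$ gives $\widetilde{\pi}_k=\pi_k$ for $k\le n-1$ and $\widetilde{\pi}_n=\pi_n/\theta$, so $\widetilde{\pi}_n(\lambda)=0$ exactly when $\pi_n(\lambda)=0$. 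Applying the glueing lemma to $\widehat{J}=J$ and to $\widehat{J}=\widetilde{J}_n$ then yields
\begin{equation*}
  \sigma(J_n^+)\cap\sigma(J_n^-)
  =\{\lambda\in\sigma(J):\pi_n(\lambda)=0\}
  =\{\lambda\in\sigma(\widetilde{J}_n):\pi_n(\lambda)=0\}\,,
\end{equation*}
which already gives the inclusion $\sigma(J_n^+)\cap\sigma(J_n^-)\subseteq\sigma(J)\cap\sigma(\widetilde{J}_n)$.

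For the reverse inclusion I would argue by contradiction, and this is the step I expect to be the main obstacle. Let $\lambda\in\sigma(J)\cap\sigma(\widetilde{J}_n)$ with $\lambda\neq\gamma$, and suppose $\pi_n(\lambda)\neq0$. By (\ref{eq:green-integral}), and since $\delta_1$ is cyclic for both operators by (\ref{eq:delta-k-through-delta-1}), the functions $G(\cdot,n)$ and $\widetilde{G}(\cdot,n)$ then have genuine simple poles at $\lambda$, with residues $-\pi_n^2(\lambda)\rho(\{\lambda\})$ and $-\widetilde{\pi}_n^2(\lambda)\widetilde{\rho}(\{\lambda\})$, both nonzero because $\widetilde{\pi}_n(\lambda)=\pi_n(\lambda)/\theta\neq0$ and the atoms are positive. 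Substituting the Laurent expansions into the master formula, written as $G=\theta^2\widetilde{G}+(1-\theta^2)(\gamma-z)G\widetilde{G}$, the last term contributes a double pole at $\lambda$ with coefficient $(1-\theta^2)(\gamma-\lambda)\,\pi_n^2(\lambda)\rho(\{\lambda\})\,\widetilde{\pi}_n^2(\lambda)\widetilde{\rho}(\{\lambda\})$, which is nonzero because $\theta\neq1$ and $\lambda\neq\gamma$, whereas the left-hand side has only a simple pole. Matching the coefficients of $(z-\lambda)^{-2}$ forces this nonzero quantity to vanish, a contradiction. Hence $\pi_n(\lambda)=0$, and the glueing lemma places $\lambda$ in $\sigma(J_n^+)\cap\sigma(J_n^-)$, establishing equality of the two sets once $\gamma$ is excluded.

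Finally, for the last assertion I would note that at any $\lambda\in\sigma(J_n^+)\cap\sigma(J_n^-)$ both $m_n^+$ and $m_n^-$ have a pole, since eigenvectors of the finite matrix $J_n^-$ and of the semi-infinite $J_n^+$ have nonzero boundary components at sites $n-1$ and $n+1$; the residues there are negative and, being weighted by the positive constants $b_{n-1}^2$ and $b_n^2$, cannot cancel. Thus $D(z)\to\infty$ as $z\to\lambda$, whence $G(\lambda,n)=-1/D(\lambda)=0$ by Proposition~\ref{prop:Gkk-formula} (and $\widetilde{G}(\lambda,n)=0$ as well, consistently with Proposition~\ref{prop:zero-green-zero-perturbed}).
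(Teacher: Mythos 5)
Your proposal is correct, but it is built quite differently from the paper's own proof, which is essentially a chain of citations: the inclusion $\sigma(J_n^+)\cap\sigma(J_n^-)\subset\sigma(J)\cap\sigma(\widetilde{J}_n)$ is quoted from Prop.~3.3 of the companion paper [MR3377115], the converse and the vanishing of $G$ from Lems.~2.8, 2.9, 3.5 and Cor.~2.3 there. You instead prove everything from the tools available inside this paper. Your gluing/decoupling lemma (eigenvectors of $\widehat{J}$ are multiples of $\{\widehat{\pi}_k(\lambda)\}$; a vanishing $n$-th entry decouples the recurrence; conversely two compression eigenvectors glue across a zero at site $n$) is a self-contained replacement for the cited structural lemmas, and your two invariance observations --- $(\widetilde{J}_n)_n^{\pm}=J_n^{\pm}$ and $\widetilde{\pi}_n=\pi_n/\theta$, both easily checked from \eqref{eq:jm-1} --- are exactly what is needed to run it simultaneously for $J$ and $\widetilde{J}_n$. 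The step the paper delegates to Lem.~3.5 of [MR3377115] (a common eigenvalue $\lambda\neq\gamma$ cannot be a pole of $G(z,n)$) you replace with a transparent Laurent-coefficient argument: writing \eqref{eq:master} as $G=\theta^2\widetilde{G}+(1-\theta^2)(\gamma-z)G\widetilde{G}$ and using the residues $-\pi_n^2(\lambda)\rho(\{\lambda\})$, $-\widetilde{\pi}_n^2(\lambda)\widetilde{\rho}(\{\lambda\})$ from \eqref{eq:green-integral}, the product term would create a double pole that nothing else can match --- this is sound, since $\theta\neq 1$ and $\gamma\neq\lambda$, and the atoms are strictly positive because $\delta_1$ is cyclic. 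Finally, your proof that $G$ vanishes at the common eigenvalues (poles of $m_n^{\pm}$ with negative residues weighted by $b_n^2$, $b_{n-1}^2$ cannot cancel, so the denominator in \eqref{eq:Gkk-formula2} blows up) again substitutes for a citation. What the paper's route buys is brevity and consistency with the companion paper's framework; what yours buys is a proof of the proposition that is independent of [MR3377115] and uses only \eqref{eq:green-integral}, \eqref{eq:master} and Proposition~\ref{prop:Gkk-formula}, with the added benefit that the identification $\sigma(J_n^+)\cap\sigma(J_n^-)=\{\lambda\in\sigma(J):\pi_n(\lambda)=0\}$ is made explicit rather than implicit.
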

\begin{proof}
  The fact that
  \begin{equation*}
    \sigma(J_n^+)\cap\sigma(J_n^-)\subset\sigma(J)\cap\sigma(\widetilde{J}_n)
  \end{equation*}
  has been proven in \cite[Prop.\,3.3]{MR3377115}. To prove the
  converse contention, suppose that $\lambda$ is in
  $\sigma(J)\cap\sigma(\widetilde{J}_n)$.  Then, by
  \cite[Lem.\,2.8]{MR3377115}, $\lambda$ is either a zero or a pole of
  $G(z,n)$. But \cite[Lem.\,3.5]{MR3377115} implies that, if
  $\lambda\ne\gamma$, then $\lambda$ cannot be a pole of $G(z,n)$.
  The fact that $\lambda$ is a zero of $G(z,n)$ and an eigenvalue of
  $J$ yields through \cite[Lems.\,2.8 and 2.9, and
  Cor.\,2.3]{MR3377115} that $\lambda$ is a common eigenvalue of
  $J_n^+$ and $J_n^-$. The last assertion is
  \cite[Lem. 3.5]{MR3377115}.
\end{proof}
% \begin{corollary}
%   \label{cor:common-eigenvalues-zero-g}
%   Let $n\in\nats\setminus\{1\}$.
%   \begin{equation*}
%     \sigma(J)\cap\sigma(\widetilde{J}_n)\subset\{r\in\reals: G(r,n)=0\}
%   \end{equation*}
% \end{corollary}
% \begin{proof}
%   The proof follows directly from the proof of the previous proposition.
% \end{proof}
\begin{remark}
  \label{rem:finite-common-eigenvalues}
  A consequence of the previous proposition is that the number of
  common eigenvalues different from $\gamma$ is not greater than
  $n-1$. Moreover these common eigenvalues are the same for any $\theta\in(0,1)$
  and $h\in\reals$.
 % there is a finite
 %  number of common eigenvalues of $J$ and $\widetilde{J}_n$.
\end{remark}

% The following assertion is proven in \cite[Thms.\,3.8, 3.10 and
% Prop.\,3.4]{MR3377115} (see also \cite{MR2915295} for the
% analogous assertion for the finite dimensional case)
% \begin{proposition}
%   \label{prop:interlacing}
%   Let $\theta<1$. Enumerate the discrete sets $\sigma(J)\cup\{\gamma\}$ and
%   $\sigma(\widetilde{J}_n)$ according to (C) and denote
%   $\sigma(J)\cup\{\gamma\}=\{\nu_k\}_{k\in M}$ and
%   $\sigma(\widetilde{J}_n)=\{\mu_k\}_{k\in M}$ with $\nu_{k_0}=\gamma$. Then
%   \begin{align*}
%     \nu_k&\le\mu_k<\nu_{k+1}\qquad k<k_0\,,\\
%     \nu_{k_0}&\le\mu_{k_0}\le \nu_{k_0+1}\\
%     \nu_k&<\mu_k\le\nu_{k+1}\qquad k>k_0
%   \end{align*}
% and
% $\card((\sigma(J)\cap\sigma(\widetilde{J}_n))\setminus\{\gamma\}\le
% n$. Moreover, if $\gamma$ is in $\sigma(J)$, then $\nu_{k_0}=\mu_{k_0}$.
% \end{proposition}

% \begin{remark}
% \label{rem:true-interlacing}
% Clearly, if two real sequences $S$, $S'$ without finite accumulation
% points interlace, then one always can find $M$ and functions $f:M\to
% S$ and $f':M\to S'$ with the properties given in our convention (C) such
% that, for any $k\in M$, either
% \begin{equation*}
%   \lambda_k<\lambda_k'<\lambda_{k+1}\quad\text{ or }\quad
% \lambda_k'<\lambda_k<\lambda_{k+1}'\,,
% \end{equation*}
% where $\lambda_k=f(k)$ and $\lambda_k'=f'(k)$. If $S$ is not
% semi-bounded, then both possibilities hold simultaneously.
% \end{remark}

\begin{notation}
  Notation: Assume H\,1 and denote $J(\theta,h):=\widetilde{J}_n$ to
  emphasize the dependence on $\theta$ and $h$ of the operator
  $\widetilde{J}_n$ (recall that $\widetilde{J}_n$ is the operator
  whose matrix representation is given by \eqref{eq:jm-1}). Denote the
  spectrum of $J(\theta,h)$ by $\{\lambda_k(\theta,h)\}_{k\in
    M}$. Note that this sequence has no accumulation points for any
  fixed $\theta>0$ and $h\in\reals$.
 Let
  \begin{equation*}
    \pi(\theta,h)=\{\pi_j(\theta,h)\}_{j=1}^\infty
  \end{equation*}
  be the eigenvector of $J(\theta,h)$ corresponding to
  $\lambda_k(\theta,h)$ normalized as before (see
  \eqref{eq:spectral-equation}), i.\,e.,
  \begin{equation*}
    \inner{\delta_1}{\pi(\theta,h)}=1\,.
  \end{equation*}
\end{notation}
Note that the polynomial $\pi_n(\theta,h)$ of degree $n-1$ is
evaluated at the point $\lambda_k(\theta,h)$.
\begin{lemma}
  \label{lem:derivative}
 Assume  H\,1 and consider N\,1. For any $k\in M$, $\theta>0$ and
 $h\in\reals$, one has
  \begin{enumerate}[a)]
  \item
    \begin{equation*}
      \frac{d}{d\theta}\lambda_k(\theta,h)=\frac{2\pi_n}{\norm{\pi}^2}
\left(b_{n-1}\pi_{n-1}+b_{n}\pi_{n+1} + \pi_{n}\theta(q_n+h)\right)\,,
    \end{equation*}
where we have abbreviated $\pi:=\pi(\theta,h)$ and $\pi_n:=\pi_n(\theta,h)$.
  \item
    \begin{equation*}
      \frac{d}{dh}\lambda_k(\theta,h)=\frac{\theta^2\pi_n^2(\theta,h)}{\norm{\pi(\theta,h)}^2}\,.
    \end{equation*}
  \end{enumerate}
\end{lemma}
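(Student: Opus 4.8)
The plan is to read both identities as Hellmann--Feynman formulas and derive them from first-order perturbation theory. The underlying abstract fact is the following: if $A(s)$ is a self-adjoint operator depending on a real parameter $s$ with $s$-independent domain, and $\mu(s)$ is a simple eigenvalue with eigenvector $v(s)\in\dom(A(s))$, both depending differentiably on $s$, then differentiating $A(s)v(s)=\mu(s)v(s)$, taking the inner product with $v(s)$, and using self-adjointness together with $A(s)v(s)=\mu(s)v(s)$ to cancel the two terms containing $\frac{dv}{ds}$ (note that $\mu$ is real and $v$ may be taken real), one obtains
\begin{equation*}
  \frac{d\mu}{ds}=\frac{\inner{v}{\frac{dA}{ds}\,v}}{\norm{v}^2}\,.
\end{equation*}
Since the right-hand side is invariant under rescaling of $v$, this identity holds for the eigenvector normalized by $\inner{\delta_1}{\pi}=1$ as in N\,1, which is exactly why the factor $\norm{\pi}^{-2}$ appears.

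Second, I would justify the differentiability. As $\widetilde{J}_n-J$ is a bounded finite-rank operator whose coefficients in \eqref{eq:def-tilde-j} are polynomials in $\theta$ and $h$, the family $J(\theta,h)$ is analytic of type (A) with the constant domain $\dom(J)$. Under H\,1 each eigenvalue is isolated, and because $\widetilde{J}_n$ is again a Jacobi operator with strictly positive off-diagonal entries $\theta b_{n-1}$ and $\theta b_n$, it is simple (by the argument that established simplicity of $J$ through \eqref{eq:delta-k-through-delta-1}); hence every $\lambda_k(\theta,h)$ is a simple eigenvalue and, by Kato's analytic perturbation theory, it together with its eigenvector depends analytically on $(\theta,h)$. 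The normalization $\pi_1=1$ is admissible because an eigenvector of a Jacobi operator cannot have vanishing first component: solving the three-term recurrence forward from $f_1=0$ forces $f\equiv 0$.

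Third comes the purely computational step. Differentiating \eqref{eq:def-tilde-j} gives $\frac{\partial\widetilde{J}_n}{\partial h}=\theta^2\inner{\delta_n}{\cdot}\delta_n$, so that $\inner{\pi}{(\partial_h\widetilde{J}_n)\pi}=\theta^2\pi_n^2$ and b) follows immediately. Likewise,
\begin{equation*}
  \frac{\partial\widetilde{J}_n}{\partial\theta}
  =2\theta(q_n+h)\inner{\delta_n}{\cdot}\delta_n
  +b_n\bigl(\inner{\delta_n}{\cdot}\delta_{n+1}+\inner{\delta_{n+1}}{\cdot}\delta_n\bigr)
  +b_{n-1}\bigl(\inner{\delta_{n-1}}{\cdot}\delta_n+\inner{\delta_n}{\cdot}\delta_{n-1}\bigr)\,,
\end{equation*}
and pairing with $\pi$, using that the components $\pi_j$ are real (the matrix is real symmetric), yields
\begin{equation*}
  \inner{\pi}{(\partial_\theta\widetilde{J}_n)\pi}
  =2\theta(q_n+h)\pi_n^2+2b_n\pi_n\pi_{n+1}+2b_{n-1}\pi_{n-1}\pi_n
  =2\pi_n\bigl(b_{n-1}\pi_{n-1}+b_n\pi_{n+1}+\pi_n\theta(q_n+h)\bigr)\,,
\end{equation*}
which is a) after dividing by $\norm{\pi}^2$.

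The main obstacle is not the algebra but the rigorous justification of the cancellation in the first paragraph when $J$ is unbounded: one must know that both $v$ and $\frac{dv}{ds}$ lie in $\dom(J)$ so that $\inner{v}{A\,\frac{dv}{ds}}=\inner{Av}{\frac{dv}{ds}}=\mu\inner{v}{\frac{dv}{ds}}$ is legitimate. This is precisely what the type (A) analytic structure---a bounded finite-rank perturbation of $J$ with constant domain---provides, and it is the reason I would appeal to Kato's theory rather than attempt a bare-hands recurrence argument.
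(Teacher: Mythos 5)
Your proof is correct, and its computational core (the formulas for $\partial_\theta\widetilde{J}_n$ and $\partial_h\widetilde{J}_n$ and their pairing with $\pi$) is identical to the paper's; where you genuinely diverge is in the analytic justification of the limit. The paper never differentiates the eigenvector: it uses the exact two-point identity \eqref{eq:symmetric-relation}, $\inner{f}{(A'-A)f'}=(\lambda'-\lambda)\inner{f}{f'}$, applied to the pair $J(\theta)$, $J(\theta+\tau)$, which expresses the difference quotient $(\lambda_k(\theta+\tau)-\lambda_k(\theta))/\tau$ through the two eigenvectors $\pi(\theta)$ and $\pi(\theta+\tau)$, with no derivative of $\pi$ appearing; letting $\tau\to 0$ then requires only norm \emph{continuity} of $\tau\mapsto\pi(\theta+\tau)$, which the paper supplies via the Riesz--Nagy theorem on convergence of spectral projections (Proposition~\ref{prop:riesz-nagy}, Lemma~\ref{lem:continuity-eigenvectors}, Remark~\ref{rem:entries-eigenvectors}). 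You instead take the infinitesimal Hellmann--Feynman route, which does require differentiability of the eigenvector in the graph norm of $J$, and you purchase that from Kato's analytic perturbation theory for type (A) families, supported by two correct auxiliary observations: the eigenvalues of $\widetilde{J}_n$ are simple and the first component of an eigenvector cannot vanish (so the normalization of N\,1 is admissible), and closedness of $J$ together with the type (A) structure legitimizes $\frac{dv}{ds}\in\dom(J)$ and the cancellation $\inner{v}{A\frac{dv}{ds}}=\mu\inner{v}{\frac{dv}{ds}}$. What each approach buys: yours is shorter once Kato's machinery is accepted, and it yields analyticity (not merely differentiability) of $\lambda_k(\theta,h)$ for free; the paper's is more elementary and self-contained, trading the perturbation-theoretic input for an algebraic identity plus mere continuity of spectral projections, which is all the lemma (and its later use in Theorem~\ref{thm:absolute-convergence}) actually needs.
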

\begin{proof}
  Observe that if $A,A'$ are symmetric operators with the same domain
  and $Af=\lambda f$, $A'f'=\lambda'f'$, then
  \begin{equation}
    \label{eq:symmetric-relation}
    \inner{f}{(A'-A)f'}=\inner{f}{\lambda'f'}-\inner{f}{Af'}
=\inner{f}{\lambda'f'}-\inner{Af}{f'}=(\lambda'-\lambda)\inner{f}{f'}\,.
\end{equation}
Proof of a)\\
Since $h$ will remain fixed we do not write the dependence on
$h$. Pick any small real $\tau$. Since the domain of $J(\theta,h)$
does not depend on $\theta$, using \eqref{eq:symmetric-relation}, we
have, similar to \cite[Prop.\,3.1]{MR3377115},
\begin{equation}
  \label{eq:sym-relation-for-J}
  (\lambda_k(\theta+\tau)-\lambda_k(\theta))\inner{\pi(\theta}{\pi(\theta+\tau}=
 \inner{\pi(\theta)}{(J(\theta+\tau)-J(\theta))\pi(\theta+\tau)}\,.
\end{equation}
Let us calculate the inner product of the right hand side of the above
equality.

Note that
\begin{equation}
  \label{eq:difference-theta-tau-theta}
  J(\theta+\tau)-J(\theta)=
\begin{pmatrix}
0 & 0 & 0 & 0 & 0 & 0 & \cdots \\[1mm]
0 & \ddots & \ddots & 0 & 0 & 0 & \cdots \\[1mm]
0  &  \ddots  & 0 & \tau b_{n-1} & 0 & 0 & \cdots\\
0 & 0 & \tau b_{n-1} & (2\tau\theta+\tau^2)(q_n+h) & \tau b_n & 0 & \cdots\\
0 & 0 & 0 & \tau b_{n} & 0 & 0 & \\
0 & 0 & 0 & 0 & 0 & 0 & \ddots\\
\vdots & \vdots & \vdots & \vdots & & \ddots & \ddots
  \end{pmatrix}\,.
\end{equation}
Using the notation $\pi:=\pi(\theta)$, $\pi_n:=\pi_n(\theta)$,
$\pi':=\pi(\theta+\tau)$, $\pi'_n:=\pi'_n(\theta+\tau)$, one obtains
\begin{align*}
 & \inner{\pi}{(J(\theta+\tau)-J(\theta))\pi'}=\\
 &\tau\left(
b_{n-1}\pi_{n-1}\pi'_n+\pi_n\left[b_{n-1}\pi'_{n-1} +
  \pi'_{n}(2\theta+\tau)(q_n+h)+b_n\pi'_{n+1}\right]
+\pi_{n+1}\pi'_nb_n\,.
\right)
\end{align*}
It follows from \eqref{eq:sym-relation-for-J} that
\begin{align*}
  &\frac{\lambda_k(\theta+\tau)-\lambda_k(\theta)}{\tau}=\\
&\frac{b_{n-1}\pi_{n-1}\pi'_n+\pi_n\left[b_{n-1}\pi'_{n-1} +
  \pi'_{n}(2\theta+\tau)(q_n+h)+b_n\pi'_{n+1}\right]
+\pi_{n+1}\pi'_nb_n}{\inner{\pi}{\pi'}}\,.
\end{align*}
Taking the limit $\tau\to 0$ and using
Lemma~\ref{lem:continuity-eigenvectors} and Remark~\ref{rem:entries-eigenvectors} we get the
proof of a).

Proof of b)\\
We proceed similarly. Note that
\begin{equation*}
    J(h+\delta)-J(h)=
\begin{pmatrix}
0 & 0 & 0 & 0 & 0 & 0 & \cdots \\[1mm]
0 & \ddots & \ddots & 0 & 0 & 0 & \cdots \\[1mm]
0  &  \ddots  & 0 & 0 & 0 & 0 & \cdots\\
0 & 0 & 0 & \theta^2\delta & 0 & 0 & \cdots\\
0 & 0 & 0 & 0 & 0 & 0 & \\
0 & 0 & 0 & 0 & 0 & 0 & \ddots\\
\vdots & \vdots & \vdots & \vdots & & \ddots & \ddots
  \end{pmatrix}\,.
\end{equation*}
Therefore,
\begin{equation*}
  \inner{\pi(h)}{(J(h+\delta)-J(h))\pi(h+\delta)}=\theta^2\delta\pi_n(h)\pi_n(h+\delta)\,,
\end{equation*}
where since $\theta$ remains fixed, we have not written the dependence
on $\theta$.
Using \eqref{eq:symmetric-relation}, one has
\begin{equation*}
  \frac{\lambda_k(h+\delta)-\lambda_k(h)}{\delta}=
\frac{\theta^2\pi_n(h)\pi_n(h+\delta)}{\inner{\pi(h)}{\pi(h+\delta)}}\,.
\end{equation*}
Taking the limit $\delta\to 0$
and using Lemma~\ref{lem:continuity-eigenvectors} and
Remark~\ref{rem:entries-eigenvectors}, we conclude the proof of b)
\end{proof}
The following result is well known (see \cite[Last Thm. in Sec.135]{MR1068530}).
\begin{proposition}
  \label{prop:riesz-nagy}
  Let $\{A_n\}_{n=1}^\infty$ and $A$ be self-adjoint operators all
  having the same domain and such that
  \begin{equation*}
    \norm{A_n-A}\convergesto{n} 0\,.
  \end{equation*}
Let $E_n$ and $E$ be the corresponding spectral families. If
$I$ is the interval $(\mu_1,\mu_2)$ with
$\mu_1,\mu_2\not\in\sigma(A)$, then
 \begin{equation*}
    \norm{E_n(I)-E(I)}\convergesto{n} 0\,.
  \end{equation*}
\end{proposition}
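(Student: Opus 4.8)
The plan is to represent both spectral projections as Riesz contour integrals of the resolvent and to exploit that the norm convergence $\norm{A_n-A}\to 0$ forces norm convergence of the resolvents, uniformly on a suitable contour. First I would record that, because each operator is self-adjoint, $\sigma(A)\subset\reals$ and the portion of the spectrum inside the interval is harmless: since $\mu_1,\mu_2\notin\sigma(A)$ and $\sigma(A)$ is closed, the set $\sigma(A)\cap(\mu_1,\mu_2)$ is contained in a compact subinterval $[\mu_1+\delta,\mu_2-\delta]$ with $\delta>0$. I would then fix a positively oriented rectangular contour $\Gamma\subset\complex$ crossing the real axis exactly at $\mu_1$ and $\mu_2$ and enclosing this compact piece of spectrum. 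Because the two real crossing points lie outside $\sigma(A)$ and the rest of $\Gamma$ is off the real axis (hence off $\sigma(A)$), the distance $d_0:=\operatorname{dist}(\Gamma,\sigma(A))$ is strictly positive, $\Gamma$ being compact.

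The second step is the stability of the spectrum under the norm perturbation. For every $z\in\Gamma$ the self-adjoint resolvent bound gives $\norm{(zI-A)^{-1}}\le d_0^{-1}$. Writing $zI-A_n=(zI-A)\bigl(I-(zI-A)^{-1}(A_n-A)\bigr)$ and taking $n$ large enough that $\norm{A_n-A}<d_0$, a Neumann series shows that $z\notin\sigma(A_n)$ and that $\norm{(zI-A_n)^{-1}}\le\bigl(d_0-\norm{A_n-A}\bigr)^{-1}$, uniformly for $z\in\Gamma$. In particular $\Gamma\cap\sigma(A_n)=\emptyset$, so the Riesz integral $\frac{1}{2\pi i}\oint_\Gamma(zI-A_n)^{-1}\,dz$ is well defined; since $\mu_1,\mu_2\notin\sigma(A_n)$ the enclosed part of $\sigma(A_n)$ is precisely $\sigma(A_n)\cap(\mu_1,\mu_2)$, and the functional calculus identifies this integral with $E_n(I)$. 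The same identity for $A$ gives $E(I)=\frac{1}{2\pi i}\oint_\Gamma(zI-A)^{-1}\,dz$, where the assumption $\mu_1,\mu_2\notin\sigma(A)$ guarantees there are no endpoint eigenspaces contributing.

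Finally I would subtract the two integral representations and insert the resolvent identity
\[
(zI-A_n)^{-1}-(zI-A)^{-1}=(zI-A_n)^{-1}(A_n-A)(zI-A)^{-1}\,,
\]
obtaining $E_n(I)-E(I)=\frac{1}{2\pi i}\oint_\Gamma(zI-A_n)^{-1}(A_n-A)(zI-A)^{-1}\,dz$. Bounding the integrand with the two resolvent estimates above yields
\[
\norm{E_n(I)-E(I)}\le\frac{\abs{\Gamma}}{2\pi}\cdot\frac{\norm{A_n-A}}{d_0\bigl(d_0-\norm{A_n-A}\bigr)}\,,
\]
where $\abs{\Gamma}$ is the length of the contour, and the right-hand side tends to $0$. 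I expect the only delicate point to be the second step: one must guarantee that the single compact contour $\Gamma$ stays away from $\sigma(A_n)$ for all large $n$ simultaneously and that the associated Riesz integral really equals $E_n(I)$ rather than some nearby projection. This is exactly where both the self-adjointness (through the clean resolvent bound) and the hypothesis $\mu_1,\mu_2\notin\sigma(A)$ enter in an essential way.
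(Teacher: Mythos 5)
Your proof is correct, but it cannot be measured against an internal argument of the paper: the paper does not prove Proposition~\ref{prop:riesz-nagy} at all, it quotes it as well known and cites the last theorem in Sec.~135 of Riesz--Nagy \cite{MR1068530}. Your Riesz-projection argument is the standard self-contained alternative to that citation, and it buys three things the bare reference does not. First, it is quantitative: it yields the explicit bound
$\norm{E_n(I)-E(I)}\le\frac{\abs{\Gamma}}{2\pi}\,
\frac{\norm{A_n-A}}{d_0\left(d_0-\norm{A_n-A}\right)}$,
a Lipschitz-type estimate in $\norm{A_n-A}$. Second, it applies verbatim to \emph{unbounded} self-adjoint operators sharing a domain, which is precisely the situation in which the paper uses the proposition (under H\,1 the Jacobi operators $J(\theta,h)$ have discrete spectrum, hence are unbounded, and only their differences are bounded). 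Third, it makes transparent where each hypothesis enters: self-adjointness through $\norm{(zI-A)^{-1}}\le\mathrm{dist}(z,\sigma(A))^{-1}$, and $\mu_1,\mu_2\not\in\sigma(A)$ through the positivity of $d_0$ for a contour crossing the real axis at $\mu_1,\mu_2$. Two steps you assert without justification are standard but deserve a line each to be airtight. The identification $\frac{1}{2\pi i}\oint_\Gamma(zI-A_n)^{-1}\,dz=E_n(I)$ for a possibly unbounded self-adjoint $A_n$ follows by inserting the spectral representation $(zI-A_n)^{-1}=\int_\reals(z-t)^{-1}\,dE_n(t)$ and exchanging the two integrals; the inner contour integral equals $1$ on $\sigma(A_n)\cap(\mu_1,\mu_2)$ and $0$ on the rest of $\sigma(A_n)$, with no boundary ambiguity because $\mu_1,\mu_2\not\in\sigma(A_n)$ for large $n$. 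And in the factorization $zI-A_n=(zI-A)\bigl(I-(zI-A)^{-1}(A_n-A)\bigr)$ one must check that the solution $f$ of $(zI-A_n)f=g$ produced by the Neumann series lies in $\dom(A)$; this follows by bootstrapping, since $f=(zI-A)^{-1}g+(zI-A)^{-1}(A_n-A)f$ exhibits $f$ as an element of the range of $(zI-A)^{-1}$, that is, of $\dom(A)=\dom(A_n)$. With these two remarks your argument is complete.
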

We shall need the following result
\begin{lemma}
  \label{lem:continuity-eigenvectors}
Assume  H\,1 and consider N\,1.
  \begin{enumerate}[a)]
  \item For any fixed $h\in\reals$,
    $\norm{\pi(\theta+\tau,h))-\pi(\theta,h)}\convergestozero{\tau} 0$.
 \item For any fixed $\theta>0$,
  $\norm{\pi(\theta,h+\delta))-\pi(\theta,h)}\convergestozero{\delta} 0$.
  \end{enumerate}
\end{lemma}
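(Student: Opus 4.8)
The plan is to reduce both parts to the norm-convergence of the operator family combined with Proposition~\ref{prop:riesz-nagy}. Because the perturbation in \eqref{eq:def-tilde-j} is bounded, every $J(\theta,h)$ has the common domain $\dom(J)$, which is the standing hypothesis of Proposition~\ref{prop:riesz-nagy}. Reading off \eqref{eq:difference-theta-tau-theta}, the difference $J(\theta+\tau,h)-J(\theta,h)$ is a finite matrix whose only nonzero entries are $\tau b_{n-1}$, $\tau b_n$ and $(2\tau\theta+\tau^2)(q_n+h)$, so $\norm{J(\theta+\tau,h)-J(\theta,h)}\convergestozero{\tau}0$; likewise $\norm{J(\theta,h+\delta)-J(\theta,h)}\convergestozero{\delta}0$, since that difference is the rank-one matrix with single nonzero entry $\theta^2\delta$. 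I therefore treat a) and b) at once, writing $J_s$ for either one-parameter family and $E_s$ for its spectral resolution, with $J_0=J(\theta,h)$.

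First I would isolate the eigenvalue of interest. Set $\lambda:=\lambda_k(\theta,h)$; by H\,1 it is isolated in $\sigma(J_0)$, and since $\delta_1$ is a cyclic vector for the Jacobi matrix $J(\theta,h)$ the eigenvalue is simple. Choose $\mu_1<\lambda<\mu_2$ with $\mu_1,\mu_2\notin\sigma(J_0)$ and $(\mu_1,\mu_2)\cap\sigma(J_0)=\{\lambda\}$, and write $I:=(\mu_1,\mu_2)$. Then $E_0(I)$ is the orthogonal projection onto the one-dimensional space $\Span\{\pi(\theta,h)\}$. Proposition~\ref{prop:riesz-nagy} gives $\norm{E_s(I)-E_0(I)}\to0$ as $s\to0$; once this quantity is smaller than $1$, the two projections have equal rank, so for small $s$ the interval $I$ contains a single simple eigenvalue of $J_s$, which is the one I track as $\lambda_k(\,\cdot\,)$, and its eigenspace is $\Span\{\pi_s\}$, where $\pi_s$ is the eigenvector normalized by $\inner{\delta_1}{\pi_s}=1$.

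The decisive step is to convert convergence of projections into convergence of the normalized eigenvectors. Put $u_s:=E_s(I)\delta_1$, a scalar multiple of $\pi_s$. Every eigenvector of a Jacobi matrix has nonzero first entry---if $f_1=0$ then $b_1f_2=0$ forces $f_2=0$, and the recurrence propagates this to $f\equiv0$---so $\inner{\delta_1}{u_s}\neq0$, and because the normalization fixes the scale, $\pi_s=u_s/\inner{\delta_1}{u_s}$. From $\norm{E_s(I)-E_0(I)}\to0$ I obtain $u_s\to u_0$ in $l_2(\nats)$, hence $\inner{\delta_1}{u_s}\to\inner{\delta_1}{u_0}\neq0$, and dividing yields $\pi_s\to u_0/\inner{\delta_1}{u_0}=\pi(\theta,h)$ in norm. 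Taking $s=\tau$ proves a) and $s=\delta$ proves b).

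The main obstacle is controlling the renormalization in the limit, i.e.\ ensuring that $\inner{\delta_1}{u_s}$ stays bounded away from $0$; this is precisely where the cyclicity of $\delta_1$ (equivalently, the nonvanishing of the first entry of each eigenvector) enters, and without it the rescaling $u_s\mapsto u_s/\inner{\delta_1}{u_s}$ could diverge. A secondary subtlety is the identification of the eigenvalue sitting in $I$ for small $s$, which I would derive from the rank-stability of $E_s(I)$ under $\norm{E_s(I)-E_0(I)}<1$ rather than from any prior continuity of $\lambda_k$, thereby avoiding circularity with Lemma~\ref{lem:derivative}.
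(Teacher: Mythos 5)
Your proof is correct and follows essentially the same route as the paper's: norm convergence of $J(\theta+\tau,h)$ to $J(\theta,h)$ (read off the finite difference matrix), Proposition~\ref{prop:riesz-nagy} applied to an interval $I$ isolating $\lambda_k(\theta,h)$, and the representation of the normalized eigenvector as $E(I)\delta_1$ rescaled so that its first entry equals $1$. If anything, you supply two details the paper leaves implicit --- the rank-stability argument identifying the unique eigenvalue of $J_s$ in $I$ for small $s$, and the nonvanishing of $\inner{\delta_1}{E_s(I)\delta_1}$ via the fact that eigenvectors of a Jacobi matrix have nonzero first entry --- so your write-up is a slightly more careful version of the same proof.
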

\begin{proof}
  We prove a), the proof of b) is analogous. From
  \eqref{eq:difference-theta-tau-theta} we see that
  \begin{equation*}
    \norm{J(\theta+\tau)-J(\theta)}\convergestozero{\tau} 0\,.
  \end{equation*}
Let $I$ be the interval  $(\mu_1,\mu_2)$ with
$\mu_1,\mu_2\not\in\sigma(J(\theta))$ and such that
$\sigma(J(\theta))\cap I=\{\lambda(\theta\}$. By the definition of
$\pi(\theta)$, we have
\begin{equation*}
  \pi(\theta)=kE_\theta(I)\delta_1\,,
\end{equation*}
where $k$ is a constant and $E_\theta$ is the spectral family of
$J(\theta)$. Since $1=\inner{kE_\theta(I)\delta_1}{\delta_1}$, one has
\begin{equation*}
  k=\frac{1}{\norm{E_\theta(I)}^2}\,.
\end{equation*}
Thus, if $\tau$ is small enough, then
\begin{equation*}
   \pi(\theta+\tau)=\frac{1}{\norm{E_{\theta+\tau}(I)}^2}E_{\theta+\tau}(I)\delta_1\,.
\end{equation*}
It follows from Proposition~\ref{prop:riesz-nagy} that
\begin{equation*}
  E_{\theta+\tau}(I)\convergestozeros{\tau}{\norm{\cdot}}E_\theta(I)\,,
\end{equation*}
therefore
\begin{equation*}
  \norm{\pi(\theta+\tau,h))-\pi(\theta,h)}\convergestozero{\tau} 0\,.
\end{equation*}
\end{proof}
\begin{remark}
  \label{rem:entries-eigenvectors}
Recalling that $\pi_n(\theta,h)$ is the polynomial of first kind
evaluated at $\lambda_k(\theta,h)$, one has, for fixed $h$,
\begin{equation*}
 \abs{\pi_n(\theta,h)-\pi_n(\theta+\tau,h)}\convergestozero{\tau} 0\,.
\end{equation*}
Indeed,
\begin{align*}
  \abs{\pi_n(\theta,h)-\pi_n(\theta+\tau,h)}&=\abs{\inner{\pi(\theta,h)}{\delta_n}-\inner{\pi(\theta+\tau,h)}{\delta_n}}\\
  &=\abs{\inner{\pi(\theta,h)-\pi(\theta+\tau,h)}{\delta_n}}\\
&\le\norm{\pi(\theta,h)-\pi(\theta+\tau,h)}\convergestozero{\tau}0\,.
\end{align*}
Analogously, for fixed $\theta$,
\begin{equation*}
 \abs{\pi_n(\theta,h)-\pi_n(\theta, h+\delta)}\convergestozero{\delta} 0\,.
\end{equation*}
\end{remark}
\begin{lemma}
  \label{lem:uniform-convergence-of-moments}
 Assume  H\,1 and consider N\,1.
  For fixed $h\in\reals$ and $n\in\nats$,
  \begin{equation}
    \label{eq:series-of-moments}
    \sum_{k\in M}\frac{\abs{\lambda_k(\theta,h)}^m}{\alpha_k(\theta,h)}
  \end{equation}
converges uniformly for $\theta\in [\theta_1,\theta_2]$. Moreover, for
fixed $\theta>0$, it converges uniformly for $h\in[h_1,h_2]$, where
\begin{equation*}
  \alpha_k(\theta,h):=\norm{\pi(\theta,h)}^2
\end{equation*}
is the normalizing constants corresponding to $\lambda_k(\theta,h)$.
\end{lemma}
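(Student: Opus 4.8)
The plan is to recognize the series \eqref{eq:series-of-moments} as a family of moments of the spectral measure of $J(\theta,h)=\widetilde{J}_n$ and to dominate its tail, uniformly in the parameter, by a higher even moment that turns out to depend polynomially on $(\theta,h)$. First I would record the spectral meaning of the normalizing constants. Since $J(\theta,h)$ is a Jacobi operator for which $\delta_1$ is cyclic (cf.\ \eqref{eq:delta-k-through-delta-1}), its spectral measure at $\delta_1$ is the atomic measure $\widetilde\rho_{\theta,h}=\sum_{k\in M}w_k(\theta,h)\,\delta_{\lambda_k(\theta,h)}$. As $\pi(\theta,h)$ is the eigenvector normalized by $\inner{\delta_1}{\pi(\theta,h)}=1$, i.e.\ with first entry equal to $1$, its unit multiple is $\pi(\theta,h)/\norm{\pi(\theta,h)}$, so the atom at $\lambda_k(\theta,h)$ equals $w_k(\theta,h)=\norm{\pi(\theta,h)}^{-2}=\alpha_k(\theta,h)^{-1}$. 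Hence, for every integer $p\ge 0$,
\begin{equation*}
  \sum_{k\in M}\frac{\lambda_k(\theta,h)^{2p}}{\alpha_k(\theta,h)}
  =\int_\reals t^{2p}\,d\widetilde\rho_{\theta,h}(t)
  =\inner{\delta_1}{J(\theta,h)^{2p}\delta_1}\,,
\end{equation*}
the last equality being legitimate because $\delta_1$ has finite support, whence $J(\theta,h)^p\delta_1$ is again finitely supported and lies in $l_2(\nats)$.

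Next I would note that this even moment depends polynomially, hence continuously, on the parameters. The matrix of $J(\theta,h)$ differs from that of $J$ only in the entries indexed by $n-1,n,n+1$, and $\inner{\delta_1}{J(\theta,h)^{2p}\delta_1}$ is a finite sum, over tridiagonal paths of length $2p$ that start and end at site $1$, of products of matrix entries; only the finitely many paths reaching the modified block carry the $\theta$- and $h$-dependence. Thus $(\theta,h)\mapsto\inner{\delta_1}{J(\theta,h)^{2p}\delta_1}$ is a polynomial, and on the compact interval $[\theta_1,\theta_2]$ (with $h$ fixed) it is bounded by a constant $C_p$; likewise it is bounded on $[h_1,h_2]$ for fixed $\theta$.

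The heart of the argument is to turn this into a uniform tail bound for \eqref{eq:series-of-moments}. Fix an integer $p$ with $2p>m$. If $\abs{\lambda_k(\theta,h)}>R$ then $\abs{\lambda_k(\theta,h)}^m\le R^{\,m-2p}\abs{\lambda_k(\theta,h)}^{2p}$, so
\begin{equation*}
  \sum_{\substack{k\in M\\ \abs{\lambda_k(\theta,h)}>R}}\frac{\abs{\lambda_k(\theta,h)}^m}{\alpha_k(\theta,h)}
  \le R^{\,m-2p}\sum_{k\in M}\frac{\lambda_k(\theta,h)^{2p}}{\alpha_k(\theta,h)}
  \le C_p\,R^{\,m-2p}\,,
\end{equation*}
and the right-hand side tends to $0$ as $R\to\infty$, uniformly in $\theta\in[\theta_1,\theta_2]$. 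It then remains to show that the indices with $\abs{\lambda_k(\theta,h)}\le R$ are absorbed by $M_N$ for $N$ large, uniformly in the parameter; this is the step I expect to be the main obstacle. Here the finite rank of the perturbation is decisive: $J(\theta,h)-J$ has rank at most three (see Section~\ref{sec:jacobi_operators}), so the eigenvalue-counting functions of $J(\theta,h)$ and of the fixed operator $J$ differ by at most three at every cut-off, a bound independent of $(\theta,h)$. Since $\sigma(J)$ is discrete by H\,1 its eigenvalues escape to infinity; and because $\delta_1$ is cyclic the eigenvalues of $J(\theta,h)$ are simple and therefore never collide as the parameter varies, so the increasing enumeration is consistent and continuous. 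Together these facts force $\abs{\lambda_k(\theta,h)}\to\infty$ as $\abs{k}\to\infty$, uniformly on the compact parameter interval, whence for each $R$ there is $N(R)$ with $\{k\in M:\abs{\lambda_k(\theta,h)}\le R\}\subset M_N$ for all $N\ge N(R)$ and all $\theta\in[\theta_1,\theta_2]$.

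Finally I would assemble the estimates. Given $\varepsilon>0$, choose $R$ with $C_pR^{\,m-2p}<\varepsilon$ and then $N(R)$ as above; for every $N\ge N(R)$ and every $\theta\in[\theta_1,\theta_2]$ one has $M\setminus M_N\subset\{k:\abs{\lambda_k(\theta,h)}>R\}$, so the tail is majorized by
\begin{equation*}
  \sum_{k\in M\setminus M_N}\frac{\abs{\lambda_k(\theta,h)}^m}{\alpha_k(\theta,h)}
  \le\sum_{\substack{k\in M\\ \abs{\lambda_k(\theta,h)}>R}}\frac{\abs{\lambda_k(\theta,h)}^m}{\alpha_k(\theta,h)}
  \le C_p\,R^{\,m-2p}<\varepsilon\,,
\end{equation*}
which is exactly the asserted uniform convergence in $\theta$. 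The statement for fixed $\theta$ and $h\in[h_1,h_2]$ follows verbatim, since both the rank bound against the fixed operator $J$ and the polynomial dependence of $\inner{\delta_1}{J(\theta,h)^{2p}\delta_1}$ carry over to the variable $h$.
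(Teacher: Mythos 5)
Your overall skeleton is sound and is genuinely different from the paper's argument. You identify \eqref{eq:series-of-moments} with moments of the spectral measure of $J(\theta,h)$ at $\delta_1$ (weights $1/\alpha_k$), observe that the even moments $\inner{\delta_1}{J(\theta,h)^{2p}\delta_1}$ are polynomials in the parameters, and then run a Chebyshev-type tail estimate. The paper shares the first two steps but extracts uniformity quite differently: for even $m$ the partial sums are continuous, nonnegative and increase to the continuous limit $s_m(\theta)=\inner{\delta_1}{J(\theta)^m\delta_1}$, so Dini's theorem gives uniform convergence on the compact interval, and odd $m$ is dominated by even $m+1$. The advantage of Dini's route is that it never requires any knowledge of \emph{where} the eigenvalues sit; your route buys a quantitative tail bound, but at the price of needing exactly the uniform localization statement you flag as ``the main obstacle''.

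That is where your proof has a genuine gap. The justification you offer for the uniform containment of $\{k:\abs{\lambda_k(\theta,h)}\le R\}$ in a fixed finite set --- rank three, hence ``eigenvalue-counting functions differ by at most three at every cut-off'', plus simplicity and continuity of the branches --- does not deliver it. First, under H\,1 the spectrum may be unbounded below as well as above (the paper explicitly treats non-semibounded $J$ in Proposition~\ref{prop:m-throu-zeros-poles}), so a counting function at a cut-off is not even defined; the correct rank statement is that the counts in any \emph{bounded} interval differ by at most three. Second, that interval bound, together with order preservation and no collisions, does not exclude a ``cascade'' in which infinitely many branches pass through the window $[-R,R]$ at different parameter values, a bounded number at a time: each fixed parameter respects every counting bound, yet $\bigcup_{\theta}\{k:\abs{\lambda_k(\theta,h)}\le R\}$ is infinite. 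What is missing is equicontinuity (a uniform displacement bound) for the family of branches, and none of the facts you cite provides it. The gap is fixable with material already in the paper: Lemma~\ref{lem:derivative}\,a) exhibits $\frac{d}{d\theta}\lambda_k(\theta,h)$ as the quadratic form of $\frac{d}{d\theta}J(\theta,h)$ on the normalized eigenvector, whence
\begin{equation*}
  \abs{\tfrac{d}{d\theta}\lambda_k(\theta,h)}\le\norm{\tfrac{d}{d\theta}J(\theta,h)}\le C
  \quad\text{uniformly in }k\in M,\ \theta\in[\theta_1,\theta_2]\,,
\end{equation*}
so integrating gives $\abs{\lambda_k(\theta,h)-\lambda_k(\theta_1,h)}\le C(\theta_2-\theta_1)$ and therefore $\{k:\abs{\lambda_k(\theta,h)}\le R\}\subset\{k:\abs{\lambda_k(\theta_1,h)}\le R+C(\theta_2-\theta_1)\}$, a finite parameter-independent set (for the $h$-statement use part b), where the derivative is bounded by $\theta^2$). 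Alternatively, one can use that $\lambda_k(\theta,h)$ is trapped between consecutive zeros of $G(z,n)$, which do not move with the parameters by Proposition~\ref{prop:zero-green-zero-perturbed}; this is precisely the trapping the paper exploits later in the proof of Proposition~\ref{prop:m-throu-zeros-poles}. With either repair, your argument goes through.
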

\begin{proof}
  We consider the case when $h$
  is fixed. The case of $\theta$
  fixed is analogous. If $m$
  is even, the series \eqref{eq:series-of-moments} converges pointwise
  in $\theta$
  to the momentum $s_m:=\inner{\delta_1}{J^m(\theta)\delta_1}$
  which is a continuous function of $\theta$
  since it is the first entry of the matrix associated to
  $J(\theta)^m$.
  The terms of the series \eqref{eq:series-of-moments} are continuous
  in $\theta$
  as follows from Lemma~\ref{lem:derivative} and
  \ref{lem:continuity-eigenvectors}. Therefore, the series is
  uniformly convergent in an interval $[\theta_1,\theta_2]$
  (see \cite[Sec.\,1.31]{MR3155290}. If $m$ is odd
  \begin{equation*}
    \abs{\frac{\lambda_k(\theta,h)^m}{\alpha_k(\theta,h)}}\le
\frac{\lambda_k(\theta,h)^{m+1}}{\alpha_k(\theta,h)}
  \end{equation*}
  whenever $\abs{\lambda_k(\theta,h)}>1$.
  Hence, the series \eqref{eq:series-of-moments} also converges
  uniformly in this case for $\theta\in[\theta_1,\theta_2]$.
\end{proof}
\begin{theorem}
  \label{thm:absolute-convergence}
  Assume H\,1, consider N\,1, and abbreviate
  $\lambda_k:=\lambda_k(1,0)$ and $\mu_k:=\lambda_k(\theta,h)$. Then
  \begin{equation}
    \label{eq:absolute-sum}
    \sum_{n\in M}\abs{\mu_k-\lambda_k} <\infty\,.
  \end{equation}
\end{theorem}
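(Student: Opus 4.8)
The plan is to connect the two eigenvalue families by the fundamental theorem of calculus along a path in parameter space, and then to dominate the resulting sum of integrals by the moment series controlled in Lemma~\ref{lem:uniform-convergence-of-moments}. Since $\lambda_k=\lambda_k(1,0)$ and $\mu_k=\lambda_k(\theta,h)$, I would move from $(1,0)$ to $(\theta,h)$ in two segments, first varying the first argument from $1$ to $\theta$ with the second fixed at $0$, then varying the second from $0$ to $h$ with the first fixed at $\theta$, writing
\begin{equation*}
  \mu_k-\lambda_k
  =\int_1^\theta\frac{d}{ds}\lambda_k(s,0)\,ds
  +\int_0^h\frac{d}{dt}\lambda_k(\theta,t)\,dt\,.
\end{equation*}
By Lemma~\ref{lem:derivative} both integrands are explicit, so the triangle inequality gives
\begin{equation*}
  \sum_{k\in M}\abs{\mu_k-\lambda_k}
  \le\sum_{k\in M}\int_{I_\theta}\abs{\tfrac{d}{ds}\lambda_k(s,0)}\,ds
  +\sum_{k\in M}\int_{I_h}\abs{\tfrac{d}{dt}\lambda_k(\theta,t)}\,dt\,,
\end{equation*}
where $I_\theta$ and $I_h$ denote the compact intervals joining $1$ to $\theta$ and $0$ to $h$, respectively.

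The key observation is that the two derivatives in Lemma~\ref{lem:derivative} are quotients whose denominator is exactly the normalizing constant $\alpha_k=\norm{\pi}^2$ appearing in Lemma~\ref{lem:uniform-convergence-of-moments}, and whose numerator is a product of the quantities $\pi_{n-1},\pi_n,\pi_{n+1}$. Since each $\pi_j(\cdot,\cdot)$ is the value at the corresponding eigenvalue of a polynomial of degree $j-1$ whose coefficients depend continuously on the parameters, over the compact intervals $I_\theta$, $I_h$ one obtains uniform majorants of the form
\begin{equation*}
  \abs{\tfrac{d}{ds}\lambda_k(s,0)}\le\frac{C\bigl(1+\abs{\lambda_k(s,0)}^{2n}\bigr)}{\alpha_k(s,0)}\,,
  \qquad
  \abs{\tfrac{d}{dt}\lambda_k(\theta,t)}\le\frac{C\bigl(1+\abs{\lambda_k(\theta,t)}^{2n}\bigr)}{\alpha_k(\theta,t)}\,,
\end{equation*}
with a constant $C$ independent of $k$ and of the parameter ranging over the relevant interval; here the exponent $2n$ is a crude but sufficient bound on the degree of the polynomial products occurring in the numerators.

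Because these majorants are nonnegative, I would invoke Tonelli's theorem to interchange the sum over $k$ with the integral over the finite interval, reducing the problem to the boundedness, for the parameter in $I_\theta$ (respectively $I_h$), of the series $\sum_{k\in M}\alpha_k^{-1}\bigl(1+\abs{\lambda_k}^{2n}\bigr)$. This is precisely what Lemma~\ref{lem:uniform-convergence-of-moments} furnishes, applied with $m=0$ and $m=2n$: it yields uniform convergence on compact parameter intervals, so the inner sum is a bounded function of the integration variable, and integrating it over the finite intervals $I_\theta$ and $I_h$ makes both terms finite, giving \eqref{eq:absolute-sum}. The main obstacle is the passage from the raw derivative formulas of Lemma~\ref{lem:derivative} to these moment-type majorants: one must verify that the polynomial degree of each numerator is bounded uniformly in $k$ and that the polynomial coefficients are uniformly controlled on the compact parameter set, so that a single application of Lemma~\ref{lem:uniform-convergence-of-moments} with fixed $m$ suffices. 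Once the majorant is in place, the interchange of summation and integration and the consistency of the enumeration $k\mapsto\lambda_k(\theta,h)$ along the path are comparatively routine.
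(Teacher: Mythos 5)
Your proposal is correct and follows essentially the same route as the paper's proof: integrate the derivative formulas of Lemma~\ref{lem:derivative} along a two-segment path in $(\theta,h)$-space, then use the uniform convergence of the moment series from Lemma~\ref{lem:uniform-convergence-of-moments} to interchange the sum over $k$ with the integral over a compact parameter interval. The only differences are cosmetic: you traverse the path in the opposite order ($(1,0)\to(\theta,0)\to(\theta,h)$ versus the paper's $(1,0)\to(1,h)\to(\theta,h)$), and you justify the interchange via explicit polynomial-degree majorants and Tonelli, where the paper passes the limit of partial sums inside the integral directly by uniform convergence.
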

\begin{proof}
  Let $A=[a,b]$, where $a:=\min\{\theta,1\}$,
  $b:=\max\{\theta,1\}$. By Lemma~\ref{lem:derivative} a), letting $h$
  fixed, we have
  \begin{equation*}
    \abs{\lambda_k(\theta)-\lambda_k(1)}\le
\int_A\frac{2\abs{\pi_n}}{\norm{\pi}^2}(\abs{b_{n-1}\pi_{n-1}}+\abs{b_{n}\pi_{n+1}} + \abs{\pi_{n}\theta(q_n+h)})d\theta\,.
  \end{equation*}
Taking a sequence $\{M_m\}$ as before, we get
\begin{equation*}
  \sum_{k\in M}\abs{\lambda_k(\theta)-\lambda_k(1)}\le
  2 \lim_{m\to\infty}\int_A\sum_{k\in M_m}\frac{\abs{\pi_n}}{\norm{\pi}^2}(\abs{b_{n-1}\pi_{n-1}}+\abs{b_{n}\pi_{n+1}} + \abs{\pi_{n}\theta(q_n+h)})d\theta\,.
\end{equation*}
Using Lemma~\ref{lem:uniform-convergence-of-moments}, and recalling
that $\pi_n$ are polynomials in $\lambda_k(\theta)$, we can get the
limit inside the integral and we obtain
\begin{equation*}
  \sum_{k\in M}\abs{\lambda_k(\theta)-\lambda_k(1)}<\infty
\end{equation*}
since the function inside the integral is continuous and $A$ is
compact. Note that we have fixed $h$.

Let now $B:=[c,d]$,
 where $c:=\min\{0,h\}$
and $d:\max\{0,h\}$.
Then, it follows from Lemma~\ref{lem:derivative} b) that
\begin{equation*}
  \abs{\lambda_k(1,h)-\lambda_k(1,0)}=\int_B\frac{\pi_n(1,h)^2}{\norm{\pi(1,h)}^2}dh\,.
\end{equation*}
In turn, one has
\begin{equation*}
  \sum_{k\in M}\abs{\lambda_k(1,h)-\lambda_k(1,0)}\le
  \lim_{m\to\infty}\int_A\sum_{k\in M_m}\frac{\abs{\pi_n(1,h)}}{\norm{\pi(1,h)}^2}dh\,.
\end{equation*}
We get the limit inside the integral since the convergence is uniform
by Lemma~\ref{lem:uniform-convergence-of-moments} and recalling that
$\pi_n$ is a polynomial evaluated at $\lambda_k(1,h)$. Therefore
\begin{equation*}
  \sum_{k\in M}\abs{\lambda_k(1,h)-\lambda_k(1,0)}\le
  \int_A\sum_{k\in M}\frac{\abs{\pi_n(1,h)}}{\norm{\pi(1,h)}^2}dh<\infty\,.
\end{equation*}
Now, apply the triangle inequality to obtain
\begin{equation*}
  \sum_{k\in M}\abs{\lambda_k(\theta,h)-\lambda_k(1,0)}\le\sum_{k\in
    M}\abs{\lambda_k(\theta,h)-\lambda_k(1,h)} +
\sum_{k\in M}\abs{\lambda_k(1,h)-\lambda_k(1,0)}\,.
\end{equation*}
\end{proof}
The previous result is related to \cite[Thm.\,II]{MR900507} which
states that the result holds for some enumeration.
\begin{proposition}
  \label{prop:zeros-poles-m}
  Assume H\,1.  The set of zeros of $\mathfrak{M}_k(z)$ coincides with
  $\sigma(\widetilde{J}_n)\setminus\sigma(J)$, while the set of poles
  is $\sigma(J)\setminus\sigma(\widetilde{J}_n)$.
\end{proposition}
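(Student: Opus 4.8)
The plan is to read off the zeros and poles of $\mathfrak{M}_k=G(z,k)/\widetilde{G}(z,k)$ from the divisors of the two Green functions. Under H\,1 both $G(\cdot,k)$ and $\widetilde{G}(\cdot,k)$ are meromorphic Herglotz functions, so by Remark~\ref{rem:zeros-poles-hm} all their zeros and poles are real and simple and interlace. Writing $v_\lambda(f)$ for the order of $f$ at a real point $\lambda$ (positive at a zero, negative at a pole, and zero otherwise), one has $v_\lambda(\mathfrak{M}_k)=v_\lambda(G(\cdot,k))-v_\lambda(\widetilde{G}(\cdot,k))$. The first observation is that the zeros of numerator and denominator cancel: by Proposition~\ref{prop:zero-green-zero-perturbed} the zero sets of $G(\cdot,k)$ and $\widetilde{G}(\cdot,k)$ coincide, and since both zeros are simple, $v_\lambda(G(\cdot,k))=v_\lambda(\widetilde{G}(\cdot,k))=1$ at every such point, whence $v_\lambda(\mathfrak{M}_k)=0$ there. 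Consequently the poles of $\mathfrak{M}_k$ can only come from poles of $G(\cdot,k)$ and its zeros only from poles of $\widetilde{G}(\cdot,k)$.

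Next I locate these poles. By (\ref{eq:green-integral}) the poles of $G(\cdot,k)$ form a subset of $\sigma(J)$, precisely those eigenvalues $\lambda$ at which $\pi_k(\lambda)\ne 0$, the remaining eigenvalues being removable singularities; likewise the poles of $\widetilde{G}(\cdot,k)$ are contained in $\sigma(\widetilde{J}_n)$. It then remains to match the divisor of $\mathfrak{M}_k$ with the two set differences, which I would do pointwise. If $\lambda\in\sigma(J)\setminus\sigma(\widetilde{J}_n)$ is a genuine pole of $G(\cdot,k)$, then $\widetilde{G}(\lambda,k)$ is finite (as $\lambda\notin\sigma(\widetilde{J}_n)$) and nonzero (a pole of $G(\cdot,k)$ is not a zero, hence not a zero of $\widetilde{G}(\cdot,k)$ by Proposition~\ref{prop:zero-green-zero-perturbed}), so $v_\lambda(\mathfrak{M}_k)=-1$ and $\lambda$ is a pole of $\mathfrak{M}_k$; the dual argument gives a zero of $\mathfrak{M}_k$ at each genuine pole of $\widetilde{G}(\cdot,k)$ lying in $\sigma(\widetilde{J}_n)\setminus\sigma(J)$. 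Common eigenvalues are harmless: by Proposition~\ref{prop:common-eigenvalues-+-} every common eigenvalue different from $\gamma$ is a zero of $G(\cdot,n)$, hence, again by Proposition~\ref{prop:zero-green-zero-perturbed}, a common zero of numerator and denominator, so it cancels and produces neither a zero nor a pole of $\mathfrak{M}_k$, consistently with its exclusion from both set differences.

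The main obstacle is the bookkeeping of removable singularities: I must show that \emph{every} point of $\sigma(J)\setminus\sigma(\widetilde{J}_n)$ is in fact a genuine pole of $G(\cdot,k)$ (equivalently, that an eigenvalue $\lambda$ of $J$ with $\pi_k(\lambda)=0$ necessarily lies in $\sigma(\widetilde{J}_n)$), and dually for $\widetilde{J}_n$; otherwise such a ``hidden'' eigenvalue would vanish from the divisor of $\mathfrak{M}_k$ and break the claimed equalities. Here I expect to use the master formula (\ref{eq:master}) together with Proposition~\ref{prop:common-eigenvalues-+-}: a removable singularity $\lambda$ of $G(\cdot,n)$ that is an eigenvalue of $J$ is a zero of $G(\cdot,n)$, and one identifies the zeros of $G(\cdot,n)$ lying in $\sigma(J)$ exactly with the common eigenvalues, so that the only eigenvalues of $J$ swallowed by removable singularities are the common ones. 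Combining this identification with the interlacing of Remark~\ref{rem:zeros-poles-hm} pins the remaining poles of $\mathfrak{M}_k$ to $\sigma(J)\setminus\sigma(\widetilde{J}_n)$ and its zeros to $\sigma(\widetilde{J}_n)\setminus\sigma(J)$. A minor point to verify separately is the behaviour at $z=\gamma$, where the factor $(\gamma-z)$ in (\ref{eq:master}) could in principle alter the order of $\mathfrak{M}_k$.
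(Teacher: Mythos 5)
Your divisor bookkeeping is sound and you have isolated the right crux, but the way you propose to close it does not work as cited, so there is a genuine gap. You need the statement: an eigenvalue $\lambda$ of $J$ at which $G(\cdot,n)$ has a removable singularity necessarily lies in $\sigma(\widetilde{J}_n)$ (and the dual statement for $\widetilde{J}_n$). For this you invoke Proposition~\ref{prop:common-eigenvalues-+-} to ``identify the zeros of $G(\cdot,n)$ lying in $\sigma(J)$ exactly with the common eigenvalues''. But that proposition, as stated, gives only the opposite inclusion: common eigenvalues (different from $\gamma$) are zeros of $G(\cdot,n)$. The inclusion you actually need --- eigenvalue of $J$ plus zero (or removable singularity) of $G(\cdot,n)$ implies common eigenvalue --- is not its statement; it is buried inside its proof via the external results \cite[Lems.\,2.8, 2.9, Cor.\,2.3]{MR3377115}, which is precisely what the paper's own proof of Proposition~\ref{prop:zeros-poles-m} leans on (together with \cite[Cor.\,3.1, Lems.\,2.2, 3.2, 3.4, Thm.\,3.3]{MR3377115}). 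The master formula \eqref{eq:master} cannot substitute for it: at a removable singularity it only tells you that $\mathfrak{M}_n(\lambda)$ is finite, not that $\lambda\in\sigma(\widetilde{J}_n)$. The missing step does, however, admit an elementary self-contained proof you could insert: by \eqref{eq:green-integral} the residue of $G(\cdot,n)$ at an eigenvalue $\lambda$ equals $-\pi_n^2(\lambda)\rho\{\lambda\}$, so removability forces $\pi_n(\lambda)=0$; then $(J\pi(\lambda))_n=\lambda\pi_n(\lambda)=0$ gives $b_{n-1}\pi_{n-1}(\lambda)+b_n\pi_{n+1}(\lambda)=0$, and a glance at \eqref{eq:def-tilde-j} shows that the perturbation annihilates $\pi(\lambda)$, so $\pi(\lambda)$ is an eigenvector of $\widetilde{J}_n$ and $\lambda$ is a common eigenvalue.

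Second, the behaviour at $\gamma$ is not the ``minor point'' you defer: it is the other place where the claimed equality could fail, and the paper spends a separate citation on it (\cite[Thm.\,3.3]{MR3377115}). If $\gamma\in\sigma(J)$ is a genuine pole of $G(\cdot,n)$, the factor $(\gamma-z)$ in \eqref{eq:master} cancels that pole, so $\mathfrak{M}_n$ is finite at $\gamma$ no matter what; hence the asserted equality of the pole set with $\sigma(J)\setminus\sigma(\widetilde{J}_n)$ silently requires that $\gamma\in\sigma(J)$ implies $\gamma\in\sigma(\widetilde{J}_n)$. This is true but needs its own argument: since $\mathfrak{M}_n=G/\widetilde{G}$ has order $\ge 0$ at $\gamma$ while $G$ has a simple pole there, $\widetilde{G}$ must have a pole at $\gamma$ as well, i.e.\ $\gamma\in\sigma(\widetilde{J}_n)$. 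With these two repairs your argument becomes a complete alternative to the paper's proof, which is essentially a chain of citations to \cite{MR3377115} packaging exactly these two issues.
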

\begin{proof}
  It follows from \cite[Cor.\,3.1]{MR3377115} that the zeros
  and poles of $\mathfrak{M}_k(z)$ are given by the poles of
  $\widetilde{G}(z,k)$ and $G(z,k)$ respectively. By
  Definition~\ref{def:green-function} any pole of $G(z,k)$ is an
  eigenvalue of $J$. If a pole of $G$ different from $\gamma$ were an
  eigenvalue of $\widetilde{J}_n$, then a contradiction follows from
  \cite[Lem.\,3.2]{MR3377115}. Note that, as a consequence
  of \cite[Thm.\,3.3]{MR3377115}, $\gamma$ is not a pole of
  $\mathfrak{M}_k(z)$. We have established that the poles of
  $\mathfrak{M}_k(z)$ are in
  $\sigma(J)\setminus\sigma(\widetilde{J}_n)$. The converse inclusion
  follows directly from \cite[Lem.\,2.2]{MR3377115}. By
  means of \cite[Lem.\,3.4 and Thm.\,3.3]{MR3377115} one
  proof that set of zeros of $\mathfrak{M}_k(z)$ equals
  $\sigma(\widetilde{J}_n)\setminus\sigma(J)$.
\end{proof}
% \begin{lemma}
%   \label{lem:mikhail}
%   \begin{equation}\label{coin}
% \sigma (J)\cap \sigma (\widetilde{J}_n)= \sigma (J)\cap (\{\lambda :G(\lambda ,n)=0\}\cup
% \{\gamma\})
% \end{equation}
% \begin{equation}\label{coincide}
% \card\left(\sigma (J)\cap \sigma (\widetilde{J}_n) \cap \{\lambda :G(\lambda,n)=0\} \right) \leq n.
% \end{equation}
% \begin{equation}\label{ora}
%  \lambda_r \in \sigma (J)\cap\{\lambda :G(\lambda,n)=0\}\Longrightarrow \mathfrak{M}_n(\lambda_r )
%  =\theta^2\end{equation}
% \begin{equation}\label{ora1}
% \lambda_r =\gamma\in\sigma (J)\Longrightarrow \mathfrak{M}_n (\lambda_r )
% =\theta^2+(1-\theta ^{2})|\psi_r(n) |^2\geq\theta^2
% \end{equation}
% \begin{equation}\label{ora2}
% \gamma\not\in\sigma (J)\Longrightarrow  \mathfrak{M}_n(\gamma)=\theta^2
% \end{equation}
% \begin{equation}\label{o!}
%  \gamma\in  \sigma(J) \cup \sigma(\widetilde{J}_n)\Longrightarrow \gamma \in \sigma(J)\cap  \sigma(\widetilde{J}_n).
% \end{equation}
% If $\lambda_N\ne \gamma$, then $\tilde\lambda_N\ne\lambda_N$
% and if $\lambda_1\ne \gamma$, then $\tilde\lambda_1\ne\lambda_1$.

% \end{lemma}

\begin{lemma}
  \label{lem:unif-convergence-prod}
  Let $\{\lambda_k\}_{k\in M}$ and $\{\mu_k\}_{k\in M}$ be sequences
  such that \eqref{eq:absolute-sum} holds. Let $\mathcal{B}=\cup_{k\in
  M}B_k$, where $B_k=\{z\in\complex: \abs{z-\lambda_k}<a\}$.
 Then
  \begin{equation*}
    \prod_{k\in M}\frac{z-\mu_k}{z-\lambda_k}
  \end{equation*}
 converges uniformly in $\complex\setminus\mathcal{B}$ and
 \begin{equation*}
   \lim_{\substack{\abs{z}\to\infty\\ z\not\in\mathcal{B}}}\prod_{k\in M}\frac{z-\mu_k}{z-\lambda_k}=1\,.
 \end{equation*}
\end{lemma}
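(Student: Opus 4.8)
The plan is to realize each factor as a uniformly small perturbation of $1$ and then appeal to the classical Weierstrass theory of infinite products. First I would write, for each $k\in M$,
\begin{equation*}
  \frac{z-\mu_k}{z-\lambda_k}=1+\frac{\lambda_k-\mu_k}{z-\lambda_k}=:1+c_k(z)\,.
\end{equation*}
If $z\in\complex\setminus\mathcal{B}$, then $z$ lies outside every ball $B_k$, so $\abs{z-\lambda_k}\ge a$ for all $k\in M$, and consequently
\begin{equation*}
  \abs{c_k(z)}\le\frac{\abs{\lambda_k-\mu_k}}{a}\,,\qquad z\in\complex\setminus\mathcal{B},\ k\in M\,.
\end{equation*}
Since $\sum_{k\in M}\abs{\lambda_k-\mu_k}<\infty$ by \eqref{eq:absolute-sum}, the Weierstrass $M$-test shows that $\sum_{k\in M}\abs{c_k(z)}$ converges uniformly on $\complex\setminus\mathcal{B}$, with a tail bound independent of $z$.

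Next I would use the elementary estimate for partial products, namely $\abs{\prod_{k\in F}(1+c_k)-1}\le\exp\bigl(\sum_{k\in F}\abs{c_k}\bigr)-1$ valid for every finite $F\subset M$, to conclude that the partial products taken over an increasing exhausting sequence of finite subsets form a uniform Cauchy sequence on $\complex\setminus\mathcal{B}$; hence the product converges uniformly there. The absolute summability of $\{c_k(z)\}$ also guarantees that the limiting value is independent of the chosen exhausting sequence, which is exactly what is required to match the convergence notion for $\prod_{k\in M}$ adopted in this paper. This proves the first assertion.

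For the limit as $\abs{z}\to\infty$ with $z\notin\mathcal{B}$, I would split off a finite head. Given $\varepsilon>0$, the uniform tail bound lets me pick a finite $F\subset M$ with $\sum_{k\in M\setminus F}\abs{c_k(z)}<\varepsilon$ for all such $z$, so that the tail product stays within $e^{\varepsilon}-1$ of $1$ uniformly in $z$. The remaining factor $\prod_{k\in F}\frac{z-\mu_k}{z-\lambda_k}$ is a finite product, each of whose terms tends to $1$ as $\abs{z}\to\infty$, hence it tends to $1$; letting $\varepsilon\to0$ then yields the value $1$.

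The argument is essentially routine, and I do not expect a genuine obstacle beyond bookkeeping: the only point that demands care is reconciling the unordered product $\prod_{k\in M}$ (defined through limits over increasing finite sets) with the classical infinite-product theory, and this is precisely what the absolute convergence supplied by \eqref{eq:absolute-sum}, together with the uniform lower bound $\abs{z-\lambda_k}\ge a$ on $\complex\setminus\mathcal{B}$, secures.
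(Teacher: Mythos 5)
Your proposal is correct and follows essentially the same route as the paper's proof: the bound $\abs{z-\lambda_k}\ge a$ off $\mathcal{B}$, the Weierstrass $M$-test for the uniform convergence of $\sum_k\abs{\lambda_k-\mu_k}/\abs{z-\lambda_k}$, passage from the sum to the product, and a finite-head/uniform-tail split to get the limit $1$ at infinity. The only difference is cosmetic: where the paper cites standard references (Rudin, Titchmarsh) for the sum-to-product step, you prove it directly via the estimate $\abs{\prod(1+c_k)-1}\le\exp\bigl(\sum\abs{c_k}\bigr)-1$ and a uniform Cauchy argument, which also neatly handles the paper's unordered-product convention.
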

\begin{proof}
  Since $z\not\in\mathcal{B}$,
  \begin{equation*}
    \abs{\frac{\lambda_k-\mu_k}{z-\lambda_k}}<\frac{1}{a}\abs{\lambda_k-\mu_k}
  \end{equation*}
for all $k\in M$. Therefore
\begin{equation*}
  \sum_{k\in M}\abs{\frac{\lambda_k-\mu_k}{z-\lambda_k}}
\end{equation*}
converges uniformly in $\complex\setminus\mathcal{B}$
\cite[Thm.\,7.10]{MR0385023}. In turn, this implies
\begin{equation*}
  \prod_{k\in M}\left(1+\frac{\lambda_k-\mu_k}{z-\lambda_k}\right)=
  \prod_{k\in M}\frac{z-\mu_k}{z-\lambda_k}
\end{equation*}
converges uniformly in in $\complex\setminus\mathcal{B}$ (see
for instance \cite[Sec.\,1.44]{MR3155290}).

Now, for $M_n$ such that $\card(M_n)=n$, one has
\begin{equation*}
  \abs{\prod_{k\in M}\frac{z-\mu_k}{z-\lambda_k}-1}\le
\abs{\prod_{k\in M}\frac{z-\mu_k}{z-\lambda_k}-\prod_{k\in M_n}\frac{z-\mu_k}{z-\lambda_k}}+
\abs{\prod_{k\in M_n}\frac{z-\mu_k}{z-\lambda_k}-1}\,.
\end{equation*}
If $n$ is sufficiently large the first term in the r.\,h.\,s. is
arbitrarily small uniformly in $z\not\in\mathcal{B}$. For large
$\abs{z}$, the second term is also arbitrarily small.
\end{proof}

\begin{proposition}
  \label{prop:m-throu-zeros-poles}
  Assume H\,1.  Fix an arbitrary $n\in\nats$.  Assume that
  $\sigma(J)\setminus\sigma(\widetilde{J}_n)=\{\lambda_k\}_{k\in M}$,
  where the sequence is strictly increasing. Then there is an
  enumeration of $\sigma(\widetilde{J}_n)\setminus\sigma(J)$ such that
  $\sigma(\widetilde{J}_n)\setminus\sigma(J)=\{\mu_k\}_{k\in M}$ and
 \begin{equation*}
   \mathfrak{M}_n(z)=
      \prod\limits_{k\in M}\frac{z-\mu_k}{z-\lambda_k}\,.
 \end{equation*}
\end{proposition}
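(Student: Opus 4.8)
The plan is to show that both $\mathfrak{M}_n(z)$ and the infinite product on the right-hand side are meromorphic functions sharing the same simple zeros, the same simple poles, and the same limit at infinity, and then to conclude equality by a Liouville-type argument applied to their ratio.

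First I would pin down the zeros and poles of $\mathfrak{M}_n$. By Proposition~\ref{prop:zeros-poles-m} its zeros are exactly $\sigma(\widetilde{J}_n)\setminus\sigma(J)$ and its poles are exactly $\sigma(J)\setminus\sigma(\widetilde{J}_n)=\{\lambda_k\}_{k\in M}$. These are all simple: under H\,1 the functions $G(z,n)$ and $\widetilde{G}(z,n)$ are Herglotz and meromorphic, so their poles are simple (Remark~\ref{rem:zeros-poles-hm}), and by Proposition~\ref{prop:zero-green-zero-perturbed} they never vanish simultaneously. Hence at each $\lambda_k$ the numerator $G(z,n)$ has a simple pole while $\widetilde{G}(z,n)$ is finite and nonzero, giving a simple pole of $\mathfrak{M}_n$, and symmetrically each point of $\sigma(\widetilde{J}_n)\setminus\sigma(J)$ is a simple zero. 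Finally, from the master formula \eqref{eq:master} and the asymptotics \eqref{eq:G=-asymptotics} one gets $(\gamma-z)G(z,n)\to1$, hence $\mathfrak{M}_n(z)\to\theta^2+(1-\theta^2)=1$ as $z\to\infty$.

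Next I would produce the enumeration of the zeros. Writing $J=J(1,0)$ and $\widetilde{J}_n=J(\theta,h)$ and invoking N\,1, Theorem~\ref{thm:absolute-convergence} furnishes a pairing $\lambda_k(1,0)\leftrightarrow\lambda_k(\theta,h)$ with $\sum_{k}\abs{\lambda_k(\theta,h)-\lambda_k(1,0)}<\infty$. The common eigenvalues form a finite set (Remark~\ref{rem:finite-common-eigenvalues}) and contribute zero terms, so after discarding them this pairing restricts, up to finitely many defects, to a bijection between the poles $\{\lambda_k\}$ and the zeros, which I relabel as $\{\mu_k\}_{k\in M}$ so that $\{\lambda_k\}$ is increasing. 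Since absolute convergence is insensitive to reordering and to modifying finitely many terms, $\sum_{k\in M}\abs{\mu_k-\lambda_k}<\infty$, and Lemma~\ref{lem:unif-convergence-prod} then guarantees that $P(z):=\prod_{k\in M}\frac{z-\mu_k}{z-\lambda_k}$ converges uniformly on $\complex\setminus\mathcal{B}$ and tends to $1$ at infinity.

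Finally I would form $R(z):=\mathfrak{M}_n(z)/P(z)$. Because $\mathfrak{M}_n$ and $P$ have exactly the same simple zeros $\{\mu_k\}$ and simple poles $\{\lambda_k\}$, every singularity of $R$ is removable and $R$ extends to a zero-free entire function. Off $\mathcal{B}$ both factors tend to $1$ at infinity, so $R\to1$ there; combined with $R$ being entire this should force $R$ to be bounded on all of $\complex$, whence $R\equiv1$ by Liouville and the identity $\mathfrak{M}_n(z)=P(z)$ follows. I expect the main obstacle to be precisely this last boundedness step: since the poles $\lambda_k$ may cluster, the disks $B_k$ need not be disjoint and one cannot exhaust $\complex$ by circles avoiding $\mathcal{B}$, so on the arcs meeting $\mathcal{B}$ the control of $R$ must come from the uniform convergence in Lemma~\ref{lem:unif-convergence-prod} (which bounds $P$ away from the poles) together with a maximum-modulus argument on suitable contours, so that the bound valid off $\mathcal{B}$ propagates to a global one.
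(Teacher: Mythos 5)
Your first two steps (identifying the simple zeros and poles of $\mathfrak{M}_n$ via Proposition~\ref{prop:zeros-poles-m}, and producing the enumeration with $\sum_{k\in M}\abs{\mu_k-\lambda_k}<\infty$ from Theorem~\ref{thm:absolute-convergence} so that Lemma~\ref{lem:unif-convergence-prod} applies) parallel the paper's proof. The genuine gap is exactly where you suspected it: the concluding Liouville step. H\,1 imposes no lower bound on the gaps between consecutive eigenvalues, so for \emph{any} fixed radius $a$ the set $\mathcal{B}$ can have unbounded connected components (e.g.\ when the gaps of $\{\lambda_k\}$ tend to zero, $\mathcal{B}$ eventually covers a half-strip neighborhood of the real axis). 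Then no closed contour surrounding such a component stays in $\complex\setminus\mathcal{B}$, the maximum principle cannot propagate the bound on $R=\mathfrak{M}_n/P$ from $\complex\setminus\mathcal{B}$ into $\mathcal{B}$, and a Phragm\'en--Lindel\"of argument in the strip would require an a priori growth bound on $R$ there which you have not established. The obstruction is not merely technical: an entire zero-free function that tends to $1$ at infinity off a half-strip-type set need \emph{not} be constant (Arakelian-type approximation produces nonconstant entire functions decaying to zero on such sets), so ``same zeros and poles plus limit $1$ off $\mathcal{B}$'' genuinely does not determine $\mathfrak{M}_n$ without further structural input.

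The paper supplies precisely that missing structural input and thereby avoids any Liouville argument. It applies the Krein-type factorization of the Appendix (Proposition~\ref{prop:krein-levin} together with Remark~\ref{rem:draw-terms}) to $G(z,n)$ and $\widetilde{G}(z,n)$ \emph{separately}: each is a meromorphic Herglotz function, hence equals a genus-zero product over its interlacing zeros and poles, with a positive multiplicative constant. By Proposition~\ref{prop:zero-green-zero-perturbed} the two functions have the \emph{same} zeros $\{\eta_k\}$, so in the quotient $\mathfrak{M}_n=G/\widetilde{G}$ all the zero-factors cancel, leaving $C'\prod_k\frac{z-\mu_k}{z-\lambda_k}$ after regrouping (this regrouping uses the convergence of $\prod\lambda_k/\mu_k$, which again comes from Theorem~\ref{thm:absolute-convergence}); the constant $C'$ is then pinned to give the stated identity by the asymptotics \eqref{eq:function-tends-to-1} and Lemma~\ref{lem:unif-convergence-prod}. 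Note also that the interlacing $\mu_k<\eta_k<\mu_{k+1}$, with the $\eta_k$ fixed under the perturbation, is what the paper uses to justify the enumeration of $\{\mu_k\}$ --- the same role your continuity-of-branches argument plays. If you want to salvage your ratio argument, you must import comparable information, e.g.\ that $\mathfrak{M}_n$ and $P$ are each quotients of Herglotz functions with the stated product representations; but at that point you have essentially reproduced the paper's proof.
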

\begin{proof}
 Assume that $J$ is not semibounded from above.
 Let  $\lambda_{k_0}$ be the eigenvalue
  nearest to zero.
  Since $G(z,n)$
  is a meromorphic Herglotz function, it follows from
  Proposition~\ref{prop:krein-levin} and Remark~\ref{rem:draw-terms} that
  \begin{equation}
 \label{eq:levin-herglotz-gen}
    G(z,n)=C \frac{(z-\eta_{k_0})(z-\eta_{k_0-1})}{(z-\lambda_{k_0})(z-\lambda_{k_0-1})}
  \prod_{\substack{k\in M\\k\ne k_0,k_0-1}} \left(1-\frac{z}{\eta_k}\right)
  \left(1-\frac{z}{\lambda_k}\right)^{-1}\,,
  \end{equation}
where $\{\eta_k\}_{k\in M}$ are the zeros of $G(z,n)$, $C>0$, and
\begin{equation*}
 % \label{eq:enum-zeros-poles}
  \lambda_k<\eta_k<\lambda_{k+1}\quad\forall k\in M\,.
\end{equation*}

Taking into account Proposition~\ref{prop:zero-green-zero-perturbed},
one also has from Proposition~\ref{prop:krein-levin} and
Remark~\ref{rem:draw-terms} that
  \begin{equation*}
    \widetilde{G}(z,n)=\widetilde{C}
    \frac{(z-\eta_{k_0-1})(z-\eta_{k_0})(z-\eta_{k_0+1})}
{(z-\mu_{k_0-1})(z-\mu_{k_0})(z-\mu_{k_0+1})}
  \prod_{\substack{k\in M\\k\ne k_0,k_0\pm 1}} \left(1-\frac{z}{\eta_k}\right)
  \left(1-\frac{z}{\mu_k}\right)^{-1}\,,
  \end{equation*}
where $\widetilde{C}>0$ and
\begin{equation*}
  %\label{eq:enum-zeros-poles}
  \mu_k<\eta_k<\mu_{k+1}\quad\forall k\in M.
\end{equation*}
Since the enumeration of the sequence
$\{\eta_k\}_{k\in M}$ does not change, we have taken into
account that $\mu_{k_0+1}$ could be zero.

For any values of the perturbative parameters $\theta$
and $h$,
the eigenvalue $\lambda_k(\theta,h)$
(see the notation introduced before Lemma~\ref{lem:derivative}) is
constrained between $\eta_{k-1}$
and $\eta_k$
which do not move as $\theta$
and $h$
change (see Proposition~\ref{prop:zero-green-zero-perturbed}).
Therefore the enumeration of the sequence $\{\mu_k\}_{k\in M}$
is such that $\lambda_k(\theta,h)=\mu_k$
for any values of the perturbative parameters.

Consider a sequence $\{M_n\}_{n=1}^\infty$ of subsets of $M$, such
that $M_n\subset M_{n+1}$ and $\cup_nM_n=M$. One has
  \begin{align}
    \label{eq:m-goth-krein}
    \mathfrak{M}_n(z)&=C'\prod_{j=k_0-1}^{k_0+1}\frac{z-\mu_j}{z-\lambda_j}
    \lim_{n\to\infty}\frac{\displaystyle
  \prod_{\substack{k\in M_n\\k\ne k_0,k_0\pm 1}} \left(1-\frac{z}{\eta_k}\right)
  \left(1-\frac{z}{\lambda_k}\right)^{-1}}{\displaystyle
  \prod_{\substack{k\in M_n\\k\ne k_0,k_0\pm 1}} \left(1-\frac{z}{\eta_k}\right)
  \left(1-\frac{z}{\mu_k}\right)^{-1}}\notag\\
%   &=C\frac{z-\mu_0}{z-\lambda_0}
%     \lim_{n\to\infty}\prod_{\substack{k\in M_n\\k\ne 0}}
% \left(1-\frac{z}{\mu_k}\right)
%   \left(1-\frac{z}{\lambda_k}\right)^{-1}\notag\\
  &=C'\prod_{j=k_0-1}^{k_0+1}\frac{z-\mu_j}{z-\lambda_j}
   \prod_{\substack{k\in M\\k\ne k_0,k_0\pm 1}}
\left(1-\frac{z}{\mu_k}\right)
  \left(1-\frac{z}{\lambda_k}\right)^{-1}\,.
  \end{align}
By Theorem~\ref{thm:absolute-convergence}, it follows that
\begin{equation}
  \prod_{\substack{k\in M\\k\ne k_0,k_0\pm 1}}\frac{\lambda_k}{\mu_k}\quad\text{converges.}
\end{equation}
Thus, one writes
  \begin{equation}
    \label{eq:infinite-product-equalities}
    \prod_{\substack{k\in M\\k\ne k_0,k_0\pm 1}}
\left(1-\frac{z}{\mu_k}\right)
  \left(1-\frac{z}{\lambda_k}\right)^{-1}=
\prod_{\substack{k\in M\\k\ne k_0,k_0\pm 1}}\frac{\lambda_k}{\mu_k}
\prod_{\substack{k\in M\\k\ne k_0,k_0\pm 1}}\frac{z-\mu_k}{z-\lambda_k}\,.
  \end{equation}
From (\ref{eq:G=-asymptotics}) it follows that
\begin{equation}
  \label{eq:function-tends-to-1}
\lim_{\substack{z\to\infty \\ 0<\abs{\arg z}<\pi}}
    \mathfrak{M}_k(z)=1\,.
  \end{equation}
On the other hand, according to Lemma~\ref{lem:unif-convergence-prod}
the second product of the r.\,h.\,s. tends to 1 along any curve away from the
spectrum. This implies, together with \eqref{eq:m-goth-krein}, that
% But, since the last product of the r.\,h.\,s. of
%   (\ref{eq:infinite-product-equalities}) converges uniformly and
%   (\ref{eq:kato}) holds, one has
% \begin{equation}
%   \label{eq:product-tends-to-1}
%     \lim_{\substack{z\to\infty \\
%         0<\abs{\arg z}<\pi}}
%     \prod_{k\in M}
%     \frac{z-\mu_k}{z-\lambda_k}=
%     \lim_{\substack{z\to\infty \\
%        0<\abs{\arg z}<\pi}}
%     \prod_{k\in M}
%     \left(1+\frac{\mu_k-\lambda_k}{\lambda_k-z}\right)=1\,.
% \end{equation}
% Thus, (\ref{eq:m-goth-krein}), (\ref{eq:infinite-product-equalities}),
% (\ref{eq:function-tends-to-1}), and (\ref{eq:product-tends-to-1})
% imply that
\begin{equation*}
  C'=\prod_{\substack{k\in M\\k\ne k_0,k_0\pm 1}}\frac{\mu_k}{\lambda_k}\,.
\end{equation*}
The proposition is then proven for the case when $J$ is not
semibounded from above. The case when $J$ is semibounded from above is
treated analogously.
\end{proof}
\begin{remark}
  \label{rem:found-theta}
  There is a simple expression for the quotient of the unperturbed and
  perturbed masses. Indeed,
  assume that $\gamma$ is not a pole of $G(z,n)$. If  $\lambda$ is a
  common eigenvalue or $\gamma$, then
  \eqref{eq:master} implies that
  $\mathfrak{M}_n(\lambda)=\theta^2$. Thus,
  Proposition~\ref{prop:m-throu-zeros-poles} below and
  \ref{eq:increment-mass-spring} yields
  \begin{equation*}
    \prod_{k\in
      M}\frac{\lambda-\mu_k}{\lambda-\lambda_k}=\frac{m_n}{m_n+\Delta m_n}
  \end{equation*}
cf. \cite[Rem.\,1]{MR2915295}. It is hard to imagine a more direct
relation between the eigenvalues and the perturbed mass.
\end{remark}

\section{Inverse spectral analysis}
\label{sec:inverse-spectral-analysis}
\begin{definition}
  \label{def:delta}
  For any Borel set $\mathcal{A}$, put
  \begin{equation*}
    \delta_\lambda(\mathcal{A})=\begin{cases}
      1 & \text{if } \lambda\in\mathcal{A}\\
      0 & \text{otherwise.}
    \end{cases}
  \end{equation*}
\end{definition}
\begin{lemma}
  \label{lem:berg}
  Let
  \begin{equation*}
    \mu=\sum_{\lambda\in M}a_\lambda\delta_\lambda\,,
  \end{equation*}
where $a_\lambda>0$ for any $\lambda\in M$  and $M\subset\reals$ is a discrete infinite set.
Assume that the polynomials are dense in $L_2(\reals,\mu)$. Let
\begin{equation*}
  \nu=\sum_{\lambda\in F}b_\lambda\delta_\lambda- \sum_{\lambda\in
    M_0\subset M}a_\lambda\delta_\lambda\,,
\end{equation*}
where $b_\lambda>0$ for any $\lambda\in F$, and $F$ and $M_0$ are
finite sets of the same cardinality. Then the polynomials are dense in $L_2(\reals,\mu+\nu)$.
\end{lemma}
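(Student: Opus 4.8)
The plan is to use the characterization of density furnished by Theorem~\ref{thm:nec-suf-cond-for-meausure}: since $\mu+\nu$ differs from $\mu$ only on the finite set $F\cup M_0$, it is again a finite positive measure (here the positivity of the $b_\lambda$ and the fact that we delete whole atoms are used) all of whose moments are finite, so the only thing left to establish is that no nonzero element of $L_2(\reals,\mu+\nu)$ is orthogonal to every polynomial. First I would fix such a $g$, i.e.\ $\inner{x^n}{g}=0$ for all $n\ge0$ in $L_2(\reals,\mu+\nu)$, and pass to the complex measure $\sigma:=g\,(\mu+\nu)$. Since $\mu+\nu$ is finite, Cauchy--Schwarz shows $\sigma$ has finite total variation, and the orthogonality says precisely that every moment of $\sigma$ vanishes. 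The goal is then to prove $\sigma=0$, equivalently $g\equiv0$.

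Next I would separate the finitely many ``new'' points from the infinite tail. Writing $R(x):=\prod_{\lambda\in F}(x-\lambda)$, the measure $R\sigma$ is supported on $M\setminus M_0\subset M$, and it still has all moments equal to zero because $x^nR(x)$ is a polynomial for every $n$. Interpreting $R\sigma$ as $G\,d\mu$ with $G:=Rg$ (set equal to $0$ on $M_0$), the vanishing of the moments of $R\sigma$ means that $G$ is orthogonal to all polynomials in $L_2(\reals,\mu)$; if $G$ belongs to that space, the density hypothesis for $\mu$ forces $G=0$, hence $g\equiv0$ on $M\setminus M_0$. It then remains to kill $g$ on $F$: the moment conditions collapse to $\sum_{\lambda\in F}b_\lambda\lambda^n g(\lambda)=0$ for $n=0,\dots,\card F-1$, a homogeneous Vandermonde system in the distinct nodes of $F$; its invertibility together with $b_\lambda>0$ gives $g\equiv0$ on $F$. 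Here the hypothesis $\card F=\card M_0$ is what matches the degree of $R$ to the number of deleted atoms and keeps $\mu+\nu$ in the class described by Theorem~\ref{thm:nec-suf-cond-for-meausure}.

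The hard part will be the one integrability claim glossed over above, namely that $G=Rg\in L_2(\reals,\mu)$, i.e.\ $\sum_{\lambda\in M\setminus M_0}a_\lambda\abs{R(\lambda)}^2\abs{g(\lambda)}^2<\infty$: the factor $R$ grows polynomially while a priori one only controls $\sum_{\lambda}a_\lambda\abs{g(\lambda)}^2$. I would attack this by exploiting that all moments of $\mu$ are finite, so $\sum_\lambda a_\lambda\abs{R(\lambda)}^2<\infty$, and by bootstrapping extra decay of $g$ from its orthogonality to every polynomial through a truncation argument; at bottom this is exactly the point where the determinacy information hidden in the density of polynomials must be brought in (the circle of ideas around the index of determinacy). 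An alternative, and perhaps cleaner, route that avoids the estimate is to split the modification into elementary steps: deleting a single atom preserves density by the trivial ``extend by zero'' argument (an annihilating $g$ for the smaller measure, extended by $0$ at the removed point, annihilates $\mu$, hence vanishes), while adjoining a single atom preserves it because adding finitely many mass points keeps a measure in the class underlying Theorem~\ref{thm:nec-suf-cond-for-meausure}; iterating these steps and using $\card F=\card M_0$ then yields the claim.
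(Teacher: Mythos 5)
The gap you flagged is real, and it is fatal to the argument as structured, not merely ``the hard part left for later.'' Observe that your main route never actually uses the hypothesis $\card F=\card M_0$: after multiplying by $R(x)=\prod_{\lambda\in F}(x-\lambda)$, the deleted atoms $M_0$ drop out entirely, and the rest of the argument (density of polynomials in $L_2(\reals,\mu)$ forces $G=0$, then a Vandermonde system kills $g$ on $F$) would run verbatim with $M_0=\emptyset$. So if the integrability claim $Rg\in L_2(\reals,\mu)$ could be obtained by truncation or bootstrapping from the orthogonality of $g$ alone, you would have proved that adding finitely many point masses to \emph{any} measure with dense polynomials preserves density. That statement is false, and the paper says so explicitly in Remark~\ref{rem:thanks}: adding just one point mass may destroy the density of the polynomials (this is the point of the hypothesis $\card F=\card M_0$, and of the Berg--Dur\'an circle of ideas). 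A concrete obstruction: if $\mu$ is indeterminate $N$-extremal and $z_0\notin\supp\mu$, the reproducing kernel $k_{z_0}$ of the closure of polynomials in $L_2(\reals,\mu)$ yields a nonzero annihilator $g$ of all polynomials in $L_2(\reals,\mu+c\delta_{z_0})$, namely $g=-c\,k_{z_0}$ on $\supp\mu$ and $g(z_0)=1$; for this $g$ one necessarily has $(x-z_0)g\notin L_2(\reals,\mu)$ (otherwise it would be a nonzero element orthogonal to all polynomials). So the integrability step is precisely where the cardinality hypothesis must enter, and nothing in your proof makes it enter.

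Your ``cleaner alternative'' fails for the same reason. The deletion half is fine: the extend-by-zero argument correctly shows that removing an atom preserves density. But the addition half --- ``adjoining a single atom preserves it because adding finitely many mass points keeps a measure in the class underlying Theorem~\ref{thm:nec-suf-cond-for-meausure}'' --- is circular: membership in that class \emph{means} density of the polynomials, which is exactly what adding an atom can destroy. The paper's proof is of a different nature and leans on deep results rather than soft functional analysis: density of polynomials is equivalent to $\mu$ being determinate or indeterminate $N$-extremal (Akhiezer); removing one atom from such a measure produces a \emph{determinate} measure (Berg--Dur\'an, P\'erez-Riera--Varona); and adding one atom to that determinate measure gives a measure which is again determinate or $N$-extremal (P\'erez-Riera--Varona, Thm.~5(d),(e)), hence has dense polynomials; one then iterates over the finitely many pairs. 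The essential point, which your decomposition loses, is that deletions and additions cannot be decoupled: each addition must be paired with a prior deletion, and that pairing is exactly what $\card F=\card M_0$ encodes.
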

\begin{proof}
  Since the polynomials are dense in $L_2(\reals,\mu)$, $\mu$ is
  either indeterminate $N$-extremal or determinate by
  \cite[Thm.\,2.3.3 Cor.\,2.3.3]{MR0184042}. By \cite[Remark, p. 231]{MR1343638} the measure
  \begin{equation*}
    \mu-a_{\lambda_0}\delta_{\lambda_0}\,,\qquad\lambda_0\in M_0\,,
  \end{equation*}
is determinate (see also in \cite{MR1308001} Lemma B and the comment
before Lemma D). Now, by \cite[Thm. 5 (d) and (e)]{MR1343638}, the measure
\begin{equation*}
  \mu - a_{\lambda_0}\delta_{\lambda_0} +
  b_{\lambda}\delta_{\lambda}\,,\quad\lambda\in F\,.
\end{equation*}
is either determinate or indeterminate $N$-extremal.
\end{proof}
\begin{remark}
  \label{rem:thanks}
  If the cardinality of $M_0$ is less than the cardinality of $F$,
  then the previous lemma may not be true. In fact, adding just one
  point mass to the measure $\mu$ may destroy the density of the
  polynomials. For related results see \cite{MR3407921}.
 We thank C. Berg and A. Duran for making this fact
  clear to us and M. Sodin for pertinent remarks.
\end{remark}
\begin{proposition}
  \label{prop:G-reconstruction}
  Assume H\,1.  If $\gamma\not\in\sigma(J)$, then the sequences
  $\sigma(J)\setminus\sigma(\widetilde{J}_n)$ and
  $\sigma(\widetilde{J}_n)\setminus\sigma(J)$ together with the
  parameter $\gamma$ uniquely determine the function $G(z,n)$. If
  $\gamma\in\sigma(J)$, then the sequences
  $\sigma(J)\setminus\sigma(\widetilde{J}_n)$ and
  $\sigma(\widetilde{J}_n)\setminus\sigma(J)$ together with the
  parameters $\theta$ and $\gamma$ uniquely determine the function
  $G(z,n)$.
\end{proposition}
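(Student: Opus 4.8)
The plan is to recover $G(z,n)$ by inverting the master formula \eqref{eq:master}, which relates the Green function to the quotient $\mathfrak{M}_n(z)=G(z,n)/\widetilde{G}(z,n)$. Since $\gamma$ is well defined through \eqref{eq:def-gamma}, necessarily $\theta\ne1$, so $1-\theta^2\ne0$ and \eqref{eq:master} may be solved for the Green function,
\begin{equation}
  \label{eq:green-reconstruct-plan}
  G(z,n)=\frac{\mathfrak{M}_n(z)-\theta^2}{(1-\theta^2)(\gamma-z)}\,.
\end{equation}
Everything therefore reduces to showing that the available data determine the meromorphic function $\mathfrak{M}_n$ together with the number $\theta^2$; the remaining parameter $\gamma$ is given in both cases.

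First I would reconstruct $\mathfrak{M}_n$ from the two spectra. Writing $\sigma(J)\setminus\sigma(\widetilde{J}_n)=\{\lambda_k\}_{k\in M}$ in increasing order and taking the enumeration of $\sigma(\widetilde{J}_n)\setminus\sigma(J)=\{\mu_k\}_{k\in M}$ furnished by Proposition~\ref{prop:m-throu-zeros-poles}, that proposition gives the convergent product
\begin{equation*}
  \mathfrak{M}_n(z)=\prod_{k\in M}\frac{z-\mu_k}{z-\lambda_k}\,.
\end{equation*}
Thus the two sequences alone determine $\mathfrak{M}_n$ as a meromorphic function whose poles and zeros are exactly $\sigma(J)\setminus\sigma(\widetilde{J}_n)$ and $\sigma(\widetilde{J}_n)\setminus\sigma(J)$, respectively (Proposition~\ref{prop:zeros-poles-m}).

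It remains to pin down $\theta^2$, and here the two cases part ways. If $\gamma\notin\sigma(J)$, then $\gamma$ is a regular point of $G(z,n)$, so the term $(\gamma-z)G(z,n)$ vanishes at $z=\gamma$ and, as already noted in Remark~\ref{rem:found-theta}, evaluating \eqref{eq:master} there yields $\theta^2=\mathfrak{M}_n(\gamma)$. The right-hand side is finite, since the poles of $\mathfrak{M}_n$ lie in $\sigma(J)\setminus\sigma(\widetilde{J}_n)\subset\sigma(J)$ while $\gamma\notin\sigma(J)$; hence $\theta^2$ is recovered from $\gamma$ and the two spectra, and \eqref{eq:green-reconstruct-plan} then produces $G(z,n)$. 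If instead $\gamma\in\sigma(J)$, the residue computation from \eqref{eq:green-integral} shows that $\gamma$ is a genuine pole of $G(z,n)$ exactly when $\pi_n(\gamma)\ne0$; in that event $(\gamma-z)G(z,n)$ tends to a nonzero limit rather than to $0$, so $\mathfrak{M}_n(\gamma)\ne\theta^2$ and the substitution no longer isolates $\theta^2$. Supplying $\theta$ as part of the input repairs this, and \eqref{eq:green-reconstruct-plan} again determines $G(z,n)$.

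The main obstacle is precisely this case distinction: one must justify that setting $z=\gamma$ extracts $\theta^2$ in the first case and explain cleanly why it cannot in the second. The crux is deciding whether $\gamma$ is a regular point or a pole of $G(z,n)$, which is controlled by the membership $\gamma\in\sigma(J)$ and the vanishing of $\pi_n$ at $\gamma$; the fact that $\mathfrak{M}_n$ is never a pole at $\gamma$ (Proposition~\ref{prop:zeros-poles-m}, Remark~\ref{rem:found-theta}) keeps the dichotomy transparent.
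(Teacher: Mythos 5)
Your proposal is correct and follows essentially the same route as the paper: construct $\mathfrak{M}_n$ from the two sequences via Proposition~\ref{prop:m-throu-zeros-poles}, recover $\theta^2=\mathfrak{M}_n(\gamma)$ when $\gamma\not\in\sigma(J)$ (since $\gamma$ is then not a pole of $G(z,n)$), and in either case invert \eqref{eq:master} to obtain $G(z,n)$, with $\theta$ supplied as data when $\gamma\in\sigma(J)$. Your additional residue discussion of why evaluation at $\gamma$ fails in the second case is a useful elaboration but not a different method.
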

\begin{proof}
  By Proposition~\ref{prop:m-throu-zeros-poles}, one constructs the
  function $\mathfrak{M}_n(z)$. Thus, since
  $\gamma\not\in\sigma(J)$, $\gamma$ is not a pole of $G(z,n)$ and,
  then $\mathfrak{M}_n(\gamma)=\theta^2$. With the knowledge of
  $\gamma$ and $\theta$, one finds  $G(z,n)$ from
  (\ref{eq:master}). For the second assertion
   we use again Proposition \ref{prop:m-throu-zeros-poles} to construct
    $\mathfrak{M}_n(z)$ and then find $G(z,n)$ using (\ref{eq:master}).
\end{proof}
\begin{remark}
  \label{rem:g-not-sufficient}
  As it will be clear later (see
  Theorem~\ref{thm:necessary-sufficient}), although the noncommon
  eigenvalues are sufficient for reconstructing the $n$-Green function
  of the Jacobi operator, they are not sufficient for reconstructing
  the operator itself.
\end{remark}
\begin{theorem}
  \label{thm:reconstruction-from-two-spectra}
  Assume  H\,1.
  Let the spectra of $J$ and $\widetilde{J}_n$ and the
  parameter $\gamma$ be given so that $\gamma$ is not in the
  spectrum of $J$.
  \begin{enumerate}[(1)]
  \item If  the spectra do not
    intersect, then the data given determine $\theta$ uniquely  and
    there are countably many
    pairs Jacobi operators $J$, $\widetilde{J}_n$ with the given spectra.
  \item If the spectra intersect, then the data given determine
    $\theta$ uniquely and there are uncountable many pairs $J$,
    $\widetilde{J}_n$ with the given spectra.
%   \item $\gamma$ belongs to $\sigma(J)$ and
%     $\sigma(J)\cap\sigma(\widetilde{J}_n)\setminus\{\gamma\}\ne\emptyset$.
%   \begin{enumerate}
%     \item $\mathfrak{M}_n(\gamma)=\mathfrak{M}_n(\lambda)$ with
%       $\lambda\in\sigma(J)\cap\sigma(\widetilde{J}_n)$. Then one has $\binom{\aleph_0}{n-q-1}$ disjoint manifolds of dimension $q+1$ of $J$'s.
%    \item $\mathfrak{M}_n(\gamma)>\mathfrak{M}_n(\lambda)$ with
%      $\lambda\in\sigma(J)\cap\sigma(\widetilde{J}_n)$. Then one has $\binom{\aleph_0}{n-q}$ disjoint manifolds of dimension $q$ of $J$'s.
%   \end{enumerate}
% \item  $\gamma$ belongs to $\sigma(J)$  and
%     $\sigma(J)\cap\sigma(\widetilde{J}_n)\setminus\{\gamma\}=\emptyset$.
% % \begin{enumerate}
% % \item $\mathfrak{M}_n'(\gamma)\ne0$. Then the data given together with
% %   $\theta$ yield  $\binom{\aleph_0}{n}$  of $J$'s.
% % \item $\mathfrak{M}'(\gamma)=0$
%   \begin{enumerate}
%   \item if $\mathfrak{M}_n(\gamma)=\theta^2$, then the data given
%     yield $\binom{\aleph_0}{n-1}$ disjoint manifolds of dimension $1$ of $J$'s.
%   \item if $\mathfrak{M}_n(\gamma)>\theta^2$, then the data given together with $\theta$
%     yield  $\binom{\aleph_0}{n}$  of $J$'s.
%   \end{enumerate}
%     \end{enumerate}
\end{enumerate}
\end{theorem}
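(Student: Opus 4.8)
The plan is to turn the problem into a counting question about the ways the already-determined Green function $G(z,n)$ can be decomposed, via Proposition~\ref{prop:Gkk-formula}, into the $m$-functions of the finite ``head'' $J_n^-$ and the semi-infinite ``tail'' $J_n^+$. First I would recover $\theta$ and $G(z,n)$ from the data. From the two given spectra one extracts the noncommon eigenvalues $\sigma(J)\setminus\sigma(\widetilde{J}_n)$ and $\sigma(\widetilde{J}_n)\setminus\sigma(J)$, and Proposition~\ref{prop:m-throu-zeros-poles} assembles them into $\mathfrak{M}_n(z)$. Since $\gamma\notin\sigma(J)$, the point $\gamma$ is not a pole of $G(z,n)$, so evaluating the master formula \eqref{eq:master} at $z=\gamma$ gives $\mathfrak{M}_n(\gamma)=\theta^2$; as $\theta>0$ this determines $\theta$ uniquely in both cases. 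Solving \eqref{eq:master} for $G(z,n)$ then recovers the Green function exactly as in Proposition~\ref{prop:G-reconstruction}, which settles the uniqueness-of-$\theta$ assertion.

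Next I would analyze the freedom in passing from $G(z,n)$ back to $J$. Writing $-G(z,n)^{-1}=z-q_n+\sum_k\eta_k/(\alpha_k-z)$ as in Proposition~\ref{prop:form-inverse-green-function}, the poles $\{\alpha_k\}$ (the zeros of $G$), the residues $\{\eta_k\}$, and the number $q_n$ are all fixed by $G$. By Proposition~\ref{prop:Gkk-formula}, reconstructing $J$ amounts to writing $\sum_k\eta_k/(\alpha_k-z)=b_n^2m_n^+(z)+b_{n-1}^2m_n^-(z)$: one chooses which $\alpha_k$ are eigenvalues of $J_n^-$ (exactly $n-1$ of them, so $n\ge 2$), which are eigenvalues of the tail $J_n^+$, and, at any $\alpha_k$ shared by both, how to split its residue as $\eta_k=\eta_k^++\eta_k^-$ into strictly positive parts. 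Proposition~\ref{prop:common-eigenvalues-+-} identifies the shared eigenvalues of $J_n^\pm$ with the common eigenvalues $\sigma(J)\cap\sigma(\widetilde{J}_n)$, while Proposition~\ref{prop:zeros-poles-m} identifies the poles of $G$ with $\sigma(J)\setminus\sigma(\widetilde{J}_n)$; together these guarantee that every admissible split yields a pair $J,\widetilde{J}_n$ with precisely the prescribed spectra, so the count of pairs equals the count of admissible splits.

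The counting then separates the two cases. When the spectra are disjoint there are no shared eigenvalues and no residues to partition, so the sole freedom is the choice of the $n-1$ points forming $\sigma(J_n^-)$ out of the countable zero set, giving countably many possibilities. When the spectra intersect, each of the finitely many shared eigenvalues — and there is at least one — carries a residue that can be split continuously, $\eta_k^+\in(0,\eta_k)$; distinct splits change $b_n^2=\sum_k\eta_k^+$ and the tail measure and hence produce genuinely distinct operators $J$, so one obtains a continuum, i.e.\ uncountably many pairs.

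In either case the decisive point, which I expect to be the main obstacle, is verifying that the tail data $\{(\alpha_k,\eta_k^+)\}$ really is the spectral measure of a Jacobi operator satisfying H\,1, that is, that the polynomials remain dense in the associated $L_2$ space. This is supplied by Lemma~\ref{lem:berg}: every admissible split differs from the one coming from the genuine pair by \emph{removing and adding equally many point masses} (and, in the intersecting case, by reweighting finitely many of them), so density is preserved, while finiteness of all moments follows from the estimate in Remark~\ref{rem:chebotarev} together with $\eta_k^+\le\eta_k$. The subtlety, flagged in Remark~\ref{rem:thanks}, is exactly that the numbers of added and removed masses must coincide; if that balance were broken the polynomials could cease to be dense and the candidate split would correspond to no Jacobi operator at all. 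Finally, since a discrete point-mass measure always yields an operator with discrete spectrum, H\,1 for $J_n^+$, and hence for $J$, holds automatically, so density is the only condition that needs to be checked.
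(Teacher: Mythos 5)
Your proposal is correct and follows essentially the same route as the paper: recover $\theta$ from $\mathfrak{M}_n(\gamma)=\theta^2$ and $G(z,n)$ from \eqref{eq:master}, then count the decompositions $-G(z,n)^{-1}=z-q_n+b_n^2m_n^+(z)+b_{n-1}^2m_n^-(z)$ — choices of $n-1$ poles for $J_n^-$ (countable) in the disjoint case, continuous residue splits at common eigenvalues (uncountable) in the intersecting case — with Lemma~\ref{lem:berg} guaranteeing each reassignment still yields Weyl $m$-functions. Your closing observations (moment finiteness via Remark~\ref{rem:chebotarev}, automatic discreteness, and the final check that the constructed pair realizes the prescribed spectra) correspond to steps the paper carries out or leaves implicit, so there is no substantive divergence.
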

\begin{proof}
  (1) As in the proof of Proposition~\ref{prop:G-reconstruction} one
  recovers $\theta$ and, then, the function $G(z,n)$.  By
  Propositions \ref{prop:Gkk-formula} and
  \ref{prop:form-inverse-green-function}, one has
  \begin{equation}
    \label{eq:inverse-green-reconstruction}
    z-q_n+b_n^2m_n^+(z)+b_{n-1}^2m_n^-(z)=z-q_n+\sum_{k\in M}\frac{\eta_k}{\alpha_k-z}\,.
  \end{equation}
  The fact that the spectra do not intersect means that $m_n^-$ and
  $m_n^+$ do not posses common poles due to
  Proposition~\ref{prop:common-eigenvalues-+-}. Therefore,
  $\{\alpha_k\}_{k\in M}$ is the union of the disjoint sets
  $\{\text{poles of }m_n^+\}$ and $\{\text{poles of }m_n^-\}$. Thus,
  any choice of $n-1$ terms in the series of the r.\,h.\,s of
  \eqref{eq:inverse-green-reconstruction} can be made to correspond to
  $m_n^-$ and the infinite sum containing the remaining terms
  corresponds to $m_n^+$  due to
  Lemma~\ref{lem:berg}. Indeed, Lemma~\ref{lem:berg} shows that if
\begin{equation*}
\sum_{k\in M\setminus N}\frac{\eta_k}{\alpha_k-z}
\end{equation*}
is an Weyl $m$-function, then
\begin{equation*}
\sum_{k\in M\setminus N'}\frac{\eta_k}{\alpha_k-z}
\end{equation*}
is an Weyl $m$-function too, for any other set $N'$ such that $\card(N)=\card(N')$.
Hence
  \begin{equation*}
    b_{n-1}^2m_n^-=\sum_{k\in N}\frac{\eta_k}{\alpha_k-z}\,,\qquad
    b_n^2m_n^+=\sum_{k\in M\setminus N}\frac{\eta_k}{\alpha_k-z}\,.
  \end{equation*}
In view of the fact that
the null moment of the spectral measure of a Jacobi operator is $1$,
that is,
\begin{equation*}
  1=\sum_{k\in M\setminus N}b_n^{-2}\eta_k=\sum_{k\in N}b_{n-1}^{-2}\eta_k
\end{equation*}
(cf. Theorem~\ref{thm:nec-suf-cond-for-meausure}), one has
\begin{equation*}
   b_{n-1}^2:=\sum_{k\in N}\eta_k\quad\text{ and }
\quad b_n^2:=\sum_{k\in M\setminus N}\eta_k\,.
\end{equation*}
Having found $b_n$ and $b_{n-1}$, one finds $m_n^+$ and $m_n^-$. By
Remark~\ref{rem:inverseJM}, these functions determine $J_n^{+}$ and $J_n^{-}$.
The Jacobi operator $\mathfrak{J}$ defined by $J_n^{+}$, $J_n^{-}$, $b_{n-1}$,
$b_{n}$, $q_n$ has the function $G(z,n)$ as its Green function. We have found as many
$\mathfrak{J}$'s with this Green function as subsets of $n-1$ elements from the
countably set $M$. To complete the proof, it remains to show that the
spectra of $\mathfrak{J}$ and $\widetilde{\mathfrak{J}}_n$ coincide with the given
sequences. From the Green function $G(z,n)$ of $J$, construct
$\widehat{\mathfrak{M}}_n$ by \eqref{eq:master}. This function
coincides with $\mathfrak{M}_n$ obtained from the spectra of $J$ and
$\widetilde{J}_n$ since the Green function is the same with $\gamma$
given and $\theta$ uniquely determined. Therefore,
$\widehat{\mathfrak{M}}_n$ and $\mathfrak{M}_n$ have the same poles
and zeros which are $\sigma(J)$ and $\sigma(\widetilde{J}_n)$, respectively.

(2) As in the previous item, the data given allow to find $\theta$
and $G(z,n)$, so one has \eqref{eq:inverse-green-reconstruction}. Let
$C\subset M$ such that $k\in C$ whenever
$\alpha_k\in\sigma(J)\cap\sigma(\widetilde{J}_n)$. By Remark~\ref
{rem:finite-common-eigenvalues}, $\card(C)\le n-1$. Pick an arbitrary
$S\subset M\setminus C$ of $n-1-\card(C)$ elements and for each
$k\in C$ choose $\beta_k\in(0,1)$.  Thus, it follows from
\eqref{eq:inverse-green-reconstruction} that
  \begin{equation*}
    b_{n-1}^2m_n^-=\sum_{k\in S}\frac{\eta_k}{\alpha_k-z}+
\sum_{k\in C}\frac{\beta_k\eta_k}{\alpha_k-z}
\end{equation*}
and
  \begin{equation*}
    b_n^2m_n^+=\sum_{k\in (M\setminus C)\setminus S}\frac{\eta_k}{\alpha_k-z}+
\sum_{k\in C}\frac{(1-\beta_k)\eta_k}{\alpha_k-z}\,.
\end{equation*}
Using again the fact that the null moment of the spectral
measure of a Jacobi operator is $1$, one has
\begin{equation*}
   b_{n-1}^2:=\sum_{k\in S}\eta_k+\sum_{k\in C}\beta_k\eta_k\quad\text{ and }
\quad b_n^2:=\sum_{k\in (M\setminus C)\setminus S}\eta_k+
\sum_{k\in C}(1-\beta_k)\eta_k\,.
\end{equation*}
The Jacobi operator $\mathfrak{J}$ given by $J_n^{+}$, $J_n^{-}$,
$b_{n-1}$, $b_{n}$, $q_n$ has the Green function $G(z,n)$. Note that,
for any choice of the set $S$, there are as many solutions as elements
in the interval $(0,1)$. To conclude the proof, observe that by
construction the common eigenvalues of $J_n^{+}$ and $J_n^{-}$
coincide with $\sigma(J)\cap\sigma(\widetilde{J}_n)$. As in (1), we
show that $\widehat{\mathfrak{M}}_n(z)=\mathfrak{M}_n(z)$ which
implies that the noncommon eigenvalues of $J$ and $\widetilde{J}$
coincide the ones of $\mathfrak{J}$ and $\widetilde{\mathfrak{J}}_n$.

% In this case the functions $m_n^\pm$ have common poles
%  which are also the common eigenvalues of $J$ and
% $\widetilde{J}_n$ (see Proposition~\ref{prop:common-eigenvalues-+-}
% and \cite[Lem.\,3.5]{MR3377115}). Thus, one has to choose only
% $\binom{\aleph_0}{n-q}$ poles of $\widehat{m}_n^-$ among the poles of
% $-G^{-1}$ since the other poles are given. Here, for the common poles
% of the functions $\widehat{m}_n^\pm$, one has to share the
% residues of $-G^{-1}$ at these poles among the residues of
% $\widehat{m}_n^\pm$. Take any $\widehat{\beta}_k\in(0,1)$ for all $k$
% such that $\alpha_k$ is a common pole and let $\widehat{\beta}_k=1$
% for all other $k$'s. Put
% \begin{equation*}
%    \widehat{b}_{n-1}^2:=\sum_{k\in\widehat{\mathcal{N}}}\widehat{\beta}_k\eta_k\quad\text{ and }
% \quad\widehat{b}_n^2:=\sum_{k\in\nats\setminus\widehat{\mathcal{N}}}(1-\widehat{\beta}_k)\eta_k
% \end{equation*}
% It is now evident that $\widehat{J}$ is a solution of the inverse
% problem but this time one has $\binom{\aleph_0}{n-q}$ disjoint
% manifolds of dimension $q$ of such solutions.
\end{proof}
\begin{theorem}
  \label{thm:reconstruction-from-two-spectra1}
  Assume H\,1.  Let the spectra of $J$ and $\widetilde{J}_n$ and the
  parameters $\theta$ and $h$ be given. If
  $(\sigma(J)\cap\sigma(\widetilde{J}_n)\setminus\{\gamma\}=\emptyset$
  and $\mathfrak{M}_n(\gamma)>\theta^2$, then there are countably many
  solutions of the inverse problem. In all other cases when $\gamma$
  is in the spectrum of $J$, there are uncountably many solutions of
  the inverse problem.
  \end{theorem}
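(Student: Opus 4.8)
The plan is to follow the proof of Theorem~\ref{thm:reconstruction-from-two-spectra}: first reconstruct the Green function $G(z,n)$ from the data, then count the admissible ways of writing $-G(z,n)^{-1}$ as a sum of a finite and a semi-infinite Weyl $m$-function. Since $\theta$ and $h$ are given, $\gamma$ is known from \eqref{eq:def-gamma}; because $\gamma\in\sigma(J)$, I would build $\mathfrak{M}_n(z)$ from the noncommon spectra via Proposition~\ref{prop:m-throu-zeros-poles} and then apply the second alternative of Proposition~\ref{prop:G-reconstruction} (which uses precisely $\theta$ and $\gamma$) to obtain $G(z,n)$. By Propositions~\ref{prop:Gkk-formula} and \ref{prop:form-inverse-green-function} one writes $-G(z,n)^{-1}=z-q_n+\sum_{k\in M}\eta_k/(\alpha_k-z)$, whose poles $\{\alpha_k\}$ are the zeros of $G(z,n)$, namely $\sigma(J_n^+)\cup\sigma(J_n^-)$. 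Solutions then correspond to distributions of the residues $\eta_k$ between $b_{n-1}^2 m_n^-$ (a finite, $(n-1)$-dimensional piece) and $b_n^2 m_n^+$ (a semi-infinite piece), with Lemma~\ref{lem:berg} guaranteeing density of the polynomials after each redistribution and Proposition~\ref{prop:common-eigenvalues-+-} controlling the common eigenvalues that result.

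The decisive point is that Proposition~\ref{prop:common-eigenvalues-+-} matches common eigenvalues of $J_n^{\pm}$ with those of $J$ and $\widetilde J_n$ only away from $\gamma$, so $\gamma$ is left unconstrained. I read off the status of $\gamma$ from \eqref{eq:master}: since $\lim_{z\to\gamma}(\gamma-z)G(z,n)=\pi_n^2(\gamma)\rho(\{\gamma\})\ge 0$, the master formula gives $\mathfrak{M}_n(\gamma)-\theta^2=(1-\theta^2)\pi_n^2(\gamma)\rho(\{\gamma\})$. Hence $\mathfrak{M}_n(\gamma)\ne\theta^2$ exactly when $\gamma$ is a pole of $G(z,n)$, i.e. $\pi_n(\gamma)\ne 0$, and the hypothesis $\mathfrak{M}_n(\gamma)>\theta^2$ records that $\gamma$ is such a pole; conversely $\mathfrak{M}_n(\gamma)=\theta^2$ means $\gamma$ is a zero of $G(z,n)$, equivalently $\pi_n(\gamma)=0$, that is $\gamma\in\sigma(J_n^-)$ and $\gamma$ is one of the $\alpha_k$.

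For the first assertion, assume $(\sigma(J)\cap\sigma(\widetilde J_n))\setminus\{\gamma\}=\emptyset$ and $\mathfrak{M}_n(\gamma)>\theta^2$. By Proposition~\ref{prop:common-eigenvalues-+-} the poles of $m_n^+$ and $m_n^-$ are then disjoint, and $\gamma$, being a pole of $G(z,n)$, is not among the $\alpha_k$; moreover, since $\mathfrak{M}_n(\gamma)$ is finite and nonzero, \eqref{eq:master} forces $\gamma$ to be a pole of $\widetilde G(z,n)$ as well, so $\gamma$ is automatically a common eigenvalue of the reconstructed pair $\mathfrak J,\widetilde{\mathfrak J}_n$. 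Each admissible splitting thus assigns every simple residue $\eta_k$ in its entirety to $m_n^-$ or to $m_n^+$ and reduces to choosing the $n-1$ poles of the finite piece out of the countable set $\{\alpha_k\}$; exactly as in part~(1) of Theorem~\ref{thm:reconstruction-from-two-spectra} this gives countably many solutions, the verification that $\widehat{\mathfrak M}_n=\mathfrak M_n$ (together with $\gamma$ being the only, and automatic, common eigenvalue) confirming that each realizes the two prescribed spectra.

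For the second assertion I treat the two ways of leaving case~1. If some common eigenvalue $\alpha_k\neq\gamma$ exists, Proposition~\ref{prop:common-eigenvalues-+-} forces it to be a pole of both $m_n^+$ and $m_n^-$, so its residue splits as $\beta_k\eta_k+(1-\beta_k)\eta_k$ with $\beta_k\in(0,1)$, yielding a continuum exactly as in part~(2) of Theorem~\ref{thm:reconstruction-from-two-spectra}. If instead $\mathfrak{M}_n(\gamma)=\theta^2$, then $\pi_n(\gamma)=0$ while $\gamma\in\sigma(J)$ must still be recovered as an eigenvalue of $\mathfrak J$; here the key technical step, which I expect to be the main obstacle, is to show that an eigenvalue at which the first-kind polynomial vanishes must be a common eigenvalue of the two truncations $J_n^{\pm}$ (by splitting its eigenvector at the $n$-th site, where the value $0$ decouples a nonzero eigenvector of $J_n^-$ from a nonzero eigenvector of $J_n^+$). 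Consequently $\gamma$ enters $\sigma(\mathfrak J)$ only if its residue is split into a shared pole of $J_n^+$ and $J_n^-$ with weight $\beta\in(0,1)$; this is legitimate precisely because Proposition~\ref{prop:common-eigenvalues-+-} imposes no condition at $\gamma$, and it produces uncountably many solutions. The remaining work is routine: verifying, as in Theorem~\ref{thm:reconstruction-from-two-spectra}, that $\widehat{\mathfrak M}_n=\mathfrak M_n$ so that every member of these families is a genuine pair $(\mathfrak J,\widetilde{\mathfrak J}_n)$ with the given spectra.
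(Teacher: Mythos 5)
Your proposal is correct and takes essentially the same route as the paper's proof: recover $\gamma$ and then $G(z,n)$ via Proposition~\ref{prop:G-reconstruction}, use \eqref{eq:master} to conclude that $\mathfrak{M}_n(\gamma)>\theta^2$ makes $\gamma$ a pole of $G(z,n)$ (hence $m_n^{\pm}$ have no common poles, giving the countable case as in part (1) of Theorem~\ref{thm:reconstruction-from-two-spectra}), and in all remaining cases exhibit a common pole of $m_n^{\pm}$ whose residue can be split with a parameter $\beta\in(0,1)$, giving uncountably many solutions as in part (2). The only difference is cosmetic: where the paper cites results of \cite{MR3377115} (Cor.\,2.3, Lems.\,2.8, 2.9) to show that a zero of $G(z,n)$ at $\gamma\in\sigma(J)$ forces $\gamma$ to be a common pole of $m_n^-$ and $m_n^+$, you supply the eigenvector-splitting argument directly.
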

  \begin{proof}
    If we have $\theta$ and $h$ then by (\ref{eq:def-gamma}) we get
    $\gamma$ and using Proposition \ref{prop:G-reconstruction} we
    recover $G(z,n)$. To prove there are countably many solutions we
    shall show that there are no common poles of $m_n^-$ and $m_n^+$
    and the analysis will proceed then completely analogous to $(1)$
    of Theorem \ref{thm:reconstruction-from-two-spectra}.  Since the
    spectra of $J$ and $\widetilde{J}_n$ in this case only intersect
    at $\gamma$, this is the only possible common pole. This follows
    from Proposition \ref{prop:common-eigenvalues-+-}, recalling the
    spectra of $J_n^+$ and $J_n^-$ are the poles of $m_n^-$ and
    $m_n^+$ respectively. But the condition
    $\mathfrak{M}_n(\gamma)>\theta^2$ together with with
    \eqref{eq:master} imply that $\gamma$ is a pole of $G(z,n)$ and
    therefore $\gamma$ cannot be a pole of $m_n^-$ or $m_n^+$ since
    $G(z,n)$ vanishes at these poles (cf. Proposition
    \ref{prop:Gkk-formula}). We conclude that there are no common
    poles of $m_n^-$ and $m_n^+$ as was to be shown. In all other
    cases, when $\gamma\in\sigma(J)\cap\sigma(\widetilde{J}_n)$, the
    Weyl $m$-functions $m_n^-$ and $m_n^+$ have common eigenvalues and the
    analysis is analogous to $(2)$ of Theorem
    \ref{thm:reconstruction-from-two-spectra}. Notice that if
    $\gamma\in\sigma(J)$ and not a pole of $G(z,n)$ , then
    $G(\gamma,n)=0$ and $\gamma$ is a common pole of $m_n^-$ and
    $m_n^+$ by \cite[Cor.\,2.3,
    Lems.\,2.8,\,2.9]{MR3377115}.
  \end{proof}
\begin{remark}
  \label{rem:theorem-mikhail}
  The previous theorem corresponds to \cite[Thm.\,4]{MR2915295} which
  deals with the
  case of finite mass-spring systems. In that setting, the analogous
  of case
  (1) of
  Theorem~\ref{thm:reconstruction-from-two-spectra} yields a finite
  set of solutions. The other cases in
  \cite[Thm.\,4]{MR2915295} can also be treated in a way similar to
  (1) and (2) of
  Theorem~\ref{thm:reconstruction-from-two-spectra}.
\end{remark}
\begin{theorem}
  \label{thm:necessary-sufficient}
  Let $S$ and $\widetilde{S}$ be two infinite sequences without finite
  points of accumulation, $\gamma\in\reals$ with
  $\gamma\not\in S\cup\widetilde{S}$, and $n\in\nats$. There is a
  matrix \eqref{eq:jm-0} such that $S=\sigma(J)$ and
  $\widetilde{S}=\sigma(\widetilde{J}_n)$, with $0<\theta<1$ and
  $h=\gamma\left(1/\theta^2-1\right)$, if and only if the following
  conditions hold
  \begin{enumerate}[(1)]
  \item Between two consecutive points of
    $(S\setminus\widetilde{S})\cup\{\gamma\}$ there is exactly one point
    of $\widetilde{S}\setminus S$. Any point of
    $\widetilde{S}\setminus S$ lies between two consecutive points of
    $(S\setminus\widetilde{S})\cup\{\gamma\}$.

  If the strictly
    increasing sequence $\{\lambda_k\}_{k\in
      M}$ coincides with $S\setminus\widetilde{S}$, then we enumerate the interlacing
    points $\{\mu_k\}_{k\in
      M}=\widetilde{S}\setminus S$ such that for $\beta\in
    (S\setminus\widetilde{S})\cup\{\gamma\}$,
    \begin{align*}
        \lambda_k<\mu_k<\beta &\quad \text{ if }\, \beta\le\gamma
      \quad \text{ and }\quad
(\lambda_k,\beta)\cap (S\setminus\widetilde{S})=\emptyset\,,\\
      \beta<\mu_k<\lambda_k & \quad \text{ if }\, \beta\ge \gamma
      \quad \text{ and }\quad (\beta,\lambda_k)\cap
                              (S\setminus\widetilde{S})
=\emptyset\,.
    \end{align*}
  \item The series
    \begin{equation*}
       \sum_{k\in M} \abs{\lambda_k-\mu_k}
    \end{equation*}
is convergent.
\item If $\lambda\in S\cap\widetilde{S}$, then
  $\mathfrak{N}(\lambda)=\mathfrak{N}(\gamma)$, where
\begin{equation}
  \label{eq:N-definition}
  \mathfrak{N}(z):=\prod_{k\in M}\frac{z-\mu_k}{z-\lambda_k}\,.
\end{equation}
\item The function $\mathfrak{G}$ satisfies condition 3 of
  Proposition~\ref{prop:nec-suf-green}, where
  \begin{equation*}
    \mathfrak{G}(z):=
\frac{\mathfrak{N}(z)-\mathfrak{N}(\gamma)}{(\mathfrak{N}(\gamma)-1)(z-\gamma)}\,.
  \end{equation*}
\end{enumerate}
\end{theorem}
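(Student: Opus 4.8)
The plan is to prove both implications by reducing everything to a single identity together with the characterization of Green functions in Proposition~\ref{prop:nec-suf-green}. The linchpin is this: whenever $\mathfrak{N}$ coincides with $\mathfrak{M}_n$ and $\mathfrak{N}(\gamma)=\theta^2$ (which holds automatically because $\gamma\notin\sigma(J)$ forces $\mathfrak{M}_n(\gamma)=\theta^2$ by \eqref{eq:master}), the master formula \eqref{eq:master} gives $\mathfrak{M}_n(z)-\theta^2=(1-\theta^2)(\gamma-z)G(z,n)$. Substituting this into the definition of $\mathfrak{G}$ collapses the factor $(\theta^2-1)(z-\gamma)$ and yields $\mathfrak{G}(z)=G(z,n)$ identically. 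Thus $\mathfrak{G}$ is just the $n$-th Green function in disguise, and condition (4) is precisely the assertion that $\mathfrak{G}=G(\cdot,n)$ obeys condition 3 of Proposition~\ref{prop:nec-suf-green}.

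For necessity I assume $S=\sigma(J)$ and $\widetilde{S}=\sigma(\widetilde{J}_n)$ with $\theta\in(0,1)$ and $h=\gamma(1/\theta^2-1)$, the latter being equivalent to \eqref{eq:def-gamma}. Setting $\{\lambda_k\}=S\setminus\widetilde{S}$ and $\{\mu_k\}=\widetilde{S}\setminus S$, Proposition~\ref{prop:m-throu-zeros-poles} supplies the product representation $\mathfrak{N}=\mathfrak{M}_n$ together with the enumeration; the confinement of each $\mu_k=\lambda_k(\theta,h)$ between the fixed zeros $\eta_{k-1},\eta_k$ of $G(\cdot,n)$, combined with the attractor role of $\gamma$ reflected in the signs of Lemma~\ref{lem:derivative} and in $\mathfrak{M}_n(\gamma)=\theta^2\in(0,1)$, yields the interlacing and the enumeration of condition (1). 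Condition (2) is exactly Theorem~\ref{thm:absolute-convergence} applied to $\mu_k=\lambda_k(\theta,h)$ and $\lambda_k=\lambda_k(1,0)$. Condition (3) follows because each common eigenvalue $\lambda\in S\cap\widetilde{S}$ is a zero of $G(\cdot,n)$ (Proposition~\ref{prop:common-eigenvalues-+-}), equivalently $\mathfrak{M}_n(\lambda)=\theta^2=\mathfrak{M}_n(\gamma)$ by Remark~\ref{rem:found-theta}. Finally, condition (4) is immediate once $\mathfrak{G}=G(\cdot,n)$ is recognized as a genuine Green function via the forward direction of Proposition~\ref{prop:nec-suf-green}.

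For sufficiency I run the construction backwards. Condition (2) and Lemma~\ref{lem:unif-convergence-prod} guarantee that the product $\mathfrak{N}(z)=\prod_{k\in M}\frac{z-\mu_k}{z-\lambda_k}$ converges to a meromorphic function with $\mathfrak{N}(z)\to 1$ as $z\to\infty$ off the spectrum. I then set $\theta^2:=\mathfrak{N}(\gamma)$ and use the interlacing of condition (1), with the prescribed position of $\gamma$, first to verify $0<\mathfrak{N}(\gamma)<1$ so that $\theta\in(0,1)$, and second to verify that $\mathfrak{G}$ is Herglotz (condition 1 of Proposition~\ref{prop:nec-suf-green}); the required asymptotics $\mathfrak{G}(z)=-1/z+O(z^{-2})$ (condition 2 there) drop out of $\mathfrak{N}(z)\to 1$ and the explicit form of $\mathfrak{G}$, while condition 3 there is verbatim condition (4). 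Proposition~\ref{prop:nec-suf-green} then produces a Jacobi operator $J$ satisfying H\,1 whose $n$-th Green function is $\mathfrak{G}$. Putting $h:=\gamma(1/\theta^2-1)$ and forming $\widetilde{J}_n$ via \eqref{eq:def-tilde-j}, the master formula \eqref{eq:master} applied to the recovered $G(\cdot,n)=\mathfrak{G}$ returns $\mathfrak{M}_n=\mathfrak{N}$; reading off its poles and zeros (Proposition~\ref{prop:zeros-poles-m}) recovers $S\setminus\widetilde{S}$ and $\widetilde{S}\setminus S$, while condition (3), being equivalent to $\mathfrak{G}(\lambda)=G(\lambda,n)=0$ at each $\lambda\in S\cap\widetilde{S}$, forces these points to be common eigenvalues (Propositions~\ref{prop:zero-green-zero-perturbed} and~\ref{prop:common-eigenvalues-+-}). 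Hence $\sigma(J)=S$ and $\sigma(\widetilde{J}_n)=\widetilde{S}$.

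The main obstacle I anticipate is the interface between the purely geometric interlacing and enumeration of condition (1) and the two analytic facts it must encode in the sufficiency direction: that $\theta^2=\mathfrak{N}(\gamma)$ lands strictly inside $(0,1)$, and that $\mathfrak{G}$ is Herglotz. Both hinge on tracking the sign of $\mathfrak{N}(z)-\mathfrak{N}(\gamma)$ against the sign of $z-\gamma$ across each interlacing gap, using the placement of $\gamma$ relative to the $\lambda_k$ and $\mu_k$ prescribed in condition (1). Everything else is bookkeeping of zeros and poles resting on the already-established Propositions~\ref{prop:m-throu-zeros-poles} and~\ref{prop:nec-suf-green} and on Theorem~\ref{thm:absolute-convergence}.
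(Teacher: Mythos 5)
Your overall architecture does track the paper's proof: necessity via Proposition~\ref{prop:m-throu-zeros-poles}, Theorem~\ref{thm:absolute-convergence} and the identity $\mathfrak{G}=G(\cdot,n)$ obtained from \eqref{eq:master}; sufficiency via $\theta^2:=\mathfrak{N}(\gamma)$, the Herglotz property and asymptotics of $\mathfrak{G}$, Proposition~\ref{prop:nec-suf-green}, and then running \eqref{eq:master} backwards. However, there is a genuine gap in your sufficiency argument at the point where you handle $S\cap\widetilde{S}$. You assert that condition (3), i.e.\ $\mathfrak{G}(\lambda)=0$ for $\lambda\in S\cap\widetilde{S}$, ``forces these points to be common eigenvalues'' of $J$ and $\widetilde{J}_n$, citing Propositions~\ref{prop:zero-green-zero-perturbed} and \ref{prop:common-eigenvalues-+-}. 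That implication is backwards: Proposition~\ref{prop:common-eigenvalues-+-} says that common eigenvalues of $J$ and $\widetilde{J}_n$ are zeros of $G(\cdot,n)$ and coincide with the common eigenvalues of $J_n^+$ and $J_n^-$, but a zero of the Green function only means $\lambda\in\sigma(J_n^+)\cup\sigma(J_n^-)$; the point $\lambda$ is an eigenvalue of $J$ precisely when it lies in \emph{both} $\sigma(J_n^+)$ and $\sigma(J_n^-)$, and otherwise it is not in $\sigma(J)$ at all. Moreover, Proposition~\ref{prop:nec-suf-green} does not produce one operator but a whole family sharing the Green function $\mathfrak{G}$ --- this non-uniqueness is exactly the content of Theorem~\ref{thm:reconstruction-from-two-spectra} --- and for a generic member of that family the points of $S\cap\widetilde{S}$ are \emph{not} eigenvalues, so your construction, as written, does not yield $\sigma(J)=S$.

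What the paper does to close this is a deliberate selection step: after writing $-\mathfrak{G}(z)^{-1}=z-q_n+b_n^2m_n^+(z)+b_{n-1}^2m_n^-(z)$ via \eqref{eq:Gkk-formula2}, it \emph{chooses} the splitting of the poles of $-\mathfrak{G}^{-1}$ between $m_n^+$ and $m_n^-$ so that these two Weyl $m$-functions have common poles exactly at $S\cap\widetilde{S}$; condition (3) guarantees that $\mathfrak{G}$ vanishes on $S\cap\widetilde{S}$, which is what makes this choice \emph{possible} (and Lemma~\ref{lem:berg} is what guarantees the redistributed pole weights still define genuine Weyl $m$-functions). Only after this choice does one get $\sigma(J)=S$, and then the identity $\mathfrak{M}_n=\mathfrak{N}$ identifies $\sigma(\widetilde{J}_n)$ with $\widetilde{S}$. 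A second, lesser, shortfall: the Herglotz property of $\mathfrak{G}$, which you flag as an ``anticipated obstacle'' but never establish, is a substantive step in the paper --- it expands the finite products in partial fractions (interlacing from condition (1) makes all residues positive), passes to the limit, and then uses the representation \eqref{rpo:cl-represent-chevotarev} together with the residue identity \eqref{eq:residue-N} to show that subtracting the pole term $\mathfrak{N}(\gamma)/(z-\gamma)$ preserves the Herglotz property; sign-tracking across gaps, as you propose, would have to be turned into an argument of comparable substance.
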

\begin{proof}
  We begin by proving that the conditions are necessary, i.\,e., given
  a matrix \eqref{eq:jm-0} such that $S=\sigma(J)$ and
  $\widetilde{S}=\sigma(\widetilde{J}_n)$ with $\theta<1$ and
  $h=\gamma\left(1/\theta^2-1\right)$, then the sequences satisfy
  Conditions (1)--(4). Condition (1) is a consequence of
  \cite[Thms.\,3.8 and 3.10]{MR3377115}. Condition (2) follows from
  Theorem~\ref{thm:absolute-convergence}. For this, we have to show
  that the set
  $\sigma(\widetilde{J}_n)\setminus\sigma(J)=\{\mu_k\}_{k\in M}$ is
  such that $\mu_k=\lambda_k(\theta,h)$ (as in
  Theorem~\ref{thm:absolute-convergence}). Indeed, for any values of
  the perturbative parameters $\theta$ and $h$, the eigenvalue
  $\lambda_k(\theta,h)$ (see the notation introduced before
  Lemma~\ref{lem:derivative}) is constrained between $\eta_{k-1}$ and
  $\eta_k$ which do not move as $\theta$ and $h$ change (see
  Proposition~\ref{prop:zero-green-zero-perturbed}).  Therefore the
  enumeration of the sequence $\{\mu_k\}_{k\in M}$ is such that
  $\lambda_k(\theta,h)=\mu_k$ for any values of the perturbative
  parameters.

% For each $\lambda_k\in\sigma(J)\setminus\sigma(\widetilde{J}_n)$
% consider the function $\lambda_k(\theta,h)$ such that
% $\lambda_k(\theta,h)\in\sigma(J(\theta,h))$ and
% $\lambda_k(1,0)=\lambda_k$. The function $\lambda_k(\theta,h)$ is
% a continuous function of $\theta$ and $h$
% (Lemma~\ref{lem:derivative}). If $\lambda_k(1,0)=\lambda_{k'}

  For proving Condition (3), observe that \eqref{eq:master} and
  Proposition~\ref{prop:m-throu-zeros-poles} give
  \begin{equation*}
   \mathfrak{N}(z)=
\prod_{k\in M}\frac{z-\mu_k}{z-\lambda_k}=\theta^2+(1-\theta^2)(\gamma-z)G(z,n)\,.
  \end{equation*}
Now, \cite[Lem.\,3.5]{MR3377115} implies that
$\mathfrak{N}(\lambda)=\theta^2$ for any $\lambda\in
S\cap\widetilde{S}$. Also, from \eqref{eq:master}, it follows that
$\mathfrak{N}(\gamma)=\theta^2$ since $\gamma\not\in S$ means that
$\gamma$ is not a pole of $G(z,n)$.   Condition (4) follows from
  Proposition~\ref{prop:m-throu-zeros-poles}, \eqref{eq:master} and
  Proposition~\ref{prop:nec-suf-green}.

Let us now prove that the conditions are sufficient.
First observe that Condition (1) implies that
\begin{equation*}
  0<\frac{\gamma-\mu_k}{\gamma-\lambda_k}<1
\end{equation*}
for any $k\in M$. Thus,
\begin{equation*}
  0<\mathfrak{N}(\gamma)<1\,.
\end{equation*}

From \eqref{eq:N-definition}, one has
  \begin{equation}
\label{eq:prospect-herglotz}
    -\frac{\mathfrak{N}(z)}{z-\gamma}
=-\lim_{n\to\infty}(z-\gamma)^{-1}\prod_{k\in M_n}\frac{z-\mu_k}{z-\lambda_k}\,.
  \end{equation}
% where we have chosen the sequence $\{M_n\}_{n=1}^\infty$ in such a way
% that $k_0\in M_n$ for any $n\in\nats$ (see  Proposition~\ref{prop:interlacing}).
Expanding in partial fractions
  \begin{equation}
    \label{eq:finite-herglotz}
    -\prod_{k\in
      M_n}\frac{z-\mu_k}{z-\nu_k}\frac{1}{z-\gamma}=\sum_{k\in
      M_n}\frac{\alpha_k}{\lambda_k-z}+
\frac{\alpha}{\gamma-z}\,.
  \end{equation}
  Since the sequences $\{\mu_k\}_{k\in M}$ and
  $\{\lambda_k\}_{k\in M}\cup\{\gamma\}$ interlace according to
  Condition (1), we have that $\alpha_k>0$ for any $k\in M_n$ and
  $\alpha>0$ and, therefore, the l.\,h.\,s. of
  \eqref{eq:finite-herglotz} is Herglotz (cf. the proof of
  \cite[Cor.\,2.5]{MR1616422}) for any $n\in\nats$. By definition, the
  l.\,h.\,s. of \eqref{eq:prospect-herglotz} is analytic outside the
  real axis and it is the limit of Herglotz functions, thus it is a
  Herglotz function. By \cite[Thm.\,2, Chap.\,7]{MR589888}, one has
\begin{align}\label{rpo:cl-represent-chevotarev}
 -\frac{\mathfrak{N}(z)}{z-\gamma}= a z + b + \sum_{\substack{k \in \mathcal{M}\\ k
  \not=0}} A_{k}\left(\frac{1}{\nu_k - z}-
\frac{1}{\nu_k}\right) + \frac{A_0}{\nu_0-z},
\end{align} where $a \geq 0$, $b \in \mathbb{R}$ and $A_k \geq 0$ for
all $k \in \mathcal{M}$. Here only $\nu_0$ is allowed
to be zero. Assume, without loss of generality, that
$\gamma=\nu_{k_0}$ and $k_0\ne 0$. Then
\begin{equation*}
  -\frac{\mathfrak{N}(z)}{z-\gamma}= a z + b + \sum_{\substack{k \in \mathcal{M}\\ k
  \ne0,k_0}} A_{k}\left(\frac{1}{\nu_k - z}-
\frac{1}{\nu_k}\right) + \frac{A_0}{\nu_0-z} +  A_{k_0}\left(\frac{1}{\gamma - z}-
\frac{1}{\gamma}\right)\,.
\end{equation*}
Note that
\begin{equation}
 \label{eq:residue-N}
  \res_{z=\gamma}\frac{\mathfrak{N}(z)}{z-\gamma}:=\lim_{z\to\gamma}(z-\gamma)
\frac{\mathfrak{N}(z)}{z-\gamma}=\mathfrak{N}(\gamma)\,.
\end{equation}
Since the residue of the r.\,h.\,s. at $\gamma=\nu_{k_0}$ of
\eqref{rpo:cl-represent-chevotarev} is $-A_{k_0}$ (see the proof of \cite[Chap.\,7,
Thm.\,2]{MR589888}), one has
\begin{equation*}
  -\frac{\mathfrak{N}(z)}{z-\gamma}+\frac{\mathfrak{N}(\gamma)}{z-\gamma}=
a z + b + \sum_{\substack{k \in \mathcal{M}\\ k
  \ne0,k_0}} A_{k}\left(\frac{1}{\nu_k - z}-
\frac{1}{\nu_k}\right) + \frac{A_0}{\nu_0-z} -  A_{k_0}\frac{1}{\gamma}\,.
\end{equation*}
The last equation implies that 
\begin{equation*}
-\frac{\mathfrak{N}(z)}{z-\gamma}+\frac{\mathfrak{N}(\gamma)}{z-\gamma}
\end{equation*}
is a Herglotz function. This, in turn, yields that $\mathfrak{G}$ is
Herglotz since $0<\mathfrak{N}(\gamma)<1$.

Note that, as was shown in Lemma~\ref{lem:unif-convergence-prod},
Condition (2) implies that
\begin{equation*}
   1=\lim_{\substack{\abs{z}\to\infty\\
       z\not\in\mathcal{B}}}\prod_{k\in M}\frac{z-\mu_k}{z-\lambda_k}
=\lim_{\substack{\abs{z}\to\infty\\
       z\not\in\mathcal{B}}}\mathfrak{N}(z)
 \end{equation*}
Which in turn yields
\begin{equation*}
  \mathfrak{G}(z)=-\frac{1}{z}+O(z^{-2})
\end{equation*}
as $z\to\infty$
along any curve away of a strip containing the real axis.  Taking into
account Condition (4), it follows from
Proposition~\ref{prop:nec-suf-green}, that $\mathfrak{G}$
is the $n$-th Green function of a family of Jacobi operators.
There is an element $J$ of this family such that $\sigma(J)=S$. Indeed,
by \eqref{eq:Gkk-formula2}, there are $q_n\in\reals$, $b_n,b_{n-1}>0$,
and Weyl $m$-functions $m_n^\pm$, such that
\begin{equation*}
  -\mathfrak{G}(z)^{-1}=z-q_n+b_n^2m_n^++b_{n-1}^2m_n^-\,.
\end{equation*}
Now for the spectrum of $J$ to be $S$, one can always choose the Weyl
$m$-functions of the submatrices $J_n^\pm$ in such a way that
$m_n^\pm$ have common poles at $S\cap\widetilde{S}$.  Here Condition
(3) guarantees that $\mathfrak{G}$ has zeros at $S\cap\widetilde{S}$.

Having fixed the operator $J$ such that $\sigma(J)=S$, one defines
\begin{equation}
  \label{eq:theta-definition}
  \theta:=+\sqrt{\mathfrak{N}(\gamma)}
\quad\text{ and }\quad h:=\gamma\left(1/\theta^2-1\right)\,.
\end{equation}
Note that $\theta<1$. Consider the Jacobi operator $J$ and
the operator $\widetilde{J}_n$ given in \eqref{eq:def-tilde-j}. The
$n$-th Green function of $J$ satisfies
\begin{equation*}
  G(z,n)=\frac{\prod_{k\in
      M}\frac{z-\widetilde{\mu}_k}{z-\lambda_k}-\theta^2}
{(\theta^2-1)(z-\gamma)}\,,
\end{equation*}
where $\{\widetilde{\mu}_k\}_{k\in M}=\sigma(\widetilde{J}_n)$ (see \eqref{eq:master}. From
the definition of $\mathfrak{G}$ and our definition of $\theta$
(see \eqref{eq:theta-definition}), one has
\begin{equation*}
  \prod_{k\in
      M}\frac{z-\widetilde{\mu}_k}{z-\lambda_k}=\prod_{k\in
      M}\frac{z-\mu_k}{z-\lambda_k}
\end{equation*}
for any $z\in\complex\setminus\reals$. The last equality implies that
\begin{equation*}
  \mu_k=\widetilde{\mu}_k\quad\text{for any}\quad k\in M\,.
\end{equation*}

\end{proof}
\begin{theorem}
  \label{thm:necessary-sufficient-intersection}
  Let $S$ and $\widetilde{S}$ be two infinite sequences without finite
  points of accumulation, $\gamma\in S\cap\widetilde{S}$, and
  $n\in\nats$. There is a matrix \eqref{eq:jm-0} such that
  $S=\sigma(J)$ and $\widetilde{S}=\sigma(\widetilde{J}_n)$, with
  $0<\theta<1$ and $h=\gamma\left(1/\theta^2-1\right)$, if and only if
  the following conditions hold
  \begin{enumerate}[(1)]
  \item Between two consecutive points of
    $(S\setminus\widetilde{S})\cup\{\gamma\}$ there is exactly one point
    of $\widetilde{S}\setminus S$. Any point of
    $\widetilde{S}\setminus S$ lies between two consecutive points of
    $(S\setminus\widetilde{S})\cup\{\gamma\}$.

  If the strictly
    increasing sequence $\{\lambda_k\}_{k\in
      M}$ coincides with $S\setminus\widetilde{S}$, then we enumerate the interlacing
    points $\{\mu_k\}_{k\in
      M}=\widetilde{S}\setminus S$ such that for $\beta\in
    (S\setminus\widetilde{S})\cup\{\gamma\}$,
    \begin{align*}
        \lambda_k<\mu_k<\beta &\quad \text{ if }\, \beta\le\gamma
      \quad \text{ and }\quad
(\lambda_k,\beta)\cap (S\setminus\widetilde{S})=\emptyset\,,\\
      \beta<\mu_k<\lambda_k & \quad \text{ if }\, \beta\ge \gamma
      \quad \text{ and }\quad (\beta,\lambda_k)\cap
                              (S\setminus\widetilde{S})
=\emptyset\,.
    \end{align*}
  \item The series
    \begin{equation*}
       \sum_{k\in M} \abs{\lambda_k-\mu_k}
    \end{equation*}
is convergent.
\item If $\lambda_1,\lambda_2$ are in
  $(S\cap\widetilde{S})\setminus\{\gamma\}$, then
  $\mathfrak{N}(\lambda_1)=\mathfrak{N}(\lambda_2)\le\mathfrak{N}(\gamma)$, where
  \begin{equation*}
    \mathfrak{N}(z):=\prod_{k\in M}\frac{z-\mu_k}{z-\lambda_k}\,.
  \end{equation*}
\item The function
 \begin{equation*}
    \mathfrak{G}(z):=
    \begin{cases}
      \frac{\mathfrak{N}(z)-\mathfrak{N}(\omega)}{(\mathfrak{N}(\omega)-1)(z-\gamma)}
      & \text{ if } (S\cap\widetilde{S})\setminus\{\gamma\}\ne\emptyset\\
\frac{\mathfrak{N}(z)-\vartheta^2}{(\vartheta^2-1)(z-\gamma)}& \text{ if } (S\cap\widetilde{S})\setminus\{\gamma\}=\emptyset\,,
    \end{cases}
  \end{equation*}
where
\begin{equation*}
\omega\in (S\cap\widetilde{S})\setminus\{\gamma\}\,,
\quad \vartheta^2\in
\begin{cases}
  (0,\mathfrak{N}(\gamma)) & \text{ if } \mathfrak{N}'(\gamma)\ne 0\\
(0,\mathfrak{N}(\gamma)] & \text{ if } \mathfrak{N}'(\gamma)= 0
\end{cases}
\end{equation*}
satisfies condition 3 of
  Proposition~\ref{prop:nec-suf-green}.
\end{enumerate}
\end{theorem}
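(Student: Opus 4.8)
The plan is to prove both implications, adapting the argument of Theorem~\ref{thm:necessary-sufficient} to the situation $\gamma\in S\cap\widetilde S$, the essential new feature being that $\gamma$ must now be realized as a genuine eigenvalue of $J$, either as a pole or as a zero of the Green function.

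For necessity I would argue exactly as in the previous theorem. Condition~(1) is the interlacing provided by \cite[Thms.\,3.8 and 3.10]{MR3377115}; Condition~(2) is Theorem~\ref{thm:absolute-convergence}, once one checks via Proposition~\ref{prop:zero-green-zero-perturbed} that the enumeration satisfies $\mu_k=\lambda_k(\theta,h)$. For Condition~(3), Proposition~\ref{prop:m-throu-zeros-poles} with \eqref{eq:master} gives $\mathfrak N=\mathfrak M_n$, and \cite[Lem.\,3.5]{MR3377115} yields $\mathfrak N(\lambda)=\theta^2$ at every common eigenvalue $\lambda\ne\gamma$, so all such values coincide; the inequality $\theta^2\le\mathfrak N(\gamma)$ follows because $\mathfrak N(\gamma)=\theta^2+(1-\theta^2)\lim_{z\to\gamma}(\gamma-z)G(z,n)$, the limit being the nonnegative residue of the Herglotz function $G(\cdot,n)$ at $\gamma$. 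Finally, taking $c=\theta^2$ (which equals $\mathfrak N(\omega)$ when common eigenvalues other than $\gamma$ persist, and is an admissible $\vartheta^2$ otherwise), \eqref{eq:master} shows $\mathfrak G=G(\cdot,n)$, a genuine Green function, so Condition~(4) holds by Proposition~\ref{prop:nec-suf-green}.

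For sufficiency I would first deduce $0<\mathfrak N(\gamma)<1$ from Condition~(1), and then reproduce the partial-fraction/Chebotarev argument of Theorem~\ref{thm:necessary-sufficient} to show that $-\mathfrak N(z)/(z-\gamma)$ is Herglotz. The new ingredient is the constant $c$ ($=\mathfrak N(\omega)$ or $=\vartheta^2$) subtracted from $\mathfrak N$: writing $-\frac{\mathfrak N(z)-c}{z-\gamma}=-\frac{\mathfrak N(z)-\mathfrak N(\gamma)}{z-\gamma}+\frac{c-\mathfrak N(\gamma)}{z-\gamma}$, the second summand is Herglotz precisely when $c\le\mathfrak N(\gamma)$, so with $0<c\le\mathfrak N(\gamma)<1$ one concludes that $\mathfrak G$ is Herglotz. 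Condition~(2) supplies the asymptotics $\mathfrak G(z)=-z^{-1}+O(z^{-2})$ through Lemma~\ref{lem:unif-convergence-prod}, and Condition~(4) with Proposition~\ref{prop:nec-suf-green} then identifies $\mathfrak G$ as the $n$-th Green function of a family of Jacobi operators. One selects $J$ with $\sigma(J)=S$ by arranging, as in Theorem~\ref{thm:necessary-sufficient}, common poles of $m_n^\pm$ at $S\cap\widetilde S$, where Condition~(3) guarantees that $\mathfrak G$ vanishes. Setting $\theta:=\sqrt c$ and $h:=\gamma(1/\theta^2-1)$ gives $0<\theta<1$, and \eqref{eq:master} forces $\prod_{k\in M}\frac{z-\widetilde\mu_k}{z-\lambda_k}=\mathfrak N(z)$, whence $\sigma(\widetilde J_n)=\widetilde S$.

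The main obstacle is the correct realization of $\gamma$ as an eigenvalue of $J$, which is exactly what the dichotomy on $\mathfrak N'(\gamma)$ in Condition~(4) encodes. Examining $\mathfrak G$ near $\gamma$, if $c<\mathfrak N(\gamma)$ then $\mathfrak G$ has a pole at $\gamma$ (so $\gamma\in\sigma(J)$ with $\pi_n(\gamma)\ne0$), whereas if $c=\mathfrak N(\gamma)$ the singularity is removable with $\mathfrak G(\gamma)=\mathfrak N'(\gamma)/(\mathfrak N(\gamma)-1)$, so $\gamma$ remains an eigenvalue only when this value vanishes, i.e.\ when $\mathfrak N'(\gamma)=0$; this is precisely why $\vartheta^2=\mathfrak N(\gamma)$ is admissible exactly in that case. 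The remaining delicate point is to verify $c>0$, so that $\theta=\sqrt c$ is well defined and strictly less than $1$: when common eigenvalues persist this is the positivity of the common value $\mathfrak N(\omega)$, which I would extract from the sign pattern imposed by the interlacing of Condition~(1), while in their absence positivity is guaranteed by the stipulation $\vartheta^2>0$.
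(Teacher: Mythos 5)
Your proposal is correct and follows essentially the same route as the paper's own proof: necessity via the same citations, the enumeration argument, and the sign of the residue of $G(\cdot,n)$ at $\gamma$ giving $\theta^2\le\mathfrak{N}(\gamma)$; sufficiency via the Chebotarev representation of $-\mathfrak{N}(z)/(z-\gamma)$, subtracting $c/(z-\gamma)$ with $c=\mathfrak{N}(\omega)$ or $\vartheta^2\le\mathfrak{N}(\gamma)$ to preserve the Herglotz property (your two-term decomposition is algebraically the same manipulation), and then repeating the endgame of Theorem~\ref{thm:necessary-sufficient}. Your closing remarks on the dichotomy in $\mathfrak{N}'(\gamma)$ and on the positivity of $c$ in fact make explicit points that the paper's proof leaves implicit.
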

\begin{proof}
  The fact that (1) and (2) are necessary is proven as in
  Theorem~\ref{thm:necessary-sufficient}. Let us prove that (3) is
  necessary. Note that if $S=\sigma(J)$ and
  $\widetilde{S}=\sigma(\widetilde{J}_n)$, then
  $\mathfrak{N}(z)=\mathfrak{M}_n(z)$. By \cite[Lem.\,3.5]{MR3377115},
  $G(\lambda_1,n)=G(\lambda_2,n)=0$ for any
  $\lambda_1,\lambda_2\in(S\cap\widetilde{S})\setminus\{\gamma\}$. Thus,
  \eqref{eq:master} implies that
  $\mathfrak{M}_n(\lambda_1)=\mathfrak{M}_n(\lambda_2)=\theta^2$. Now,
  it follows from \cite[Lem.\,2.8]{MR3377115}, that $\gamma$ is either
  a zero or a pole of $G(z,n)$, therefore by \eqref{eq:green-integral}
  \begin{equation*}
    \res_{z=\gamma}G(z,n)=-\pi^2_n(\gamma)\rho\{\gamma\}\le 0\,.
  \end{equation*}
 Thus, using \eqref{eq:master}, one has
 \begin{equation*}
   \mathfrak{M}_n(\gamma)=\theta^2-(1-\theta^2) \res_{z=\gamma}G(z,n)\ge\theta^2\,.
 \end{equation*}
 Condition (4) follows from Proposition~\ref{prop:nec-suf-green} and
 the fact that, due to Condition~(3), the function $\mathfrak{G}(z)$ coincides
 with $G(z,n)$.

 We now prove that the conditions are sufficient. First note that
 Condition~(1) implies that
 \begin{equation}
0<\mathfrak{N}(\gamma)<1.\label{eq:less-then-one}
\end{equation}
As in the
 proof of Theorem~\ref{thm:necessary-sufficient}, one shows that
 \begin{equation*}
      -\frac{\mathfrak{N}(z)}{z-\gamma}
 \end{equation*}
is a Herglotz function. Thus, assuming  without loss of generality, that
$\gamma=\nu_{k_0}$ and $k_0\ne 0$. Then
\begin{equation*}
  -\frac{\mathfrak{N}(z)}{z-\gamma}= a z + b + \sum_{\substack{k \in \mathcal{M}\\ k
  \ne0,k_0}} A_{k}\left(\frac{1}{\nu_k - z}-
\frac{1}{\nu_k}\right) + \frac{A_0}{\nu_0-z} + \mathfrak{N}(\gamma)\left(\frac{1}{\gamma - z}-
\frac{1}{\gamma}\right)\,,
\end{equation*}
where we have used \ref{eq:residue-N}. Hence
\begin{equation}
\label{eq:first-candidate-herglotz}
\begin{split}
  &-\frac{\mathfrak{N}(z)}{z-\gamma} + \frac{\vartheta^2}{z-\gamma} = \\
&=a z + b + \sum_{\substack{k \in \mathcal{M}\\ k
  \ne0,k_0}} A_{k}\left(\frac{1}{\nu_k - z}-
\frac{1}{\nu_k}\right) + \frac{A_0}{\nu_0-z} + (\mathfrak{N}(\gamma)-\vartheta^2)\left(\frac{1}{\gamma - z}-
\frac{1}{\gamma}\right)\,.
\end{split}
\end{equation}
In the case $(S\cap\widetilde{S})\setminus\{\gamma\}=\emptyset$, since
$\theta^2\le\mathfrak{N}(\gamma)$, as required in Condition (4),
the l.\,h.\,s. of \eqref{eq:first-candidate-herglotz} is
a Herglotz function. When
$(S\cap\widetilde{S})\setminus\{\gamma\}\ne\emptyset$, one analogously
obtains that
\begin{equation*}
  -\frac{\mathfrak{N}(z)}{z-\gamma} + \frac{\mathfrak{N}(\omega)}{z-\gamma}
\end{equation*}
is a Herglotz function due to Condition (3). By
\eqref{eq:less-then-one}, taking into account Conditions (3) and (4),
one concludes that $\mathfrak{G}$ is a Herglotz function. The rest of
the proof is the same as the part of the proof of
Theorem~\ref{thm:necessary-sufficient} after it is established that
$\mathfrak{G}$ is a Herglotz function.
\end{proof}
\begin{remark}
  \label{rem:other-theta}
  When the peturbation parameter $\theta$ is greater than 1, one can
  prove results along the same lines as
  Theorems~\ref{thm:necessary-sufficient} and
  \ref{thm:necessary-sufficient-intersection}. In this case the point
  $\gamma$ acts as a ``repeller'' instead of being an ``attractor''.
\end{remark}

\textbf{Open problems.} We have just scratched the surface of
some inverse spectral theorems for Jacobi operators and many questions
remain open. In the model studied here, we would like to know how many
perturbations are needed to recover the system uniquely. If the perturbation
takes place in the first mass, then just the spectral information
provided by two spectra is enough. How many spectra do we need if the
perturbation happens at the $n$-th mass? How do we determine from the
spectral information where the perturbation took place? How about
reconstruction results when we have partial information of the spectra
or when Hypothesis 1 above does not hold?
\subsection*{Acknowledgments}
We thank C. Berg, A. Dur\'an, F. Marcellan and M. Sodin for valuable remarks, and
R. del Rio A. for a hint to the literature.

\appendix
For reader's convenience we give the proof of the following
assertion which follows from a result due to M. G. Krein
\cite[Chap.\,7,\,Thm.\,1]{MR589888} (cf. \cite[Sec.\,4]{MR2305710}).
\begin{proposition}
  \label{prop:krein-levin}
   Let $m(z)$ be the Weyl $m$-function of a Jacobi operator with
   discrete spectrum. Then
   \begin{enumerate}[i)]
   \item the zeros and poles of $m(z)$ are real, simple and interlace,
   \item the zeros $\{\eta_k\}_{k\in M}$ and poles
     $\{\lambda_k\}_{k\in M}$ of $m(z)$ can be enumerated in such a
     way that if $0\in\{\eta_k\}_{k\in M}\cup\{\lambda_k\}_{k\in M}$,
     then either $\eta_{k_0}=0$ or $\lambda_{k_0}=0$ and
  \begin{equation}
 \label{eq:levin-herglotz-gen}
    m(z)=C \frac{z-\eta_{k_0}}{z-\lambda_{k_0}}
  \prod_{\substack{k\in M\\k\ne k_0}} \left(1-\frac{z}{\eta_k}\right)
  \left(1-\frac{z}{\lambda_k}\right)^{-1}\,,
  \end{equation}
where $C<0$ and
\begin{equation}
  \label{eq:enum-zeros-poles-alt}
  \eta_k<\lambda_k<\eta_{k+1}\quad\forall k\in M
\end{equation}
if $\sigma(J)$ is semi-bounded from above,
while, $C>0$ and
\begin{equation}
  \label{eq:enum-zeros-poles}
  \lambda_k<\eta_k<\lambda_{k+1}\quad\forall k\in M
\end{equation}
otherwise.
   \end{enumerate}
\end{proposition}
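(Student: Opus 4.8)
The plan is to reduce $m$ to the exact analytic class governed by Krein's representation theorem \cite[Chap.\,7,\,Thm.\,1]{MR589888}, quote that theorem to obtain the product, and then pin down the sign of $C$ and the labelling of zeros and poles from the asymptotics of $m$ at infinity together with its monotonicity on the gaps of the spectrum. First I would record the analytic nature of $m$. Because $J$ has discrete spectrum, the spectral measure in \eqref{eq:weyl-by-spectral-th} is purely atomic, $\rho=\sum_k\rho_k\delta_{\lambda_k}$ with $\{\lambda_k\}=\sigma(J)$, so
\[
m(z)=\sum_k\frac{\rho_k}{\lambda_k-z},\qquad \rho_k>0 .
\]
Hence $m$ is meromorphic with poles exactly at the $\lambda_k$, all real and simple, with residue $-\rho_k<0$ at $\lambda_k$, and it is Herglotz by \eqref{eq:weyl-by-spectral-th}. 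A meromorphic Herglotz function has real simple zeros interlacing its real simple poles; this is precisely the fact recalled in Remark~\ref{rem:zeros-poles-hm}, and it establishes part i).

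For part ii), I would apply Krein's theorem directly to $m$, which lies in the class it governs: meromorphic, mapping the upper half-plane into itself, with real, simple, interlacing zeros $\{\eta_k\}$ and poles $\{\lambda_k\}$. The theorem produces the claimed convergent canonical product; the isolated factor $(z-\eta_{k_0})/(z-\lambda_{k_0})$ is merely the device used when a zero or a pole sits at the origin, where the normalized factor $(1-z/\cdot)$ is unavailable. Convergence of $\prod_{k\ne k_0}(1-z/\eta_k)(1-z/\lambda_k)^{-1}$ is inherited from Krein's theorem, interlacing pairing one zero to each pole so that the relevant summability (essentially $\sum_k\lvert\eta_k^{-1}-\lambda_k^{-1}\rvert<\infty$) holds.

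It then remains to fix $C$ and the enumeration. Here I would use the asymptotics $m(z)=-1/z+O(z^{-2})$, which follow from \eqref{eq:G=-asymptotics} together with $G(z,1)=m(z)$ and show that $m$ carries no linear term and vanishes at infinity, and I would exploit that $m$ is strictly increasing on each component of $\reals\setminus\sigma(J)$, its derivative being $\sum_k\rho_k/(\lambda_k-x)^2>0$. If $\sigma(J)$ is semi-bounded from above there is a largest eigenvalue, and on the unbounded gap to its right $m$ decays to $0^-$ and is therefore negative; the topmost extremal point is then a pole, which forces the labelling \eqref{eq:enum-zeros-poles-alt} and, upon comparing signs of the product on that gap, gives $C<0$. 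In the remaining case the eigenvalues are unbounded above; the symmetric analysis on the left unbounded gap (when $J$ is semi-bounded from below) or the sign dichotomy already built into Krein's theorem (when $\sigma(J)$ accumulates at both $\pm\infty$) yields \eqref{eq:enum-zeros-poles} and $C>0$.

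The hard part will be this last bookkeeping: matching the abstract normalization of Krein's product to the precise form stated, with its sign of $C$ and its two interlacing conventions. The delicate situation is when $\sigma(J)$ is unbounded in both directions, so that there is no unbounded spectral gap on which to read off the sign of $m$ by a single evaluation; there the sign must be extracted from the $-1/z$ asymptotics and the parity of the product rather than from one convenient point, and care is needed because a pole–zero pair straddling the origin can flip the sign of an individual factor.
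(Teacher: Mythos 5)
Your proposal is correct, and in the one step that is genuinely specific to this proposition --- tying the enumeration and the sign of $C$ to semiboundedness of $\sigma(J)$ --- it takes a different route from the paper. You read the dichotomy off the Stieltjes expansion $m(x)=\sum_k\rho_k/(\lambda_k-x)$: on the unbounded gap of a spectrum that is semibounded from above, $m$ is negative and increases to $0$, so the topmost point of the zero-pole set is a pole, forcing \eqref{eq:enum-zeros-poles-alt}, and symmetrically for the bounded-below case. The paper argues operator-theoretically instead: the zeros of $m$ are exactly $\sigma(J_1^+)$ (a consequence of \eqref{eq:Gkk-formula2} with $n=1$), and the variational comparison of $J$ with its compression $J_1^+$ ($\min\sigma(J)\le\min\sigma(J_1^+)$ when $J$ is bounded below, $J_1^+<J$ when bounded above) decides whether the extreme pole lies outside the extreme zero. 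Your argument is more elementary, since it never needs to identify the zeros of $m$ with the spectrum of a truncated operator. On the other hand, the paper's appendix re-proves the Krein representation that you quote as a black box: convergence of the product via the telescoping bound coming from interlacing, the Herglotz property of the paired product via angle sums, and constancy of the ratio. That angle argument is also what settles cleanly the two sign issues you flag at the end: with the pairing \eqref{eq:enum-zeros-poles} the product maps $\complex^+$ into $\complex^+$, with \eqref{eq:enum-zeros-poles-alt} it maps $\complex^+$ into $\complex^-$, so $C>0$, respectively $C<0$, follows at once from $m$ being Herglotz, with no pointwise sign evaluation on a gap. In particular, the straddling-pair worry is resolved by choosing $k_0$ to be the pair nearest the origin (the paper's convention), after which every factor with $k\ne k_0$ has endpoints of the same sign; and in the doubly unbounded case, where either pairing is available, one simply chooses \eqref{eq:enum-zeros-poles} and obtains $C>0$ from the Herglotz property rather than from asymptotics or parity, which, as you suspected, would be delicate to control along the real axis.
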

\begin{proof}
  Item i) follows from the fact that $m(z)$ is a (nonconstant) Herglotz meromorphic
  function and the argument principle (see the proof of
  \cite[Chap.\,7,\,Thm.\,1]{MR589888}). For proving ii) first we show
  that the infinite product in (\ref{eq:levin-herglotz-gen}) converges
  uniformly on compacts not containing the poles. Note that, due to
  the interlacing property, one has
  \begin{equation*}
    0<\sum_{k\in M\setminus\{k_0\}}\left(\frac{1}{\eta_k}-\frac{1}{\lambda_k}\right)<\sum_{k\in M\setminus\{k_0\}}\left(\frac{1}{\eta_k}-\frac{1}{\eta_{k+1}}\right)\,
  \end{equation*}
for the \eqref{eq:enum-zeros-poles-alt} case and
  \begin{equation*}
    0<\sum_{k\in M\setminus\{k_0\}}\left(\frac{1}{\lambda_k}-\frac{1}{\eta_k}\right)<\sum_{k\in M\setminus\{k_0\}}\left(\frac{1}{\lambda_k}-\frac{1}{\lambda_{k+1}}\right)\,
  \end{equation*}
  for the \ref{eq:enum-zeros-poles} case.  These inequalities imply
  that the series
  \begin{equation*}
    \sum_{k\in M\setminus\{k_0\}}\left(\frac{1}{\lambda_k}-\frac{1}{\eta_k}\right)
  \end{equation*}
converges in both \eqref{eq:enum-zeros-poles-alt} and \ref{eq:enum-zeros-poles} cases.
 Now, since
\begin{equation*}
  \sum_{k\in M\setminus\{k_0\}}\left[\left(1-\frac{z}{\eta_k}\right)
  \left(1-\frac{z}{\lambda_k}\right)^{-1}-1\right]<z\sum_{k\in M\setminus\{k_0\}}\left(\frac{1}{\lambda_k}-\frac{1}{\eta_k}\right)\left(1-\frac{z}{\lambda_k}\right)^{-1}\,
\end{equation*}
the infinite product in (\ref{eq:levin-herglotz-gen}) converges
  uniformly on compacts not containing $\lambda_k$ for any $k\in
  M\setminus\{k_0\}$.
As in the proof of \cite[Chap.\,7,\,Thm.\,1]{MR589888}, one can show
that when \ref{eq:enum-zeros-poles} holds, the function
\begin{equation*}
  \frac{z-\eta_{k_0}}{z-\lambda_{k_0}}
  \prod_{\substack{k\in M\\k\ne k_0}} \left(1-\frac{z}{\eta_k}\right)
  \left(1-\frac{z}{\lambda_k}\right)^{-1}
\end{equation*}
is Herglotz due to
\begin{equation*}
  0<\arg \left(\frac{1 - \frac{z}{\eta_k}}{{1-
        \frac{z}{\lambda_k}}}\right)=\measuredangle\lambda_k z\eta_k<\pi
\end{equation*}
and
\begin{equation*}
  0<\sum_{k\in M}\measuredangle\lambda_k z\eta_k\le\pi\,.
\end{equation*}
Analogously, if \eqref{eq:enum-zeros-poles-alt} takes place,
\begin{equation*}
  -\frac{z-\eta_{k_0}}{z-\lambda_{k_0}}
  \prod_{\substack{k\in M\\k\ne k_0}} \left(1-\frac{z}{\eta_k}\right)
  \left(1-\frac{z}{\lambda_k}\right)^{-1}
\end{equation*}
is Herglotz since now
\begin{equation*}
  0>\measuredangle\lambda_k z\eta_k>-\pi\,.
\end{equation*}
 Following the same reasoning as in the proof of
 \cite[Chap.\,7,\,Thm.\,1]{MR589888}, one shows that
 \begin{equation*}
   \frac{m(z)}{\pm\frac{z-\eta_{k_0}}{z-\lambda_{k_0}}
  \prod_{\substack{k\in M\\k\ne k_0}} \left(1-\frac{z}{\eta_k}\right)
  \left(1-\frac{z}{\lambda_k}\right)^{-1}}
 \end{equation*}
 is a constant. For finishing the proof it remains to show that
 \ref{eq:enum-zeros-poles} occurs when $J$ is not semibounded from
 above and \eqref{eq:enum-zeros-poles-alt} happens when $J$ is
 semibounded from above. If $J$ is semibounded from below, $J_1^+$
 also is. Moreover $J\le J_1^+$, so the
 $\min\sigma(J)\le\min\sigma(J_1^+)$. Since the poles of $m(z)$
 constitute the spectrum of $J$ and the set of zeros is
 $\sigma(J_1^+)$ one has that the smallest pole is less than the
 smallest zero, i.\,e. \ref{eq:enum-zeros-poles}. If $J$ is
 semibounded from above, one has $J_1^+<J$ and, then, the biggest pole
 is greater than the biggest zero, that is
 \eqref{eq:enum-zeros-poles-alt}.
\end{proof}
\begin{remark}
  \label{rem:draw-terms}
  From the proof of the previous theorem, it follows that there is a
  positive constant $C$ such that
  \begin{equation*}
    m(z)=C\prod_{j=k_0-n}^{k=k_0+n}\frac{z-\eta_{j}}{z-\lambda_{j}} \prod_{k\in M\setminus\{k_0-n,\dots,k_0+n\}} \left(1-\frac{z}{\eta_k}\right)
  \left(1-\frac{z}{\lambda_k}\right)^{-1}
  \end{equation*}
for any $n\in\nats$. This is used in the proof of Proposition~\ref{prop:m-throu-zeros-poles}.
\end{remark}
\def\cprime{$'$} \def\lfhook#1{\setbox0=\hbox{#1}{\ooalign{\hidewidth
  \lower1.5ex\hbox{'}\hidewidth\crcr\unhbox0}}} \def\cprime{$'$}
  \def\cprime{$'$} \def\cprime{$'$} \def\cprime{$'$} \def\cprime{$'$}
  \def\cprime{$'$} \def\cprime{$'$}

\end{document}